\documentclass[pdflatex,sn-nature]{sn-jnl}

\usepackage{graphicx}%
\usepackage{multirow}%
\usepackage{amsmath,amssymb,amsfonts,amsthm}%
\usepackage{thm-restate}
\usepackage{thmtools,nameref}
\usepackage[capitalise]{cleveref}
\usepackage{mathtools} 
\usepackage{physics} 
\usepackage{bm} 
\usepackage{bbm} 

\usepackage{bookmark}
\usepackage[title]{appendix}
\usepackage{apptools}
\usepackage{titletoc}
\usepackage{setspace}
\usepackage{xspace}

\usepackage{tikz}
\usetikzlibrary{arrows.meta,calc,positioning,decorations.pathreplacing}
\usepackage{xcolor} 
\definecolor{setoneblue}{RGB}{199,210,235}
\definecolor{settwopeach}{RGB}{244,208,177}
\definecolor{settwoyellow}{RGB}{255,255,153}
\definecolor{settwogreen}{RGB}{180,214,118}

\theoremstyle{thmstyleone}%
\newtheorem{theorem}{Theorem}
\declaretheorem[name=Proposition,
refname={proposition,propositions},
Refname={Proposition,Propositions}]{prop}
\declaretheorem[name=Lemma,
refname={lemma,lemmas},
Refname={Lemma,Lemmas},
numberlike=prop]{lemma}
\declaretheorem[name=Corollary,
refname={corollary,corollaries},
Refname={Corollary,Corollaries},
numberlike=prop]{corollary}
\declaretheorem[name=Definition,
refname={definition,definitions},
Refname={Definition,Definitions},
numberlike=prop]{definition}
\theoremstyle{remark}
\newtheorem{remark}[prop]{Remark}
\newtheorem*{remark*}{Remark}

\AtAppendix{\numberwithin{prop}{section}}
\AtAppendix{\numberwithin{lemma}{section}}
\AtAppendix{\numberwithin{theorem}{section}}
\AtAppendix{\numberwithin{definition}{section}}
\AtAppendix{\numberwithin{corollary}{section}}
\AtAppendix{\numberwithin{remark}{section}}

\newcommand{\app}{\text{supplementary information}}

\newcommand{\doc}{\text{manuscript}}  

\newcommand{\Meth}{\text{Technical Material}}

\newcommand{\TrX}[2]{\mathrm{tr}_{#1}\left[ #2 \right]}

\newcommand{\ktensor}{\mathbin{\otimes_K}}

\newcommand{\rtensor}{\mathbin{\otimes_{\textup{R}}}}
\newcommand{\rtensordual}{\mathbin{\otimes_{\textup{R}^*}}}
\newcommand{\Rprod}{\mathbin{\otimes_R}}

\newcommand{\nn}{{\mathbbm{N}}}

\newcommand{\rr}{{\mathbbm{R}}}

\newcommand{\cc}{{\mathbbm{C}}}
\newcommand{\zz}{{\mathbbm{Z}}}

\newcommand{\kk}{{\mathbbm{K}}}

\newcommand{\id}{{\mathbbm{1}}}
\newcommand{\Aa}{\textup{A}}
\newcommand{\Bb}{\textup{B}}
\newcommand{\Xx}{\textup{X}}
\newcommand{\Yy}{\textup{Y}}

\newcommand{\Gs}{\Gamma}

\newcommand{\G}[1]{\Gamma\left\{ #1 \right\}}
\newcommand{\I}[1]{\overline{I}^{{\scriptscriptstyle( #1 )}}}
\newcommand{\J}[1]{\overline{J}^{{\scriptscriptstyle( #1 )}}}

\newcommand{\ReP}[1]{\mathrm{Re}\left( #1 \right)}
\newcommand{\ImP}[1]{\mathrm{Im}\left( #1 \right)}

\newcommand{\HermC}[1]{\mathsf{Herm}_{#1}\left(\cc\right)}

\newcommand{\SYR}[1]{\mathsf{SY}_{#1}(\rr)}

\usepackage{multibib}
\newcites{app}{Supplementary References}
\crefname{equation}{Eq.}{Eqs.}
\crefname{appendix}{Supplementary Information}{Supplementary Information}
\Crefname{appendix}{S.I.}{S.I.}

\usepackage{comment}
\usepackage[normalem]{ulem}

\usepackage{xcolor}

\newcommand\mpwS[1]{{\let\helpcmd\sout\parhelp#1\par\relax\relax} }
\long\def\parhelp#1\par#2\relax{%
	\helpcmd{#1}\ifx\relax#2\else\par\parhelp#2\relax\fi%
}

\begin{document}

\title{Quantum theory based on real numbers cannot be experimentally falsified}

\author*[1]{\fnm{Timothée} \sur{Hoffreumon}}\email{t.hoffreumon@gmail.com}

\author*[2,3]{\fnm{Mischa P.} \sur{Woods}}\email{mischa.woods@gmail.com}

\affil*[1]{\orgdiv{Mathematical Institute}, \orgname{Slovak Academy of Sciences}, \orgaddress{
\city{Bratislava}, \country{Slovakia}}}

\affil*[2]{\orgdiv{ENS Lyon}, \orgname{Inria}, \orgaddress{
\city{Lyon}, \country{France}}}
\affil[3]{\orgdiv{Université Grenoble Alpes}, \orgname{Inria}, \orgaddress{
\city{Grenoble}, \country{France}}}

\abstract{Whether the complex numbers of standard quantum theory are experimentally indispensable has remained open for decades. Real quantum theory (RQT), obtained by replacing complex amplitudes with real ones while retaining the usual Kronecker-product composition rule, reproduces all single-party and bipartite Bell correlations of quantum theory (QT), but its lack of local tomography suggested that the two theories might diverge in more general local experiments. This possibility appeared to be confirmed by Renou \textit{et al.}~\cite{Renou2021}, who argued that a bilocal network experiment can falsify RQT without falsifying QT. Here we show that this conclusion relies on an experimentally untestable assumption. The key distinction is between product-state independence, which constrains the mathematical form of source states, and operational independence, which is defined entirely by the absence of observable cross-source correlations. We prove that, once source independence is imposed operationally, every finite network correlation achievable in QT is also achievable in RQT with the same locality structure of the measurements. We then extend this equivalence to arbitrary finite sequential multipartite protocols involving channels and measurements with prescribed locality structure. Thus, as long as no violation of QT is observed, RQT cannot be experimentally falsified. Our results restore the empirical indistinguishability of QT and RQT, while showing that they support markedly different pictures of the correlation structure underlying the same observed world.
}


\maketitle


\addcontentsline{toc}{part}{Main Text}
\textbf{\textit{Real quantum theory}} (RQT) is obtained by replacing complex numbers in the standard Hilbert-space formulation of \textbf{\textit{quantum theory}} (QT) with real numbers. In the mixed-state formalism, density operators, observables and POVMs are then represented by real symmetric matrices rather than complex Hermitian ones, while the Born rule, the L\"uders update rule, and the use of the Kronecker product to combine systems are left unchanged. Since the 1990s, RQT has served as a canonical foil theory~\cite{Chiribella_2016} for probing the role played by complex amplitudes in QT~\cite{Wooters1990,Myrheim_1999,Caves2001,Rudolph_2002,Aaronson_2004,McKague,Aleksandrova_2013,Wootters_2016,Fuchs_2022}. Unlike more recent alternative representations of QT based on composition rules other than the Kronecker product~\cite{Barrios2025,RNQT,Ying_2025,Erba_2025}, RQT is generally regarded as a different theory from QT. Its best-known structural difference from QT is that the theory is no longer locally tomographic~\cite{Wooters1990}: the state of a multipartite system cannot, in general, be reconstructed from local measurement statistics alone. This feature has been widely studied~\cite{Araki_1980,Klay_1987,Hardy_2009,Hardy_2011,Barnum_2013,Barnum_2023,Selby_2023,Centeno_2025}, yet it does not by itself imply an experimentally accessible difference from QT. RQT reproduces all predictions of QT in single-party scenarios~\cite{Stueckelberg1960,Myrheim_1999,McKague}.

{\it What does this mean?} It means that the quantum description of an experiment, which uses complex matrices, can always be mapped to a description using real matrices preserving the measurement statistics~\cite{Stueckelberg1960,Rudolph_2002,Myrheim_1999,Wooters1990}. The classic example is the mapping
\begin{equation}
    \ket{\psi}=\sum_j c_j\ket{j} \mapsto |\widetilde{\psi}\rangle=\sum_j \ReP{c_j}\ket{0}\ket{j} +\ImP{c_j}\ket{1}\ket{j}\:,
\end{equation}
where $\{\ket{j}\}_j$ is an orthonormal basis for the original QT system, and $\{\ket{0}, \ket{1}\}$ is an orthonormal basis of a rebit system\footnote{A rebit system is the RQT analogue of the qubit for QT: a two-level system with real coefficients.}. A similar mapping exits for measurements. (The apparent mismatch in dimensions is not operationally meaningful, both because a complex space of dimension $d$ is already a real space of dimension $2d$, and because the dimension of a quantum system cannot itself be certified in a theory-independent way~\cite{Brunner_2008}.) But this real description maps local measurements to nonlocal ones, thus prompting a natural question, already emphasised by Gisin~\cite{Gisin_2007} and others~\cite{McKague}: does the equivalence fail once locality is imposed on the measurements? In other words, if spatially separated parties are restricted to performing local operations on their own subsystems, can QT and RQT then be experimentally distinguished? After a first negative answer for the special case of Bell experiments~\cite{Pal_2008,Moroder_2013}, this question became particularly pressing when Renou \textit{et al.}~\cite{Renou2021} claimed that the answer is yes for the bilocal (or entanglement-swapping) experiments~\cite{Branciard2010}. In that scenario, three parties are connected by two independent sources, and the authors derived a Bell-like inequality whose maximal value in RQT would be strictly smaller than in QT. Because the inequality is theory-independent, any observed value above the real bound would seem to falsify RQT while remaining compatible with QT. Experiments have since reported violations consistent with the QT prediction~\cite{Li2022,Lancaster2025}. At face value, this suggested that the distinction between QT and RQT might become observable precisely when locality and source independence are both taken seriously.

Here we show that this conclusion relies on an experimentally untestable assumption. The key point is that two distinct notions are being conflated: the locality of measurements and the independence of sources. These play very different conceptual roles. Locality concerns the operations performed by separated parties. In a network of parties $\Xx_1,\ldots,\Xx_n$, if party $\Xx_i$ acts only on subsystem $i$, then—assuming finite dimensions—the corresponding POVM elements may be represented without loss of generality as local operators, and joint local measurements may be represented in tensor-product form. This locality structure is therefore trustworthy: it reflects the physical fact that separated operations commute, and it can be imposed equally in QT and in RQT. In particular, when one studies Bell inequalities, network inequalities, or more general nonlocal games played by separated parties, the restriction to local measurements is a physically motivated and operationally meaningful constraint on the model.

Source independence is more subtle. Consider several sources distributing systems to the parties in such a network. In QT, one often encodes the statement that two sources are independent by requiring their joint state to be a product state, that is, a bipartite state $\Psi_{\textup{XY}}$ of the form 
\begin{equation}\label{def:LIR}
    \Psi_{\textup{XY}} = \rho_{\textup{X}}\otimes_K\rho_{\textup{Y}} \:, 
\end{equation} 
for arbitrary states $\rho_{\textup{X}}$, $\rho_{\textup{Y}}$ on systems $X$, $Y$, and where $\ktensor$ denotes the Kronecker product. We call this \textbf{\textit{product-state independence}}. But this is not itself something the parties directly observe. What they can observe is weaker and more operational: if the sources are independent, then local measurements on the systems emitted by different sources should reveal no cross-source correlations; the outcomes of these local measurements should be independent as random variables. This motivates the notion of \textbf{\textit{operational independence}}. For two systems, one may say that a bipartite state $\Psi_{\textup{XY}}$ is operationally independent when, for every pair of local POVMs $\{X_{\alpha}\}_\alpha$ and $\{Y_{\beta}\}_\beta$, the joint probability distributions of the outcomes $\alpha$ and $\beta$, $p(\alpha, \beta)=\TrX{}{\Psi_\textup{XY}(X_{\alpha}\ktensor Y_{\beta})}$, factorise into marginal distributions:
\begin{equation}\label{def:LIQ}
    p(\alpha,\beta) = p(\alpha)p(\beta) \:.
\end{equation} 
The crucial point is that this notion is experimentally motivated: it refers only to the absence of observable correlations in the statistics gathered by the local parties. By contrast, product-state independence is a statement about the formal expression assigned to the state inside a particular mathematical model. The parties can test the former by measurement and data comparison; they cannot directly inspect the latter.

In standard QT these two notions coincide. A bipartite quantum state is operationally independent if and only if it is a product state. This coincidence is so familiar that it is easy to mistake it for a general physical principle. But it is not. In RQT the two notions come apart. Every real product state is operationally independent, but there also exist real states that are \emph{not} product states and yet remain operationally independent. These states are non-separable in the matrix representation, but they never generate observable correlations under local measurements. This is a direct consequence of the fact that RQT is not locally tomographic. The real formalism is simply less able to detect certain correlations than the complex one. One may therefore regard these states as exhibiting only an ``entanglement of representation''~\cite{Caves2001}: they appear entangled in the Hilbert-space description, but that entanglement has no operational manifestation in the network experiment. This is not to say that RQT is `wrong' or can be a priori ruled out as a description of nature---it simply means that in a world described by RQT, not all correlations are observable. 

This observation already changes the interpretation of the bilocal result of Renou \textit{et al.}~\cite{Renou2021}. Their analysis assumes that independent sources in RQT must be represented by real product states, in direct analogy with QT. But once operational independence and product-state independence are recognised as distinct in RQT, that restriction is no longer compulsory. It is an additional modelling assumption, stronger than what is experimentally warranted. Against that background, the correct question is not whether a particular restricted subclass of real models can reproduce certain network correlations, but whether RQT itself, understood with the operational notion of source independence, can be distinguished from QT without violating QT.

Our first main result answers this in the negative for arbitrary network scenarios~\cite{Tavakoli_2022}. Consider any finite network of local parties and sources, with each party performing a measurement on potentially several systems emitted by different sources. The only experimentally motivated assumption placed on the sources is operational independence: no local test performed across distinct sources may reveal correlations between them. Within this general setting, we show that every experiment admitting a QT description also admits an RQT description with exactly the same observable statistics. Moreover, the real construction preserves the locality structure of the measurements. Thus the well-known real embedding of unrestricted QT can be strengthened so that it remains compatible with the trusted local structure of the measurement devices. The equivalence between QT and RQT therefore survives the introduction of local measurements and independent sources, provided that `independent sources' is understood operationally. This is not to say that product-state independence is an `improper' way of defining independence; it is simply that it is not experimentally verifiable in the network setting.

\begin{figure}[t]
	\centering
	\begin{tikzpicture}[
		x=1.55cm,
		y=1.15cm,
		>=Latex,
		line width=0.55pt,
		source/.style={
			draw,
			circle,
			fill=gray!18,
			minimum size=7mm,
			inner sep=0pt
		},
		party/.style={
			draw,
			rounded corners=3pt,
			fill=blue!8,
			minimum width=9mm,
			minimum height=8mm,
			inner sep=2pt
		},
		qwire/.style={gray!70},
		cwire/.style={-{Latex[length=2.4mm,width=1.7mm]}},
		lab/.style={font=\small}
		]
		
		\node[party] (A1) at (0,0) {$A_1$};
		\node[party] (A2) at (1.5,0) {$A_2$};
		\node[lab]   at (3.0,0) {$\cdots$};
		\node[party] (Ak) at (4.5,0) {$A_{n-1}$};
		\node[party] (AN) at (6.0,0) {$A_n$};
		
		\node[source] (S1) at (0.75,1.7) {$S_1$};
		\node[source] (S2) at (2.25,1.7) {$S_2$};
		\node[lab]    at (3.75,1.7) {$\cdots$};
		\node[source] (SL) at (5.25,1.7) {$S_L$};
		
		\draw[qwire] (S1) -- (A1);
		\draw[qwire] (S1) -- (A2);
		
		\draw[qwire] (S2) -- (A2);
		\draw[qwire] (S2) -- (Ak);
		
		\draw[qwire] (SL) -- (Ak);
		\draw[qwire] (SL) -- (AN);
		
\draw[qwire,densely dashed] (S2) -- +(0.95,-1.10);
\draw[qwire,densely dashed] (S2) -- +(0.18,-1.00);
		
		\draw[cwire] (-0.95,0) -- (-0.28,0);
		\node[left,lab] at (-0.95,0) {$\mu_1$};
		
		\draw[cwire] (0.55,-0.85) -- (1.22,-0.22);
		\node[below left,lab] at (0.55,-0.85) {$\mu_2$};
		
\draw[cwire] (3.49,-0.85) -- (4.16,-0.22);
\node[below left,lab] at (3.51,-0.85) {$\mu_{n-1}$};
		
		\draw[cwire] (6.95,0) -- (6.28,0);
		\node[right,lab] at (6.95,0) {$\mu_n$};
		
		\draw[cwire] (0, -0.45) -- (0,-1.10);
		\node[below,lab] at (0,-1.10) {$x_1$};
		
		\draw[cwire] (1.5,-0.45) -- (1.5,-1.10);
		\node[below,lab] at (1.5,-1.10) {$x_2$};
		
		\draw[cwire] (4.5,-0.45) -- (4.5,-1.10);
		\node[below,lab] at (4.5,-1.10) {$x_{n-1}$};
		
		\draw[cwire] (6.0,-0.45) -- (6.0,-1.10);
		\node[below,lab] at (6.0,-1.10) {$x_n$};
		
		
	\end{tikzpicture}
	\caption{
		General $n$-party network considered in Theorem~1. Independent sources $S_1,\dots,S_L$ distribute physical systems to subsets of the local parties $A_1,\dots,A_n$. Each party performs a local measurement specified by a classical input $\mu_i$ (their \textit{setting}) and returns a classical output $x_i$ (their \textit{outcome}). The locality structure of the measurements is trusted and represented explicitly by the separation into local parties, whereas source independence is imposed only operationally, namely through the absence of observable cross-source correlations in the statistics.
	}
	\label{fig:network_theorem1}
\end{figure}
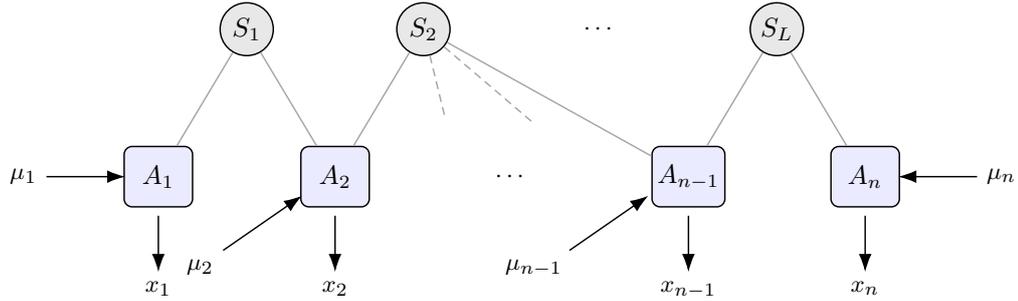

Informally, the theorem may be stated as follows (see the \Meth{} for an exact phrasing and the proof)
\begin{restatable}{theorem}{QTnetworksareRQTnetworks}\label{theo:QTnetworksareRQTnetworks}
	For any finite network of independent sources and locally measuring parties, if the sources are required only to be operationally independent, then every outcome distribution predicted by the QT model of the network can also be predicted by an equivalent RQT model.
\end{restatable}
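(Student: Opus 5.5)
The plan is to build an explicit real model from the complex one, using a \emph{local} realification on every system together with one shared ``imaginary-unit'' rebit that is distributed across all subsystems in a correlated but operationally invisible way. Fix the QT model: sources $S_1,\dots,S_L$ prepare states $\rho_\ell$ on their output systems, the global state is $\bigotimes_\ell \rho_\ell$ up to reordering of factors, and party $A_i$ measures a POVM $\{M^{(i)}_{x_i|\mu_i}\}$ on the systems it receives. Write each complex operator as $M = M_R + i M_I$, with $M_R$ real symmetric and $M_I$ real antisymmetric. Realify locally by
\[
M \mapsto \widetilde M = \id_2\ktensor M_R + J\ktensor M_I, \qquad J = \begin{pmatrix}0&-1\\1&0\end{pmatrix}.
\]
This map is a real $*$-algebra homomorphism, and it preserves positivity and completeness. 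Each party therefore gets a valid real POVM, acting locally on its own rebit together with its realified systems, so the locality structure is kept.

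The key step is to pick the real source states so that, jointly, they reproduce $\tr[\bigotimes_\ell\rho_\ell \bigotimes_i M^{(i)}]$. The natural choice is the real ``GHZ-like'' state for the $N$ realified factors (sources or parties, whichever indexing makes the rebits local):
\[
\widetilde\Psi = \frac{1}{2}\sum_{\sigma=\pm} \bigotimes_k \bigl(\tfrac{1}{2}(\id_2 + \sigma\, \text{sym. part})\bigr)\ldots
\]
More concretely, work in the $J$-eigenbasis $\ket{\pm i}$ of $\cc^2$. The operator $\widetilde M$ is unitarily equivalent to $M\oplus \overline{M}$. Take the global state to be
\[
\tfrac12\Bigl(\bigotimes_k \ket{+i}\!\bra{+i}\ktensor \rho_k \;+\; \bigotimes_k \ket{-i}\!\bra{-i}\ktensor\overline{\rho_k}\Bigr).
\]
This is real, because the two terms are complex conjugates of each other. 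It gives $\tfrac12\bigl(\tr[\rho M] + \overline{\tr[\rho M]}\bigr) = \tr[\rho M]$, since the probability is real. Flipping all rebits simultaneously conjugates every factor consistently, which is why we need a single global branch $\sigma=\pm$ rather than independent local choices.

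The remaining work is to check that this real state is operationally independent across sources, although it is clearly not a real product state. Take any local real POVMs $\{X_\alpha\},\{Y_\beta\}$ on the systems of two different sources. Because they are real, the value on the $-$ branch equals the complex conjugate of the value on the $+$ branch, and each branch factorises. So $p(\alpha,\beta) = \mathrm{Re}\bigl(a_\alpha b_\beta\bigr)$, where $a_\alpha = \tr[(\ket{+i}\!\bra{+i}\ktensor\rho)X_\alpha]$. We need $a_\alpha$ to be real. The point is that $\bra{+i}X\ket{+i}$ for a real symmetric $X$ only involves the $\id$ and $\sigma_z$/$\sigma_x$ components of the rebit factor, with the $J$-component dropping out. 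I would verify that $a_\alpha$ equals the marginal obtained by tracing with $\tfrac12\id_2\ktensor\mathrm{Re}\,\rho$, and is therefore real, so that $p(\alpha,\beta)=p(\alpha)p(\beta)$.

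This independence check is the main obstacle. If the naive statement fails, because cross terms from the $J\ktensor\rho_I$ parts could contribute products $\tr[J X]\tr[J Y]$ with real local $X,Y$, I would first note that $\tr[(J\ktensor A)X]=0$ for symmetric $X$ whenever $A$ is antisymmetric. That is exactly the structure of the cross terms. This should close the argument, and it extends to any multipartite grouping of sources by induction on $L$.
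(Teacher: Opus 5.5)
Your construction is sound, and it is the paper's own construction written in the $J$-eigenbasis. There $\I{N}=P_++P_-$ and $\J{N}=i(P_+-P_-)$, with $P_\pm=\bigotimes_k\ket{\pm i}\!\bra{\pm i}$. So your state $\tfrac12(P_+\ktensor\rho+P_-\ktensor\bar\rho)$ is exactly $\tfrac12\overline{\Gamma}^{(N)}\{\rho\}$. A party-local realification that acts as the identity on the party's other rebits is the relocalised map $\tau\circ\overline{\Gamma}$. Your branch picture makes the Born-rule identity $\tfrac12(p+\bar p)=p$ more transparent than the paper's projector/$R$-product contamination argument, and that part is fine.

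The gap is in the step you flagged as the main obstacle: both justifications you give for operational independence are false.
(i) Write a real symmetric $X=I\ktensor X_0+\sigma_z\ktensor X_z+\sigma_x\ktensor X_x+J\ktensor X_J$, with $X_J$ antisymmetric. Then $\bra{+i}X\ket{+i}=X_0+iX_J$: the $\sigma_z,\sigma_x$ components drop out, not the $J$ component. Hence $a_\alpha=\mathrm{tr}[(\mathrm{Re}\,\rho)X_0]-\mathrm{tr}[(\mathrm{Im}\,\rho)X_J]=\mathrm{tr}[\tfrac12\Gamma\{\rho\}X]$, which in general differs from $\mathrm{tr}[(\tfrac12 I\ktensor\mathrm{Re}\,\rho)X]$. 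For a qubit source $\rho=\ket{+i}\!\bra{+i}$ and the real projector $X=\tfrac12(I\ktensor I-J\ktensor J)$, the two values are $1$ and $\tfrac12$.
(ii) Your fallback identity $\mathrm{tr}[(J\ktensor A)X]=0$ fails for antisymmetric $A$, because $J\ktensor A$ is then symmetric; for example $\mathrm{tr}[(J\ktensor J)\,\tfrac12(I\ktensor I+J\ktensor J)]=2$.
The terms that actually vanish against real local effects are the antisymmetric ones, $J\ktensor\mathrm{Re}\,\rho$ and $I\ktensor\mathrm{Im}\,\rho$. For two sources your state equals $\tfrac14[\Gamma\{\rho_1\}\ktensor\Gamma\{\rho_2\}-A_1\ktensor A_2]$ with $A_k=I\ktensor\mathrm{Im}\,\rho_k-J\ktensor\mathrm{Re}\,\rho_k$ antisymmetric. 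This is the paper's statement that $\rtensor$-product states are operationally independent with marginals $\tfrac12\Gamma\{\rho_k\}$.

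The fact you need is nonetheless true and immediate. The quantity $a_\alpha=\mathrm{tr}[(P_+^{(\ell)}\ktensor\rho_\ell)X_\alpha]$ is the Born probability of the Hermitian positive operator $X_\alpha$ in the complex density operator $P_+^{(\ell)}\ktensor\rho_\ell$, so it is real. Since $X_\alpha$ is real, its conjugate is the $-$ branch value $\mathrm{tr}[(P_-^{(\ell)}\ktensor\bar\rho_\ell)X_\alpha]$. Both branches therefore give the same local values, and the marginal of source $\ell$ is $a_\alpha$ itself. Hence $p=\tfrac12(\prod_\ell a_\ell+\prod_\ell\bar a_\ell)=\prod_\ell a_\ell$ for all $L$ at once, with no induction needed. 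This also holds for a source that emits no rebit, so independence puts no constraint on where the rebits go.

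Locality does constrain rebit placement: every party must hold at least one rebit, with its effect acting as the identity on any further rebits it receives. One rebit per emitted subsystem works, as in the paper's $\overline{\Gamma}^{(n)}$. Your hedge ``sources or parties'' should be resolved this way. One rebit per source can leave a party with no rebit to realify its measurement on, for instance in a single-source multipartite Bell test.
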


This result subsumes the standard Bell scenario, bilocal entanglement-swapping scenarios, and more general network nonlocality experiments (such as the triangle and star networks; see Ref.~\cite{Tavakoli_2022}). It implies that Bell inequalities and Bell-like network inequalities cannot distinguish QT from RQT once the operational notion of independent sources is adopted and as long as a violation of QT is not observed. The reason the real model succeeds is that the required source states, although typically non-product in the real formalism, are locally indistinguishable from the product ones; RQT lacks the tomographic power to expose their non-separability.

This already establishes that no local game can distinguish QT from RQT in arbitrary network scenarios. 
But one may still wonder whether the equivalence breaks down once the experimental scenario is broadened beyond single-shot measurements. 
After all, real laboratory protocols often involve sequences of operations: state preparation, intermediate dynamics, adaptive measurements, and channels acting on selected subsystems before later rounds of detection. It is therefore natural to ask whether QT and RQT remain indistinguishable in sequential multipartite protocols, where both the initial state and subsequent measurements and transformations each carry a prescribed locality structure. See~\cref{fig:sequential_theorem2}.

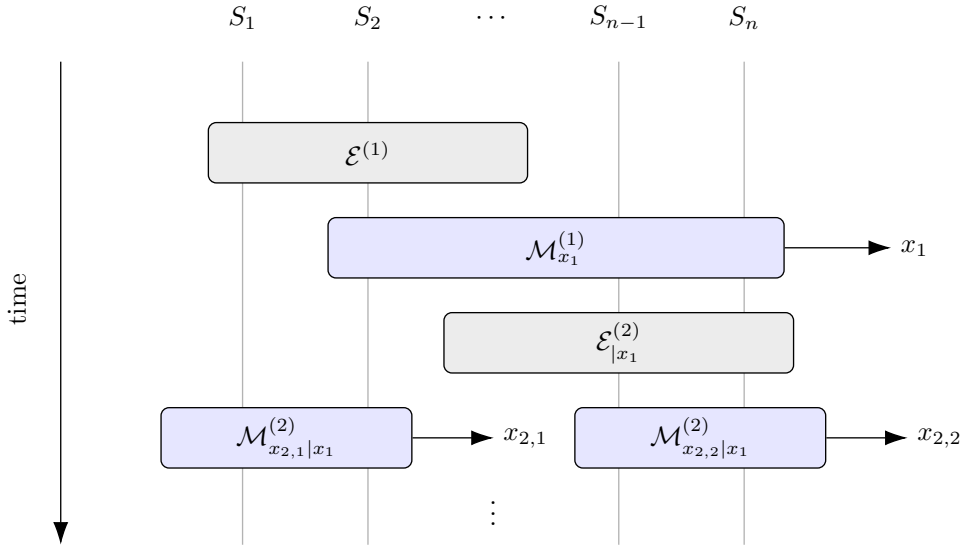
\begin{figure}[t]
	\centering
	\begin{tikzpicture}[
		x=1.65cm,
		y=1.05cm,
		>=Latex,
		line width=0.5pt,
		worldline/.style={gray!60},
		chan/.style={
			draw,
			rounded corners=3pt,
			fill=gray!15,
			minimum height=8mm,
			inner xsep=8pt,
			inner ysep=3pt
		},
		meas/.style={
			draw,
			rounded corners=3pt,
			fill=blue!10,
			minimum height=8mm,
			inner xsep=8pt,
			inner ysep=3pt
		}
		]
		
		\node at (0,0.55) {$S_1$};
		\node at (1,0.55) {$S_2$};
		\node at (2,0.55) {$\cdots$};
		\node at (3,0.55) {$S_{n-1}$};
		\node at (4,0.55) {$S_n$};
		
		\foreach \x in {0,1,3,4} {
			\draw[worldline] (\x,0) -- (\x,-6.1);
		}
		
		\draw[-{Latex[length=3mm,width=2mm]}] (-1.45,0) -- (-1.45,-6.1);
		\node[rotate=90] at (-1.80,-3.0) {time};
		
		\node[chan, minimum width=4.2cm] (E1) at (1.0,-1.15) {$\mathcal{E}^{(1)}$};
		
		\node[meas, minimum width=6.0cm] (N1) at (2.5,-2.35) {$\mathcal{M}_{x_1}^{(1)}$};
		\draw[-{Latex[length=3mm,width=2mm]}] (N1.east) -- ++(0.85,0);
		\node[right] at ($(N1.east)+(0.85,0)$) {$x_1$};
		
		\node[chan, minimum width=4.6cm] (E2) at (3.0,-3.55) {$\mathcal{E}^{(2)}_{|x_1}$};
		
		\node[meas, minimum width=3.3cm] (N21) at (0.35,-4.75) {$\mathcal{M}_{x_{2,1}|x_1}^{(2)}$};
		\draw[-{Latex[length=3mm,width=2mm]}] (N21.east) -- ++(0.65,0);
		\node[right] at ($(N21.east)+(0.65,0)$) {$x_{2,1}$};
		
		\node[meas, minimum width=3.3cm] (N22) at (3.65,-4.75) {$\mathcal{M}_{x_{2,2}|x_1}^{(2)}$};
		\draw[-{Latex[length=3mm,width=2mm]}] (N22.east) -- ++(0.65,0);
		\node[right] at ($(N22.east)+(0.65,0)$) {$x_{2,2}$};
		
		\node at (2.0,-5.6) {$\vdots$};
		
	\end{tikzpicture}
	\caption{
		Consider an $n$-partite system undergoing a sequence of rounds. In each round, a quantum channel is applied whose locality structure may be arbitrary but fixed: the channel may act jointly on some chosen subset of subsystems while factorising across the rest. This may be followed by a measurement round, again with a specified locality structure: a local measurement on one subsystem, a joint measurement on a designated block, or more generally any POVM that factorises according to a given partition of the parties. One may then iterate this pattern, allowing a general interleaving of channels and measurements with changing locality structure from round to round. The figure illustrates two such iterations: at round 1, the channel $\mathcal{E}^{(1)}$ is applied, then followed by a measurement in which outcome $x_1$ was obtained, yielding the post-measurement state-update channel (also known as a quantum instrument) $\mathcal{M}^{(1)}_{x_1}$; at round 2, $\mathcal{E}^{(2)}_{|x_1}$ is applied followed by a bipartite local measurement in which outcome $x_2=(x_{2,1},x_{2,2})$ was obtained, yielding the instrument $\mathcal{M}^{(2)}_{x_{2,1}|x_1} \ktensor\mathcal{M}^{(2)}_{x_{2,2}|x_1}$. Such protocols encompass not only standard nonlocal games but also sequential network experiments, adaptive tests, and multi-stage information-processing scenarios such as LOCC (Local Operations and Classical Communication).
	}
	\label{fig:sequential_theorem2}
\end{figure}

Our second main result shows that the indistinguishability persists in this more general setting.

For this entire class of protocols, we prove that the QT description can again be translated into RQT without changing any observable statistics and without altering the locality structure of the channels or the measurements (see~\Cref{sec:proof} for exact phrasing and proof).

\begin{restatable}{theorem}{RQTisQT}\label{theo:RQTisQT}
	Consider any finite multipartite protocol consisting of any initial state, followed by an arbitrary finite sequence of quantum channels and POVM measurements, each endowed with a specified locality structure. Then, every such protocol in QT admits an equivalent protocol in RQT with identical outcome statistics and the same locality structure at every stage. 

    This correspondence is component-wise and context-independent: each RQT channel	depends only on its corresponding QT channel, and not on the initial state nor on any other operations appearing elsewhere in the protocol; likewise, each RQT measurement depends only on its corresponding QT measurement; and the initial RQT state depends only on the initial QT state.
\end{restatable}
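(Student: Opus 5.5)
The plan is to use an explicit, local version of the standard real embedding. Each complex system $\cc^{d_i}$ is replaced by $\rr^{d_i}\ktensor\rr^2$, i.e.\ each party receives its own \emph{private} rebit. Let $\omega=\ketbra{+i}$ be the projector onto $\ket{+i}=(\ket{0}+i\ket{1})/\sqrt2$ on $\cc^2$, and let $\bar\omega=\ketbra{-i}$ be its complex conjugate. Then $\omega\bar\omega=0$ and $\omega+\bar\omega=\id$. For any complex operator $K$ on a single party define
\begin{equation}
\widetilde K \;=\; K\ktensor \omega+\bar K\ktensor\bar\omega ,
\end{equation}
which is a \emph{real} matrix, since it equals its own complex conjugate. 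Equivalently $\widetilde K=\ReP{K}\ktensor\id-\ImP{K}\ktensor J$ with $J$ the real antisymmetric generator of $\omega$. For a multi-party block on which an operator acts jointly, I would apply the same formula with $\omega^{\ktensor m}$ on the $m$ rebits of that block, after reordering tensor factors so that each rebit sits next to its system. Concretely:
\begin{itemize}
\item an $n$-partite initial state $\rho$ maps to $\widetilde\rho=\tfrac12\bigl(\rho\ktensor\omega^{\ktensor n}+\bar\rho\ktensor\bar\omega^{\ktensor n}\bigr)$;
\item each POVM element $M$ maps to $\widetilde M$;
\item each channel or instrument branch with Kraus operators $\{K_k\}$ maps to the channel or branch with Kraus operators $\{\widetilde K_k\}$.
\end{itemize}
By construction, each real object depends only on its complex counterpart, which gives the component-wise, context-independent clause of Theorem~\ref{theo:RQTisQT}.

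\textbf{Step 1: validity and locality.} I would check that the images are legitimate RQT objects with the same locality structure.
\begin{itemize}
\item $\widetilde\rho$ is real symmetric, positive semidefinite and of unit trace.
\item The map $K\mapsto\widetilde K$ is a $*$-homomorphism in the following sense: $\widetilde{K}\widetilde{L}=\widetilde{KL}$ and $\widetilde{K}^\dagger=\widetilde{K^\dagger}$, both by orthogonality of $\omega$ and $\bar\omega$. Hence $\sum_k\widetilde K_k^\dagger\widetilde K_k=\widetilde{\id}=\id$, so completeness of POVMs and trace preservation of channels are inherited, and positivity of POVM elements is preserved.
\item For local structure, I would show that for a product $K=K_1\ktensor K_2$ across blocks, $\widetilde{K_1}\ktensor\widetilde{K_2}$ agrees with $\widetilde K$ on the relevant sector. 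Expanding the product gives the terms $K_1\ktensor K_2\ktensor\omega\ktensor\omega$ and its conjugate, plus the cross terms $K_1\ktensor\bar K_2\ktensor\omega\ktensor\bar\omega$ and its conjugate.
\end{itemize}
The cross terms are the reason I keep the local factorised form $\bigotimes_i\widetilde{K_i}$ as the real operator rather than $\widetilde{K}$: the former is manifestly local, and the cross terms only act on the mixed sector.

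\textbf{Step 2: invariance of the sector.} The key invariant is that every state arising in the protocol has the form
\begin{equation}
\tfrac12\bigl(\sigma\ktensor\omega^{\ktensor n}+\bar\sigma\ktensor\bar\omega^{\ktensor n}\bigr),
\end{equation}
with $\sigma$ the (sub-normalised) QT state at that stage. Local products of tilded operators map the sector $\omega^{\ktensor n}$ to itself and annihilate it against the mixed cross terms, because $\bar\omega\,\omega=0$. Hence:
\begin{itemize}
\item applying $\bigotimes_i\widetilde{K_i}$ to such a state yields the same form with $\sigma\mapsto(\bigotimes_iK_i)\,\sigma\,(\bigotimes_iK_i)^\dagger$;
\item outcome probabilities satisfy $\TrX{}{\widetilde\sigma\,\bigotimes_i\widetilde{M_i}}=\tfrac12\bigl(\TrX{}{\sigma M}+\overline{\TrX{}{\sigma M}}\bigr)=\TrX{}{\sigma M}$, since the latter is real.
\end{itemize}

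\textbf{Step 3: induction.} I would then induct over the rounds of the protocol. At each round the invariant holds, the conditional outcome probabilities coincide with the QT ones, and the post-measurement (Lüders or instrument) states correspond. Classical adaptivity on earlier outcomes $x_1,\dots$ is handled by applying the construction to each conditional operation separately. The joint distribution of all outcomes then agrees by the chain rule.

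\textbf{Main obstacle.} I expect the hardest part to be the bookkeeping of Step 2 for operators acting jointly on arbitrary, round-dependent blocks. One must confirm that a block-joint $\widetilde K$, built with $\omega^{\ktensor m}$, restricted to the all-$\omega$ sector coincides with the corresponding factor of the global action, and that no leakage into mixed sectors ever affects statistics. My first attempt would be a lemma: for any partition of the parties and block operators $K_B$, one has
\begin{equation}
\bigotimes_B\widetilde{K_B}\;\bigl(\id\ktensor\omega^{\ktensor n}\bigr)=\Bigl(\bigotimes_BK_B\Bigr)\ktensor\omega^{\ktensor n},
\end{equation}
together with its conjugate version, proved directly from $\omega\bar\omega=0$. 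Everything else then follows algebraically.
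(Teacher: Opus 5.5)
There is a genuine gap in Step 1, in the claim $\widetilde{\id}=\id$. Apart from that, your construction is the paper's in different notation. Since $\omega^{\ktensor m}+\bar\omega^{\ktensor m}=\I{m}$ and $\omega^{\ktensor m}-\bar\omega^{\ktensor m}=i\J{m}$, your block map is $\widetilde K=\ReP{K}\ktensor\I{m}-\ImP{K}\ktensor\J{m}$, which is $\overline{\Gamma}^{(m)}\{\bar K\}$ up to the ordering of tensor factors. Your invariant sector is the image of $\tfrac12\overline{\Gamma}^{(n)}$, and your sector lemma is the paper's ``contamination'' property of the projector $\I{n}_d$.

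The claim $\widetilde{\id}=\id$ holds only for single-party blocks. For a block of $m\geq 2$ parties, $\widetilde{\id}=\id\ktensor(\omega^{\ktensor m}+\bar\omega^{\ktensor m})=\id\ktensor\I{m}$. This is a rank-$2$ projector on the $2^m$-dimensional space of the block's rebits, not the identity. Consequences:
\begin{itemize}
\item The images of a block-joint POVM sum to $\id_{d_B}\ktensor\I{m}$ rather than $\id$.
\item A block-joint channel with Kraus operators $\widetilde{K_k}$ satisfies $\TrX{}{\mathcal E(\rho)}=\TrX{}{\widetilde{\id}\,\rho}$. It is therefore trace-preserving only on states supported in the range of $\widetilde{\id}$.
\end{itemize}
So these are not valid RQT measurements and channels. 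The theorem explicitly requires multi-party blocks, such as global channels or Bob's joint measurement in the bilocal scenario. This is exactly the obstruction the paper identifies (joint effects summing to $\I{k}_{d_Xd_Y\dots}$ instead of $\id$) and removes with the relocalisation map $\tau^{(k)}$.

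The repair fits your framework directly. Keep the delocalised factor $\omega^{\ktensor n}$ on the state side, but give each block operator its phase on a single designated rebit $r\in B$:
\[
\widetilde{K_B}:=K_B\ktensor\omega_r\ktensor\id+\bar K_B\ktensor\bar\omega_r\ktensor\id,
\]
with the identity on the block's other rebits. This map has the properties you need:
\begin{itemize}
\item it is real, multiplicative, $\dagger$-preserving and unital, and supported on $B$;
\item your key lemma $\bigotimes_B\widetilde{K_B}\,(\id\ktensor\omega^{\ktensor n})=(\bigotimes_B K_B)\ktensor\omega^{\ktensor n}$ still holds, together with its conjugate and right-multiplication versions. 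The proof only uses $\omega\,\omega=\omega$ and $\bar\omega\,\omega=0$ on rebit $r$, while the block's other rebits see the identity.
\end{itemize}
Apply it to the Kraus operators of block channels as well as to measurement operators, since both suffer from the same non-unitality.

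Alternatively, keep your $\omega^{\ktensor m}$ images and pad them. Add the projector $\id-\widetilde{\id}_B$ as an extra Kraus operator, or add it to one measurement operator. It annihilates your sector and kills $\widetilde{K}$ on both sides, so completeness is restored without changing any statistic. With either fix, your Steps 2 and 3 go through as written.
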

A direct corollary of \Cref{theo:RQTisQT} is that, in RQT, the preparation of product-state-independent sources cannot be certified, even using global operations; otherwise, our result would contradict the separation obtained by Renou \textit{et al.}~\cite{Renou2021}. This may appear counterintuitive. One might expect that an entanglement-breaking channel could erase the hidden correlations in the RQT description and thereby expose a statistical difference between RQT and QT. What our results show is that while such a channel may exist, without observing a violation of QT, an alternative RQT channel always exists that explains the observed data: a channel that destroys all operationally visible RQT correlations while leaving intact those correlations that remain inaccessible to local observation in the real formalism.

Taken together, these results place the burden of experimental distinction in a precise location. If an experiment is compatible with QT, then, within the operational framework considered here, it is also compatible with RQT. This remains true whether one considers a single round of local measurements on a network of independent sources or a multi-round protocol with channels and measurements carrying arbitrary locality structure. One can thus never experimentally falsify RQT. On the other hand, one might be able to falsify QT, by experimentally detecting a statistic consistent with RQT and inconsistent with QT. One such example would be the observation of two independent systems which are locally indistinguishable but globally distinguishable. Since QT is locally tomographic, this option---while possible in RQT---is impossible in QT (see \Cref{sec:RQTisnotQT} for a detailed discussion). This raises a potential security concern: because RQT cannot be ruled out experimentally, cryptographic protocols regarded as secure within QT may in principle be open to compromise through non-QT correlations.

\begin{figure}[t]
	\centering
	\begin{tikzpicture}[
		x=1cm,
		y=1cm,
		line width=0.45pt,
		>=Latex,
		panel/.style={
			rounded corners=2pt,
			draw=black!18,
			fill=black!1
		},
		paneltitle/.style={
			font=\normalsize
		},
		planetA/.style={
			draw=black!55,
			fill=blue!10
		},
		planetB/.style={
			draw=black!55,
			fill=black!10
		},
		dust/.style={
			draw=none,
			fill=black!12,
			opacity=0.55
		},
		star/.style={
			circle,
			fill=black!45,
			inner sep=0.45pt
		},
		dustgrain/.style={
			circle,
			fill=black!28,
			inner sep=0.35pt
		},
		corr/.style={
			draw=blue!45!black,
			opacity=0.38,
			line width=0.55pt
		}
		]
		
		\newcommand{\VisibleUniverse}{
			\draw[panel] (0,0) rectangle (6.3,3.7);
			
			\foreach \x/\y in {
				0.45/3.10, 0.95/2.85, 1.65/3.18, 2.10/2.92, 2.85/3.05,
				3.65/3.15, 4.20/2.95, 4.80/3.18, 5.45/2.88, 5.95/3.10,
				0.70/2.30, 1.95/2.15, 4.95/2.18, 5.70/1.95
			}{
				\node[star] at (\x,\y) {};
			}
			
			\fill[dust] (2.55,1.95) ellipse (0.95 and 0.46);
			\fill[dust,opacity=0.35] (2.95,1.65) ellipse (0.62 and 0.28);
			
			\foreach \x/\y in {
				2.10/2.00, 2.25/1.82, 2.40/2.15, 2.58/1.72, 2.70/1.98,
				2.88/1.80, 3.05/2.05, 3.18/1.68, 3.28/1.90, 2.78/2.18
			}{
				\node[dustgrain] at (\x,\y) {};
			}
			
			\draw[planetA] (1.05,1.02) circle (0.34);
			\draw[black!40] (1.05,1.02) ellipse (0.55 and 0.13); 
			\draw[planetB] (4.62,0.98) circle (0.26);
			\draw[planetB] (5.18,2.35) circle (0.18);
			\draw[planetA, fill=black!7] (3.72,2.68) circle (0.22);
			
			\foreach \x/\y/\r in {
				4.08/2.18/0.040,
				4.28/2.05/0.030,
				4.43/2.18/0.025,
				5.55/1.45/0.035,
				5.72/1.28/0.022,
				0.55/1.72/0.030
			}{
				\fill[black!25] (\x,\y) circle (\r);
			}
		}
		
		\begin{scope}[shift={(0,0)}]
			\VisibleUniverse
			\node[paneltitle] at (3.15,4.15) {\textbf{QT}};
		\end{scope}
		
		\begin{scope}[shift={(7.3,0)}]
			\VisibleUniverse
			\node[paneltitle] at (3.15,4.15) {\textbf{RQT}};
			
			\draw[corr] (1.05,1.02) to[out=20,in=190] (2.42,1.88);
			\draw[corr] (1.05,1.02) to[out=5,in=205] (4.62,0.98);
			\draw[corr] (2.80,1.70) to[out=0,in=200] (5.18,2.35);
			\draw[corr] (3.72,2.68) to[out=-100,in=110] (4.62,0.98);
			\draw[corr] (3.72,2.68) to[out=8,in=170] (5.18,2.35);
			\draw[corr, densely dashed] (2.32,2.08) to[out=35,in=-165] (3.72,2.68);
			\draw[corr, densely dashed] (2.48,1.62) to[out=-18,in=155] (4.62,0.98);
			\draw[corr, densely dashed] (1.05,1.02) to[out=55,in=-160] (3.72,2.68);
		\end{scope}
		
	\end{tikzpicture}
	\caption{
		Conceptual illustration of the interpretive difference between QT and RQT. The two panels depict the same visible universe and therefore the same observable phenomena. In the RQT description (right), however, the underlying state may contain additional correlations---shown schematically as faint curved links---that remain operationally inaccessible. The figure is purely illustrative: it conveys that RQT can attribute a denser correlation structure to its description of the world while remaining empirically indistinguishable from QT in experiments whose statistics lie within the quantum set. 
	}
	\label{fig:qt_rqt_correlations}
\end{figure}
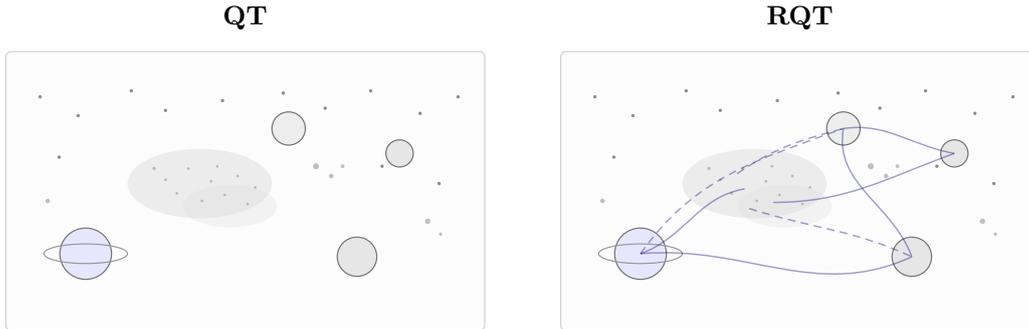

On a broader scope, our results highlight the complexity of testing a foil theory~\cite{Chiribella_2016} against quantum theory. The very accommodating formalism of quantum theory cannot simply be taken for granted in alternative theories. Before proposing experiments to challenge such alternatives against QT, one must first verify that each element of the alternative framework carries the same operational meaning as its counterpart in QT. Without doing so, the foil is condemned to remain a toy theory, and its scope cannot pretend to extend from the blackboard to the lab bench.

In the case of RQT versus QT, locality of measurements and channels does have such a shared meaning, but product-state independence does not. Once a notion of independence that does have shared meaning is imposed, we proved that the difference~\cite{Renou2021} in the predictions of RQT and QT vanishes. The scope of the falsification of RQT is confined to the blackboard.

An important consequence is interpretive. Within the framework considered here, all experiments to date that are consistent with QT are equally consistent with RQT. Yet the picture of the world suggested by the two theories is then rather different. In the standard quantum description, independent systems are typically taken to be genuinely uncorrelated unless correlations are created by interaction or common preparation. In the real description, by contrast, correlations may be far more widespread than they appear: systems that are operationally independent need not be uncorrelated at the level of the underlying real-state representation. What changes is not the observed data, but the theory's ability to resolve the correlations that may already be present. From this perspective, RQT describes a universe in which correlations are more pervasive, yet often remain experimentally invisible because the real formalism has less tomographic power than QT. Such hidden correlations need not be regarded as contrived: in a universe with a common early history, they could simply reflect correlations inherited from a shared origin and subsequently propagated across widely separated systems. Our results therefore show that present-day experiments do not compel the sparse-correlation ontology usually associated with QT; they are equally compatible with an interpretation in which correlations are effectively ubiquitous, but largely beyond operational reach. See~\cref{fig:qt_rqt_correlations}.

Our work therefore restores a simple but far-reaching conclusion. Real quantum theory is not experimentally distinguishable from standard quantum theory in any network or sequential protocol whose statistics remain compatible with quantum theory. 


\bmhead{Acknowledgements}
We thank useful conversations with Pedro Barrios Hita, Marco Erba, Thomas Galley, Miguel Navascu{\'{e}}s, Alexander Wilce, and William K. Wooters.

T.H. acknowledges support from the Slovak grants APVV-22-0570 and VEGA 2/0128/24. 

\section*{\Meth }
This technical section includes a high-level proof of the main statements as well as a more rigorous formulation of the claims of the main text. The technical core of the proof relies on a theory of real matrices we previously introduced in Ref.~\cite{RNQT} called Real-Number Quantum Theory (RNQT). Since this theory is isomorphic to QT, the overall proof strategy consists of showing that RNQT models are always contained within RQT models, which explains why RQT predictions can always explain QT predictions. This RNQT has a different notion of tensor-product locality than RQT---called $R$-product locality as opposed to Kronecker-product locality. The main technical hurdle is to prove that these distinct notions of product locality are interchangeable for measurements and channels, without altering the statistical outcomes for the relevant sets of states.

As a warm-up, we first revisit the cases where an RQT explanation is already known to exist, namely single-partite~\cite{Stueckelberg1960,McKague} and Bell (or local) experiments~\cite{Pal_2008,Moroder_2013}. Recovering these proofs using RNQT is almost immediate: one simply needs to identify the subset of states and effects of RQT that corresponds to the states and effects of RNQT. 

The main difficulty to overcome when extending the argument to arbitrary network scenarios (which we shall call `multilocal' in opposition to `local' experiments) is the treatment of independent sources. In particular, assuming product-state independence is insufficient to guarantee that RNQT appears as a submodel of RQT~\cite{Renou2021}; one needs to extend it to operational independence. 

Once this point is recognised, the general proof consists of the following steps: 1) Product-state-independent RQT states cannot be locally distinguished from product-state-independent RNQT states, and both classes satisfy operational independence. Therefore, replacing the former by the latter entails no loss of generality.
2) The existence of these product-state-independent RNQT states within the RQT state space is sufficient to ensure that the RQT model contains an RNQT submodel.
3) Since the RNQT submodel is isomorphic to the corresponding QT model, any bound on local inequalities achievable within (complex) quantum theory can also be achieved within a real quantum theory model that assumes operational independence.

The detailed constructions required for the proof (how to build the RNQT model and how it is isometrically isomorphic to a QT model) are largely technical. For this reason, these have been delayed into two appendices, \cref{sec:RNQT,sec:real_units}.

\subsection*{Preliminaries and the definitions of independent sources}
Let us first formalise our language: By a `\textbf{theory}', we mean here the general set of rules believed to provide an accurate mathematical description of physical phenomena. Following the recent RQT literature, we limit the scope of the theory to a definition of the set of accessible states of systems, the set of possible measurements performed on a system, the prescription of how to associate settings and outcomes to the measurements, and the rule to obtain probabilities out of a state and effect pair. 

This can be done because the theories under consideration in this paper---quantum theory (QT), real quantum theory (RQT), and real number quantum theory (RNQT)---are all \textit{foils} of quantum theory: theories that are sufficiently close to be essentialised as variations of the density-matrix formalism of quantum theory. Passing from one theory to the other simply amounts to changing the constraints defining the set of valid states and the set of valid effects, and, for RNQT, also to changing the matrix representation of the composition rule.

\begin{definition}
    In the following, \textbf{Quantum Theory (QT)} refers to the usual quantum-mechanical density-operator description of systems with a finite number of degrees of freedom, as described for example in Ref.~\cite{Nielsen2012}.

    \textbf{Real Quantum Theory (RQT)} refers to the restriction of QT under the following rule: For a chosen basis, the operators of RQT are only the operators of QT that are represented by real matrices in that basis.

    \textbf{Real-Number Quantum Theory (RNQT)} refers to the further modification of RQT under the following rules: 1) Every density operator and effect must commute with a given matrix $J$ such that $J^2 = - \id$; 2) The state and effects composition rules are respectively changed by the bilinear forms $\rtensor$ and $\rtensordual$. (The exact forms of $J$, $\rtensor$, and $\rtensordual$ are representation-dependent specificities of RNQT explicited in \Cref{sec:RNQT}.)
\end{definition}
In other words, RQT can always be seen as a sub-theory of QT, and RNQT as a sub-theory of RQT with modified composition rules. Nevertheless, while RNQT might appear `smaller' than RQT (in the sense of inclusions), it has been proven to yield the same predictions as QT~\cite{Barrios2025,RNQT}; the difference is that a system with $d$ degrees of freedom in QT is described as a system with $2d$ degrees of freedom in RNQT. This increase is allowed since the number of degrees of freedom of a system is not something that can be experimentally bounded~\cite{Brunner_2008}, and so it is treated as a parameter of the model of an experiment.

By a `\textbf{model}', we mean here the set of theory-independent assumptions made in order to \textit{model} a given experimental procedure as a set of probability distributions. It prescribes which and how systems are prepared at their source(s), how they are split into subsystems, and how these subsystems are subsequently distributed among the `labs' of the `parties' that will measure them. It further prescribes how the measurement settings and their outcomes are split amongst the parties. Once the underlying theory has been set, these prescriptions shape the mathematical expression used by the theory to generate the outcome distribution. 

Since QT, RQT, or RNQT all assign probabilities according to the Born rule, the prescriptions of the model result in the same formal expression regardless of the underlying theory. For instance, consider an experiment where Alice and Bob share a bipartite system. Irrespective of whether the underlying is QT, RQT, or RQNT, the outcomes-given-settings probability distribution is modelled by an expression of the form
\begin{equation}\label{eq:model}
    p(a,b|x,y) = \TrX{}{\rho_{\Aa\Bb}\: F_{a,b|x,y}} \:,
\end{equation}
where $\rho_{\Aa\Bb}$ is the state of the shared system, and where $F_{a,b|x,y}$ is the POVM element (the effect) representing Alice observing outcome $a$ and Bob observing outcome $b$ from a POVM $\{F_{a,b|x,y}\}_{(a,b)}$ (the measurement) they chose according to their respective settings $x$ and $y$. 

Because of that, the model can then be seen as the specific shape taken by the right-hand side of \cref{eq:model}. In this case, the information relative to the model of the experiment is the fact that there are two parties Alice and Bob, hence two tensor factors of the Hilbert spaces respectively labelled by $\Aa$ and $\Bb$, and that there are two pairs of random variables $(a,x)$ and $(b,y)$ associated with each respective party and representing their outcome and setting pair, hence a dependence of the POVM on these four variables. 

In the following, when both the theory and the model are fixed in an equation like \cref{eq:model}, we will use wording such as `RQT model' to refer to the right-hand side of this equation while implying that the matrices involved are all real matrices. We will use `QT model' and `RNQT model' similarly.

Nonetheless, the model is usually more than a prescription for associating parties with tensor factors and classical variables with operators; it often contains additional assumptions about the experimental context. These assumptions are then specialised depending on the underlying theory. This specialisation takes the form of extra constraints: it can be a constraint on the dimensions of the systems involved in the experiment, on a specific symmetry obeyed by the operations, etc. As we will now detail, the two constraints relevant to this work are those encoding the locality of parties and the independence of sources. 

By `\textbf{locality}', we mean here the constraint imposed on the model to capture the experimental fact that the parties operate in spacelike-separated labs\footnote{Or at least a situation where the parties refrain from performing measurements that affect other parties' systems in any observable way.}. Under this condition, the parties cannot influence each other's actions. In the model, this requirement is expressed through the commutativity assumption: any measurement performed in one laboratory must commute with measurements performed in another laboratory. 

Following the previous literature on RQT, we also assume that all systems have finite dimensions for simplicity. This will allow us to represent commuting sets of measurements as pairs of POVMs in tensor product\footnote{Representing commuting measurements as tensor-product measurements requires the Tsirelson conjecture to hold, which in practice amounts to restricting the model to systems of finite dimension or to measurements with finitely many outcomes~\cite{Scholz2008}.}. This avoids many of the subtleties that arise when equating locality with the commutativity assumption\footnote{See some relevant reviews~\cite{Fritz2012,Berkovitz2016,Fewster2016}
of these issues and the sources within.}.

For example, in an experiment where Alice and Bob are in local labs but share a bipartite system, the model of the experiment is again formally the same for QT, RQT and RNQT:
\begin{equation}
    p(a,b|x,y) = \TrX{}{\rho_{\Aa\Bb}\: (A_{a|x} \otimes B_{b|y})} \:.
\end{equation}
In this case, the joint POVM is now $\{A_{a|x} \otimes B_{b|y}\}_{a,b}$ where $\{A_{a|x}\}_a$ (respectively, $\{B_{b|y}\}_b$) is the local POVM picked by Alice (resp., Bob) in her (resp., his) lab, and where $\otimes$ is (a matrix representation of) the tensor product for the theory at hand (for QT and RQT it is represented by $\ktensor$, the Kronecker product; for RNQT it is represented by another bilinear form noted $\rtensordual$).

Finally, by `\textbf{independent sources}', we mean here the constraint imposed on the model to reflect the experimental fact that certain systems are assumed uncorrelated. For example, when the systems are locally prepared in spacelike-separated laboratories. As discussed in the main text, this idea has two distinct translations into a mathematical constraint on the states modelling the independent sources. One possibility is to impose independence directly at the level of the theory, requiring the states to factor into a product state; we refer to this as `\textit{product-state independence}' (\cref{def:LIR}).  The alternative is to define independence operationally, by requiring the measurement statistics to be uncorrelated; we refer to this as `\textit{operational independence}' (\cref{def:LIQ}). 
\begin{definition}[Independent sources]\label{def:independent_sources}
    Let there be a model that assumes an $n$-partition between parties A(lice), B(ob), ..., and underlaid by a theory in which the tensor product is noted $\otimes$ such that the joint distributions of outcomes predicted by the model have the form
    \begin{equation}
        p(a,b,...|x,y,...) = \TrX{}{\rho_{\Aa\Bb ... } \: (A_{a|x} \otimes B_{b|y} \otimes \ldots )} 
    \end{equation}
    for any joint state of the shared system $\rho_{\Aa\Bb ... }$ and for any POVMs $\{A_{a|x}\}_a$,  $\{B_{b|y}\}_b$, ...
    Then, the model assumes \textbf{product-state independent states} with respect to the $n$-partition if and only if the joint state has the form
    \begin{equation}
        \rho_{\Aa\Bb ... } = \rho_\Aa \otimes \rho_\Bb \otimes ... \:,
    \end{equation}
    where $\rho_\Aa$ is a valid state of Alice's subsystem, $\rho_\Bb $ is one of Bob's, etc.

    Alternatively, the model assumes \textbf{operationally independent states} with respect to the $n$-partition if and only if the distribution is uncorrelated for any local measurement with respect to the $n$-partition. That is, if and only if there exists $n$ conditional distributions $p(a|x)$, $p(b|y)$, ... such that
    \begin{equation}
        p(a,b,...|x,y,...) = p(a|x)p(b|y) ... \:.
    \end{equation}
\end{definition}

Like locality, be aware that independence is defined with respect to a partitioning of the subsystems. Be aware that some models tend to have different partitioning between systems preparation and the subsequent local measurements. For example, in the Bell scenario, the bipartite state is global during preparation, i.e. it comes from a single independent source prepared w.r.t. the partition (AB), whereas there are two measurements local with respect to the partitions (A)(B). A more complicated scenario is the bilocal scenario~\cite{Branciard2010} (the one considered by Renou et al.~\cite{Renou2021}): it models a four-partite system which is prepared by two independent sources w.r.t. the partition (AB$_1$)(B$_2$C) and then locally measured by three parties w.r.t. the partition (A)(B$_1$B$_2$)(C).

As mentioned in the main text, the two meanings of `independent sources' are interchangeable in models assuming QT (or the equivalent RNQT) without loss of generality. But the same cannot be said about RQT~\cite{Caves2001}.
\begin{prop}\label{prop:independent_sources}
    In QT, the independent sources assumption is equivalently represented by a model with either product-state or operationally independent states.
    
    In RQT, product-state independence is only sufficient for operational independence, making the independent sources assumption stronger when represented by a model with product states rather than operationally independent states.
\end{prop}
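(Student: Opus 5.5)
The plan has three parts. The first is a direction common to both theories: product-state independence always implies operational independence. The second part shows the converse in QT, using local tomography. The third exhibits a real state that is operationally independent but not a product, which shows the converse fails in RQT.

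\emph{Product implies operational (both theories).} Take $\rho_{\Aa\Bb\ldots}=\rho_\Aa\ktensor\rho_\Bb\ktensor\cdots$ and local effects $A_{a|x}\ktensor B_{b|y}\ktensor\cdots$. The mixed-product property of the Kronecker product gives
\[
\TrX{}{(\rho_\Aa\ktensor\rho_\Bb\ktensor\cdots)(A_{a|x}\ktensor B_{b|y}\ktensor\cdots)}=\TrX{}{\rho_\Aa A_{a|x}}\,\TrX{}{\rho_\Bb B_{b|y}}\cdots .
\]
Setting $p(a|x)=\TrX{}{\rho_\Aa A_{a|x}}$, and so on for the other parties, yields the factorisation of \cref{def:independent_sources}. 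This holds verbatim for real matrices.

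\emph{Operational implies product in QT.} Let $\rho$ be operationally independent. Summing over the other parties' outcomes shows that $p(a|x)$ equals the reduced-state probability $\TrX{}{\rho_\Aa A_{a|x}}$, with $\rho_\Aa$ the partial trace, and likewise for the other parties. Then $\sigma:=\rho-\rho_\Aa\ktensor\rho_\Bb\ktensor\cdots$ satisfies $\TrX{}{\sigma(E_1\ktensor E_2\ktensor\cdots)}=0$ for every tuple of local effects. Local effects span the Hermitian matrices on each factor. Products of local Hermitian operators span the global Hermitian matrices, since $\HermC{d_1d_2}=\HermC{d_1}\otimes_\rr\HermC{d_2}$ by a real dimension count: $d_1^2d_2^2$. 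This is local tomography. Because the Hilbert--Schmidt pairing is nondegenerate on Hermitian matrices, $\sigma=0$. For RNQT the same conclusion follows from its isomorphism with QT, taking that from \cref{sec:RNQT}.

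\emph{Strictness in RQT.} Repeating the same spanning argument over the reals fails. On each factor, local effects span only $\SYR{d_i}$, which has dimension $d_i(d_i+1)/2$. The product of these dimensions is strictly smaller than $\dim\SYR{d_1d_2}=d_1d_2(d_1d_2+1)/2$. The missing directions are exactly the operators $Y_1\ktensor Y_2$ with $Y_i$ real antisymmetric: such an operator is symmetric, yet it is orthogonal to every product of symmetric matrices because $\TrX{}{Y_1S_1}=0$. I will therefore take two rebits and
\[
\rho=\tfrac14\left(\id\ktensor\id+\epsilon\,Y\ktensor Y\right),\qquad Y=\begin{pmatrix}0&-1\\1&0\end{pmatrix},
\]
with $0<\epsilon\le1$. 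This matrix is real and symmetric with trace one. Since $Y\ktensor Y$ has eigenvalues $\pm1$, it is positive. For any local symmetric effects we get $\TrX{}{\rho(A\ktensor B)}=\tfrac14\TrX{}{A}\TrX{}{B}$, which is the product of the marginals, so $\rho$ is operationally independent. But $\rho\neq\rho_\Aa\ktensor\rho_\Bb=\id/4$, so it is not a product state.

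The main obstacle I expect is this final step. Choosing the witness $Y\ktensor Y$ must simultaneously keep $\rho$ positive and make it orthogonal to \emph{all} local real effects, not merely to a spanning set of local observables. The difficulty resolves through the dimension gap between symmetric products and global symmetric matrices.
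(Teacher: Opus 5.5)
Your proof is correct. For the RQT half it is essentially the paper's argument. The paper uses the same witness family $\frac14(I\ktensor I-\alpha J\ktensor J)$; yours differs only in the sign of the $J\ktensor J$ term, which does not matter since $J\ktensor J$ has spectrum $\{\pm1\}$. It relies on the same fact that $\TrX{}{JA}=0$ for real symmetric $A$. Your restriction to $\epsilon>0$ is slightly more careful than the paper, whose $\alpha=0$ case is a product state.

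The genuine difference is the converse in QT. The paper writes $\rho_{\Aa\Bb}=\sum_i q_i\,\rho_i\ktensor\sigma_i$ as an affine combination of complex product states. It then argues that choosing POVMs with elements proportional to $\rho_i\ktensor\sigma_i$ forces a single nonzero $q_i$, and it attributes the RQT failure to such decompositions requiring non-real local factors.

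You instead identify the marginals and note that $\rho-\rho_\Aa\ktensor\rho_\Bb$ is Hilbert--Schmidt orthogonal to every product of local effects. You then conclude via $\HermC{d_1d_2}=\HermC{d_1}\otimes_\rr\HermC{d_2}$, i.e.\ local tomography. Your route is the rigorous one. The paper's final inference is not justified as stated, because it depends on the chosen decomposition: the product state $\tfrac12(\rho_1+\rho_2)\ktensor\sigma$ already admits a decomposition with two nonzero coefficients. The spanning argument needs no decomposition at all.

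Your approach also shows that both halves of the proposition come from one phenomenon. Over $\rr$ the symmetric matrices split as (symmetric $\otimes$ symmetric) $\oplus$ (antisymmetric $\otimes$ antisymmetric), and the counterexample is read off from the second summand rather than guessed. What the paper's route adds is the explicit complex-separable decomposition of the witness. It uses this to motivate the ``entanglement of representation'' reading, which the logical claim does not need.

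Three small fixes.
\begin{enumerate}
\item In this paper $\SYR{d}$ denotes the special symmetric subspace of $\rr^{2d\times 2d}$, of dimension $d^2$, not the real symmetric $d\times d$ matrices. Say ``real symmetric matrices'' instead of using that macro.
\item The strict dimension inequality holds only when $d_1,d_2\geq 2$; it is an equality if either factor is one-dimensional. This is harmless for your two-rebit example.
\item ``Exactly the operators $Y_1\ktensor Y_2$'' should read ``the span of'' such operators.
\end{enumerate}
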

\begin{proof}
    We prove it in the bipartite case; the multipartite case is a straightforward generalisation.
    
    It is direct to see that product-state independence is sufficient for operational independence since
    \begin{equation}
        \TrX{}{(\rho_\Aa \ktensor \sigma_\Bb)(A_{a|x} \ktensor B_{b|y})} = \TrX{}{\rho_\Aa \: A_{a|x}} \TrX{}{\sigma_\Bb \: B_{b|y}} \:,
    \end{equation}
    for every pair of local states and POVMs, leading to the operational independence condition $p(a,b|x,y)= p(a|x)p(b|y)$.

    To show that the condition is necessary for QT, we can use the affine decomposition of bipartite states, $\rho_{\Aa\Bb} = \sum_i q_i (\rho_i \ktensor \sigma_i)$ where $\{\rho_i\}_i$ (respectively, $\{\sigma_i\}_i$) is a collection of local states of Alice's system (resp., Bob) and where $\{q_i\} \subset \rr$ are affine coefficients. Injecting this into a local model, one gets
    \begin{equation}
        \TrX{}{\rho_{\Aa\Bb}(A_{a|x} \ktensor B_{b|y})} = \sum_i q_i \TrX{}{\rho_i A_{a|x}} \TrX{}{\sigma_i B_{b|y}} = \sum_i q_i p(a|x,i)p(b|y,i) \:.
    \end{equation}
    Because the choice of POVMs are arbitrary, one can always pick one such that there is a subset $\{A_{a'|x} \ktensor B_{b'|y}\}_{a',b'} \subseteq \{A_{a|x} \ktensor B_{b|y}\}_{a,b} $ of the POVM elements for which $A_{a'|x} \ktensor B_{b'|y} = c_i \rho_i \ktensor \sigma_i $ with $c_i$ a positive constant. Because of that, the only way for $\sum_i q_i p(a|x,i)p(b|y,i) = p(a|x)p(b|y)$ to be true for every POVM is to require that $\{q_i\}_i$ only possesses one non-zero member, meaning that $\TrX{}{\rho_{\Aa\Bb}(A_{a|x} \ktensor B_{b|y})} = p(a|x)p(b|y)$ only if $\rho_{\Aa\Bb} = (\rho_\Aa \ktensor \sigma_\Bb)$ for some states $\rho_\Aa$ and $\sigma_\Bb$ of Alice and Bob.
    
    For RQT, the above necessity proof no longer holds because the affine decomposition requires a closed number field. In other words, the affine decomposition cannot be performed over $\rr$, as it may involve local matrices that are positive semi-definite and trace-one but that cannot be expressed as purely real matrices, making them invalid RQT states. An example of such a bipartite state of two two-dimensional systems is given by any state of the form~\cite{Caves2001}:
    \begin{multline}\label{eq:II_alphaJJ}
        \frac{1}{4}\left(I \ktensor I - \alpha J \ktensor J\right) \\
        = \frac{1}{2} \left[\frac{1}{2}(I + i\sqrt{\alpha}J) \ktensor \frac{1}{2}(I + i\sqrt{\alpha}J) \right] + \frac{1}{2} \left[\frac{1}{2}(I - i\sqrt{\alpha}J) \ktensor \frac{1}{2}(I - i\sqrt{\alpha}J) \right] \:,
    \end{multline}
    with $0 \leq \alpha \leq 1$, $I = \left( \begin{smallmatrix} 1 & 0 \\ 0 & 1 \end{smallmatrix}\right)$ and $J = \left( \begin{smallmatrix} 0 & -1 \\ 1 & 0 \end{smallmatrix}\right)$. As emphasised by the presence of an $i$ in the right-hand side of the equation, the decomposition can only be done using complex matrices, thus within QT. This matrix is thus separable when seen as a QT state, but entangled when seen as an RQT state. However, if the only accessible local POVMs are RQT ones (i.e., positive semi-definite real matrices), remark that $\TrX{}{( J \ktensor J)(A_{a|x} \ktensor B_{b|y})} = 0$ because any pair from $\{A_{a|x}\}_a$ and $\{B_{b|y}\}_{b}$ are real symmetric matrices, which are orthogonal to $J$, since $J$ is an antisymmetric matrix. Hence,
    \begin{equation}
        \TrX{}{\frac{1}{4}\left(I \ktensor I - \alpha J \ktensor J\right)(A_{a|x} \ktensor B_{b|y})} = \frac{1}{2}\TrX{}{A_{a|x}} \: \frac{1}{2}\TrX{}{B_{b|y}} \:.
    \end{equation}
    While this RQT state is not a product state, it is yet an operationally independent one. This provides a counterexample to product-state independence being necessary for operational independence in the RQT case, concluding the proof.
\end{proof}

\subsection*{Single-source experiments have an RQT description}

We start by reminding the reader that in experiments involving a single source and a single measurement, QT can always be alternatively explained by RQT. This has been known since the beginning of quantum theory and formally established by the seminal work of St{\"u}ckleberg~\cite{Stueckelberg1960} (who defined a theory that can be seen as single-partite RNQT; see also the relevant but more abstract approach taken by Dyson~\cite{Threefold}). 
This construction was later studied by McKague et al. as a way to prove that a single-partite RQT model always contains a sub-model isomorphic to a QT model~\cite{McKague}. 
We recover this result through the following (see \cref{sec:RNQTsingle} for a proof).
\begin{restatable}[Single-partite QT models are sub-models of single-partite RQT models.]{prop}{singleRQTissingleQT}\label{prop:singleRQTissingleQT}
    A single-partite QT model involving a single source producing a state $\rho$ measured by a POVM $\{A_{a|x}\}_a$ can always be embedded into an RQT model of twice the matrix size.
    
    That is, there exists a pair of real-linear injective maps $(M,E) : \cc^{d_A \times d_A} \rightarrow \rr^{2d_A\times 2d_A}$ such that for any QT state $\rho$, $M(\rho)$ is a valid RQT state, such that for any QT POVM $\{A_{a|x}\}_a$, $E(A_{a|x})$ is a valid RQT POVM effect and $\{E(A_{a|x})\}_a$ a valid RQT POVM, and such that every distribution of the QT model is mapped to one of the RQT model as:
    \begin{equation}
        \forall a: \quad \TrX{}{\rho A_{a|x}} = \TrX{}{M(\rho) E(A_{a|x})} \:.
    \end{equation}
\end{restatable}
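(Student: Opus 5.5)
The plan is to use the standard realification map. Write every complex matrix as $Z = X + iY$ with $X, Y \in \rr^{d_A\times d_A}$, and define
\begin{equation}
    \varphi(Z) = I \ktensor X + J \ktensor Y = \begin{pmatrix} X & -Y \\ Y & X \end{pmatrix} \in \rr^{2d_A\times 2d_A},
\end{equation}
where $I$ and $J$ are the $2\times 2$ matrices introduced in the proof of \cref{prop:independent_sources}. I will take $M(\rho) = \tfrac12 \varphi(\rho)$ and $E(A) = \varphi(A)$. Both maps are real-linear. They are injective because $X$ and $Y$ can be read off from the blocks of $\varphi(Z)$.

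The first step is to record the algebraic properties of $\varphi$. Since $J^2 = -I$, $\varphi$ is a $*$-algebra homomorphism: $\varphi(Z_1 Z_2) = \varphi(Z_1)\varphi(Z_2)$ and $\varphi(Z^\dagger) = \varphi(Z)^T$. Moreover, $\mathrm{tr}[\varphi(Z)] = 2\,\mathrm{Re}\,\mathrm{tr}[Z]$, because $\mathrm{tr}[J] = 0$.

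The second step is to show that $\varphi$ preserves validity of states and effects.
\begin{itemize}
    \item \emph{Symmetry.} If $Z$ is Hermitian, then $X$ is symmetric and $Y$ is antisymmetric, so $\varphi(Z)$ is real symmetric.
    \item \emph{Positivity.} For a real vector $(u,v)$, set $w = u + iv$. A short computation gives $(u,v)^T \varphi(Z) (u,v) = \langle w, Z w\rangle$, which is real and nonnegative whenever $Z \geq 0$. Equivalently, $\varphi(Z)$ is unitarily equivalent to $Z \oplus \overline{Z}$.
    \item \emph{States.} If $\rho$ is a density matrix, then $M(\rho) \ge 0$ and $\mathrm{tr}[M(\rho)] = \mathrm{Re}\,\mathrm{tr}[\rho] = 1$, so $M(\rho)$ is a valid RQT state.
    \item \emph{POVMs.} Each $E(A_{a|x})$ is positive semidefinite and real symmetric. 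By linearity, $\sum_a E(A_{a|x}) = \varphi(\id) = I \ktensor \id$, so $\{E(A_{a|x})\}_a$ is a valid RQT POVM.
\end{itemize}

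The last step is to check the Born rule. By multiplicativity and the trace identity,
\begin{equation}
    \mathrm{tr}[M(\rho)E(A)] = \tfrac12\,\mathrm{tr}[\varphi(\rho A)] = \mathrm{Re}\,\mathrm{tr}[\rho A] = \mathrm{tr}[\rho A],
\end{equation}
where the final equality holds because $\mathrm{tr}[\rho A]$ is real for Hermitian $\rho$ and $A$.

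I do not expect a serious obstacle here. The only step needing care is positivity. If the two block-embedding conventions (the ordering $I\ktensor X$ versus $X \ktensor I$) are mixed up, the sign of $J$ and the equivalence with $Z \oplus \overline{Z}$ can go wrong. So I will fix one convention and verify it directly with the quadratic-form computation above. It would also be natural to note that the image of $\varphi$ consists exactly of the real matrices commuting with $J\ktensor\id$. This is the RNQT constraint, and it connects the construction to \cref{sec:RNQT}, although the proposition itself does not need it.
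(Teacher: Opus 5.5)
Your proposal is correct and essentially identical to the paper's proof: your $\varphi$ is exactly the paper's standard mapping $\Gamma$, and the paper also takes $M=\tfrac12\Gamma$ and $E=\Gamma$, deriving validity of states, POVMs and the Born rule from the $*$-homomorphism, positivity, unitality and trace properties of $\Gamma$. The only difference is that you verify those properties directly (the paper cites them from earlier work), and using $\mathrm{tr}[\varphi(Z)]=2\,\mathrm{Re}\,\mathrm{tr}[Z]$ for the generally non-Hermitian product $\rho A$ handles that step correctly.
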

The image of the QT model through the pair of maps $(M,E)$ is what we refer to as the `QT-image' of the RQT. The QT-image normally differs from the RNQT model by its system-composition rule. But since single-partite models are not concerned with the system-composition rule, these are the same. Hence, any single-partite, single-source RQT model always contains a sub-model isomorphic to a QT model of half the matrix size. It means that any distribution of outcomes obtained with QT states and effects could have been obtained with RQT states and effects of twice the matrix size. Therefore, any single-source, single-partite QT model can be alternatively explained by an RQT model, making the two indistinguishable from experimental statistics alone.

But what about multipartite models\footnote{Remark that this question arises because the tensor product structure is assumed. If the action of local parties were represented by subsets of commuting maps all defined over the same Hilbert space instead, we could simply use \cref{prop:singleRQTissingleQT} to prove that every QT model has an explanation as an RQT model. This would be the case because the pair of maps $(M,E)$ is actually a *-algebra homomorphism and thus preserves sets of commuting maps.}? If these are viewed as global scenarios without any locality constraints, they fall under the umbrella of the above proposition. In the presence of locality and independence constraints, we face the problem that there does not exist a pair of maps that can preserve all pure Kronecker products of states and effects while keeping the distributions invariant. This is what motivated us to change the system-combination rule of RQT to define RNQT~\cite{RNQT}. 
\begin{restatable}[QT models are RNQT models.]{prop}{QTisRNQT}\label{prop:QTisRNQT}
    A $n$-partite QT model can always be embedded into an RNQT model of $2^n$ times the matrix size. 
    
    That is, there exists a pair of real-linear injective maps $(M_{\Aa\Bb ... }, E_{\Aa\Bb \dots}):  \cc^{d_Ad_B ... \times d_Ad_B...} \rightarrow \rr^{2d_A2d_B ...\times 2d_A2d_B...}$ that respectively map any valid QT state and effect to a valid RQT state and effect that commute with a given $J$ matrix, such that every distribution predicted by the QT model has a real-matrix counterpart and such that the system-composition rules of QT are mapped to those of RNQT. I.e.,
    \begin{subequations}\label{eq:Kronecker_to_Rprod}
        \begin{align}
            M_{\Xx\Yy}(\rho_\Xx \ktensor \sigma_{\Yy}) &=  M_{\Xx}(\rho_\Xx) \rtensor M_{\Yy}(\sigma_{\Yy}) \:; \\
             E_{\Xx\Yy}(X_{\alpha|\mu} \ktensor Y_{\beta|\nu}) &=  E_{\Xx}(X_{\alpha|\mu}) \rtensordual E_{\Yy}(Y_{\beta|\nu}) \:; 
        \end{align}
    \end{subequations}
    where $\Xx$ and $\Yy$ are any two subsets of the parties $\Aa\Bb\dots$, $\rho_\Xx$ and $\sigma_{\Yy}$ (respectively, $X_{\alpha|\mu}$ and $Y_{\beta|\nu}$) are valid QT states (resp. effetcs) of $\Xx$ and $\Yy$, and $(M_{\Xx}, E_{\Xx})$ denotes a local application of the maps on the $\Xx$ subset.
\end{restatable}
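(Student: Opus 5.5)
We build the maps explicitly from the single-partite realification of \cref{prop:singleRQTissingleQT}, and then define the composition rules $\rtensor$ and $\rtensordual$ so that \cref{eq:Kronecker_to_Rprod} holds by construction. For one system we write $Z=\ReP{Z}+i\,\ImP{Z}$ and set
\begin{equation}
M(Z)=\tfrac12\left(\id_2\ktensor\ReP{Z}+J\ktensor\ImP{Z}\right),\qquad E(Z)=\id_2\ktensor\ReP{Z}+J\ktensor\ImP{Z}.
\end{equation}
Here $J$ is the $2\times2$ matrix of \cref{prop:independent_sources}. This is the standard real representation of $\cc$ inside $\rr^{2\times2}$, namely $i\mapsto J$. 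The map $Z\mapsto \id_2\ktensor\ReP{Z}+J\ktensor\ImP{Z}$ is an injective real-linear $*$-homomorphism. Hermitian matrices go to real symmetric matrices, and positivity is preserved, since the image is unitarily equivalent to $Z\oplus\overline{Z}$. Every image commutes with $J\ktensor\id$. The normalisations give $\TrX{}{M(\rho)}=1$ and $E(\id)=\id$. Moreover $\TrX{}{M(\rho)E(A)}=\tfrac12\cdot 2\,\ReP{\TrX{}{\rho A}}=\TrX{}{\rho A}$, and the last equality holds because $\rho$ and $A$ are Hermitian. This recovers \cref{prop:singleRQTissingleQT}.

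For $n$ parties, we first define $M_{\Aa\Bb\dots}$ and $E_{\Aa\Bb\dots}$ on the full joint matrix algebra $\cc^{D\times D}$, where $D=d_Ad_B\cdots$. The naive choice would be to apply the single-partite map once, which gives matrix size $2D$ rather than $2^nD$. Since the target size is $2^nD$, we instead adjoin one rebit per party. Concretely, we take $E_{\Aa\Bb\dots}(Z)=\left(\id_{2^{n-1}}\ktensor\id_2\ktensor\ReP{Z}+\mathcal{J}\ktensor\ImP{Z}\right)$ suitably reordered. Here $\mathcal{J}$ is a fixed real antisymmetric complex structure on $\rr^{2^n}$ built from the $n$ local $J$'s. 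We choose the $J$ acting on the first party's rebit, and keep the other rebits as spectators in a fixed symmetric state, for instance the $+1$ eigenvector of $J\ktensor J$ projected suitably.

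The real content is then the choice of bilinear forms. We \emph{define} $\rtensor$ on the images by
\begin{equation}
M_\Xx(\rho)\rtensor M_\Yy(\sigma):=M_{\Xx\Yy}(\rho\ktensor\sigma),
\end{equation}
and similarly for $\rtensordual$. We extend these by real bilinearity. This definition is consistent because the images of $M_\Xx$ and $M_\Yy$ span the $J$-commutant algebras. Those algebras are complex algebras, namely $\cc^{d\times d}$ realified. The complex tensor product $\otimes_\cc$ of such algebras is then realised by the Kronecker product followed by compression onto the subspace on which $J_\Xx\ktensor\id$ and $\id\ktensor J_\Yy$ agree. In formulas, $Z\rtensor W=\Pi\,(Z\ktensor W)\,\Pi$ up to reshuffling and normalisation, with $\Pi=\tfrac12(\id-J_\Xx\ktensor J_\Yy)$. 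This makes the rule explicit, and it shows that $\rtensor$ is associative and independent of the subset choice.

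The key steps are as follows.
\begin{enumerate}
\item Verify that $\Pi$ is a projector, that $\Pi$ commutes with the images, and that $\Pi(\id_2\ktensor\id_2)\Pi$ acts as the identity of $\cc$ on the compressed space.
\item Show that the compressed Kronecker product of realified complex matrices equals the realification of their complex Kronecker product. This reduces to the identity $(J\ktensor\id)\Pi=(\id\ktensor J)\Pi$.
\item Check positivity, trace, and the Born-rule identity $\TrX{}{M_{\Xx\Yy}(\Psi)E_{\Xx\Yy}(F)}=\TrX{}{\Psi F}$ globally.
\end{enumerate}
The dimension count works because padding with the rebit of each party gives size $2^n D$. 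Concretely, we keep all $n$ rebits, so $E$ lands in $\rr^{2^nD\times2^nD}$. The states are supported on the common $(+i)$-eigenspace of the pairwise projectors, which has real dimension $2D$.

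I expect the main obstacle to be normalisation and consistency across arbitrary subset groupings, that is, checking that $\rtensor$ of two blocks equals the $n$-fold construction regardless of bracketing. I would handle this by proving that the projector $\Pi_{\Aa\Bb\dots}=\prod_{k}\tfrac12(\id-J_kJ_{k+1})$ factorises compatibly under any partition. Then \cref{eq:Kronecker_to_Rprod} follows by induction on the number of parties.
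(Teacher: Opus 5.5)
Your single-partite part is fine, and your compression mechanism is the right one. $\Pi=\frac12(\id-J_\Xx\ktensor J_\Yy)$ is the paper's $R$-product projector (\cref{def:Rprod}), your $\prod_k\frac12(\id-J_kJ_{k+1})$ is exactly $\I{n}$, and $(J\ktensor\id)\Pi=(\id\ktensor J)\Pi$ is the heart of \cref{prop:RprodisKprod}. The gap is that the $n$-partite map you actually write down is incompatible with this rule, so your step~2 is false for it. Your $E_\Xx(A)=\id_{2^{|\Xx|}}\ktensor\ReP{A}+(J\ktensor\id)\ktensor\ImP{A}$ is unital, with the phase on a single rebit. Compressing gives, up to reordering of tensor factors, $\Pi\,(E_\Xx(A)\ktensor E_\Yy(B))\,\Pi=\Pi\ktensor\ReP{A\ktensor B}+\Pi(J_\Xx\ktensor\id)\ktensor\ImP{A\ktensor B}$. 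Its real-part coefficient is a proper projector, whereas $E_{\Xx\Yy}(A\ktensor B)$ has coefficient $\id_{2^n}$. Already for $n=2$, $\Pi(\Gamma\{A\}\ktensor\Gamma\{B\})\Pi=\overline{\Gamma}^{(2)}\{A\ktensor B\}$ carries the rank-$2$ coefficient $\I{2}$, not $\id_4$. Hence the second line of \cref{eq:Kronecker_to_Rprod} fails for your explicit $\rtensordual$. The same happens for states if $M$ is the normalised version of this $E$. Your remark that states live on a $2D$-dimensional subspace tacitly assumes a different, delocalised $M$ that you never define. Falling back on the definition by fiat, $E_\Xx(A)\rtensordual E_\Yy(B):=E_{\Xx\Yy}(A\ktensor B)$, does not rescue the argument. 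That rule is well defined for \emph{any} injective real-linear maps; injectivity is all it needs, and for multi-party blocks the images do not span the $J$-commutant anyway. So \cref{eq:Kronecker_to_Rprod} becomes empty and says nothing about the explicit composition rules of RNQT.

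The missing idea is that the phase must be delocalised over all $n$ rebits, and unitality must be given up. Take $\overline{\Gamma}^{(n)}\{Z\}=\I{n}\ktensor\ReP{Z}+\J{n}\ktensor\ImP{Z}$ (\cref{def:std_n_mapping}), with $\I{n}$ your rank-$2$ projector and $\J{n}=(J\ktensor I^{\ktensor(n-1)})\I{n}$. Then set $M=\frac12\overline{\Gamma}^{(n)}$, $E=\overline{\Gamma}^{(n)}$, $\rtensor=2(\cdot\Rprod\cdot)$ and $\rtensordual=\cdot\Rprod\cdot$, as the paper does. Step~2 then reduces to $\I{j+k}=\frac12(\I{j}\ktensor\I{k}-\J{j}\ktensor\J{k})$ and $\J{j+k}=\frac12(\J{j}\ktensor\I{k}+\I{j}\ktensor\J{k})$ (\cref{coro:I_J_groupping}). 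This gives $\overline{\Gamma}^{(j)}\{A\}\Rprod\overline{\Gamma}^{(k)}\{B\}=\overline{\Gamma}^{(j+k)}\{A\ktensor B\}$ for every grouping; on such blocks your single-$J$ projector coincides with $\I{j+k}$. Meanwhile $\TrX{}{\I{n}}=2$ and $\TrX{}{\J{n}}=0$ give the normalisation and $\frac12\TrX{}{\overline{\Gamma}^{(n)}\{\rho\}\,\overline{\Gamma}^{(n)}\{A\}}=\ReP{\TrX{}{\rho A}}$. The price is $E(\id)=\I{n}_d\neq\id$. Each image is still a valid RQT effect, $0\le E(A)\le\I{n}_d\le\id$, but POVMs are sent to families summing to $\I{n}_d$, and that is all the proposition claims. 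Your unital $E$ is in fact the relocalised map $\tau^{(n)}\circ\overline{\Gamma}^{(n)}$ of \cref{lemm:relocalisation}. It reproduces the Born rule against $\frac12\overline{\Gamma}^{(n)}$-states only by abandoning exact $\Rprod$-multiplicativity, so it belongs to the paper's later step, not to this proposition.
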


The proof is presented in \cref{sec:RNQT_model}. Remark that this embedding can always be refined into an isometry by using a factor of $2$ instead of $2^n$ (see our previous work on the subject~\cite{RNQT}). Using the above, 
finding the RQT sub-model equivalent to a QT model is reduced to finding a way to replace the RNQT composition rules on the right-hand sides of \cref{eq:Kronecker_to_Rprod} by the RQT composition rules, i.e., by the Kronecker product $\ktensor$. This is easier than it appears because the replacement should not hold in general, but only within the Born rule and under the locality constraints. For instance, in the local (or Bell-like) scenario, since there is only a single source, one only needs to replace the composition rule on the effect side in every Born rule.

\begin{restatable}[Local QT models are sub-models of local RQT models.]{prop}{localRQTislocalQT}\label{prop:localRQTislocalQT}
    A multipartite QT model with a single source producing a state $\rho_{\Aa\Bb\ldots}$ and $n$ local measurements represented by the POVMs $\{A_{a|x}\}_a$, $\{B_{b|y}\}_b$, ..., can always be embedded into an RQT model of $2^n$ the matrix size.
    
    That is, there exists a pair of real-linear injective maps $(M_{\Aa\Bb ... }, E_{\Aa} \ktensor E_{\Bb} \ktensor ...)$ from $ \cc^{d_Ad_B ... \times d_Ad_B...}$ to $\rr^{2d_A2d_B ...\times 2d_A2d_B...}$ such that $M_{\Aa\Bb ... }(\rho_{\Aa\Bb ... })$ is a valid RQT state of the multipartite system; such that $E_\Xx(X_{\alpha|\mu}) \in \rr^{2d_\Xx \times 2d_\Xx}$ is a valid RQT effect on the local system associated with party $\Xx$ and $\{E_\Xx(X_{\alpha|\mu})\}_\mu$ is a valid RQT POVM for every party $\Xx$; and such that every distribution of the QT model is mapped to one of the RQT model as:
    \begin{multline}
        \forall a, \forall b , ...: \TrX{}{\rho_{\Aa\Bb\ldots} (A_{a|x} \ktensor B_{b|y} \ktensor ... )} \\
         = \TrX{}{M_{\Aa\Bb...}(\rho_{\Aa\Bb\ldots}) (E_{\Aa}(A_{a|x}) \ktensor E_{\Bb}(B_{b|y}) \ktensor ...)} \:.
    \end{multline}
\end{restatable}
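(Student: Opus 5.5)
The plan is to reduce the statement to \cref{prop:QTisRNQT} and then to show that, on the effect side, the RNQT composition $\rtensordual$ can be traded for the Kronecker product $\ktensor$ without changing any Born-rule value, provided the state lies in the image of $M_{\Aa\Bb\ldots}$. On the state side nothing needs to be done. There is only one source, so the state is never decomposed into a product, and $M_{\Aa\Bb\ldots}(\rho_{\Aa\Bb\ldots})$ is already a valid RQT state: it is real, positive semi-definite, and of unit trace by \cref{prop:QTisRNQT}. The local maps $E_\Xx$ are the single-party maps of \cref{prop:singleRQTissingleQT}. Hence each $E_\Xx(X_{\alpha|\mu})$ is a valid real effect commuting with the local $J_\Xx$, and each $\{E_\Xx(X_{\alpha|\mu})\}_\alpha$ is a valid RQT POVM. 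It follows that $E_\Aa(A_{a|x})\ktensor E_\Bb(B_{b|y})\ktensor\cdots$ is a valid RQT effect, and the family is a valid joint RQT POVM.

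The main step is the identity
\begin{equation}
\TrX{}{M_{\Aa\Bb\ldots}(\rho)\,\bigl(E_\Aa(A)\rtensordual E_\Bb(B)\rtensordual\cdots\bigr)} = \TrX{}{M_{\Aa\Bb\ldots}(\rho)\,\bigl(E_\Aa(A)\ktensor E_\Bb(B)\ktensor\cdots\bigr)}
\end{equation}
for all QT states $\rho$ and local effects. Combined with \cref{eq:Kronecker_to_Rprod} and the trace-preservation property of \cref{prop:QTisRNQT}, this identity gives the claimed equality of distributions. To prove it, I would use the explicit representation of \cref{sec:RNQT}. Each local $E_\Xx(X)$ splits as $\mathrm{Re}(X)$ and $\mathrm{Im}(X)$ components tensored with $I$ and $J$ on the local ``real unit'' rebit. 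In that form the $R$-product of effects should equal the Kronecker product plus correction terms of the type $(\cdots)\ktensor(I\ktensor I + J\ktensor J)$ acting on pairs of real-unit rebits, or more precisely a projection of the Kronecker product onto the subspace where all local $J$'s coincide. The image of $M_{\Aa\Bb\ldots}$ is supported (as a real operator) on the states invariant under this identification of the $J$'s. On that support the correction terms integrate to zero, or the projection acts trivially, by an argument like the one in \cref{prop:independent_sources}: a real symmetric matrix is orthogonal to an antisymmetric $J$.

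The main obstacle is the precise bookkeeping of the $2^n$ real-unit rebits, one per party. It must be shown that $M_{\Aa\Bb\ldots}(\rho)$ is invariant under, or annihilated by, the difference between the two compositions. For this I would verify the case $n=2$ explicitly, writing everything in the basis $\{I,J\}$ on each auxiliary rebit, and then proceed by induction on $n$. The induction uses associativity of $\rtensordual$ and the fact that grouping parties into subsets $\Xx,\Yy$ is allowed in \cref{eq:Kronecker_to_Rprod}. Real-linearity and injectivity of the combined effect map $E_\Aa\ktensor E_\Bb\ktensor\cdots$ follow because it is a Kronecker product of injective real-linear maps.
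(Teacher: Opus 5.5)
Your proposal is correct and is essentially the paper's proof. With the same choices of $M_{\Aa\Bb\ldots}$ and $E_\Xx=\Gamma$, the paper writes $M(\rho)=\I{n}_{d_\Aa d_\Bb\cdots}\,M(\rho)\,\I{n}_{d_\Aa d_\Bb\cdots}$ (\cref{lemm:special_sym_projector_n}), then moves these projectors onto the effect by cyclicity of the trace, so that $\Gamma\{A_{a|x}\}\ktensor\Gamma\{B_{b|y}\}\ktensor\cdots$ becomes the $R$-product $\overline{\Gamma}^{(n)}\{A_{a|x}\ktensor B_{b|y}\ktensor\cdots\}$, and concludes by the isometry property. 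This is your ``projection acts trivially on the support of $M(\rho)$'' step, done in one line rather than by an $n=2$ computation plus induction.

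Drop the alternative justification via orthogonality of symmetric and antisymmetric matrices, because it does not apply here. The correction $(\id-\I{2}_{d_\Aa d_\Bb})(E_\Aa\ktensor E_\Bb)$ is symmetric, and for example $\TrX{}{M(\rho)\,(\J{1}_{d_\Aa}E_\Aa\ktensor\J{1}_{d_\Bb}E_\Bb)}=-p(a,b)$ is generally nonzero. The correction vanishes only because $\I{n}$ is a projector fixing $M(\rho)$.
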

The proof is presented in \cref{sec:RNQTlocal}. As mentionned, this does not imply that the mapping verifies $E_{\Aa\Bb\dots}(A_{a|x} \ktensor B_{b|y} \ktensor ...) =  E_{\Aa}(A_{a|x}) \ktensor E_{\Bb}(B_{b|y}) \ktensor ...$, only that they are equivalent when evaluated over every QT-image state. 
But this is sufficient to conclude that the RQT local models can reproduce the predictions of their corresponding QT models. 
Thus, this proposition generalises the known result~\cite{Pal_2008,Moroder_2013} that every single-source Bell-like scenario has an RQT explanation. 

As a remark, notice that the POVMs can be more generally $m$-partite for $m\leq n$. However, one can always regroup the subsystems corresponding to the same local party in order to ensure that $m = n$. 
Also, notice that we use a matrix-size rescaling factor of 2 growing with the power of the number of parties, instead of the factor of 2 used in our previous work~\cite{RNQT}. 
A factor that grows with the number of parties might appear unreasonable; worse, it appears to make the proposition partition-dependent. While this is correct in the absolute, it is merely a formulation chosen to simplify the proof of our main statement. Using the results of our previous work (specifically by replacing the $\Rprod$-product on which this proposition is based by the isomorphic $\otimes_r$-product), one could prove a partition-independent version of it that only requires a factor of 2 between the dimensions of the RQT and QT systems. However, doing so would induce more technicalities to overcome, unnecessarily lengthening the proof. 

\subsection*{Description of multiple sources in RQT}\label{sec:methods_source_certification}

We saw that in experiments involving a single source, every model assuming QT can be explained by a model assuming RQT. But can the same be said about experiments with multiple sources? 
Renou et al. showed that in the bilocal scenario~\cite{Branciard2010}, the models assuming RQT cannot explain all distributions predicted by the corresponding models~\cite{Renou2021} \textit{if independent sources are represented by product states}. 
However, as we argued above and in the main text, their product-state independence assumption is theory-dependent.

In QT, one could use a theory-independent approach to circumvent the issue, representing the joint state of two independent systems via an operational independence rather than product-state independence (we defined these two notions in \cref{def:independent_sources}). 
As we showed in \cref{prop:independent_sources}, the same cannot be done in RQT: the product-state independence is stronger than operational independence. This implies that not all independent state-preparation procedures correspond to product states in RQT. Requiring the state of independent sources to be in a product form, therefore, amounts to restricting the set of states that the parties can prepare solely on the basis of model-dependent considerations. 

\begin{figure}[t]
    \centering
    \begin{tikzpicture}[x=.07\linewidth,y=.07\linewidth]

        \fill[setoneblue,draw=black,rounded corners=0.6cm]
            (0,0) rectangle (14,8);

        \node[align=center] at (7,7.3)
            {Set 1: All bipartite states in RQT between parties $\Xx$ and $\Yy$.};
\begin{scope}[shift={(0,-0.4)}]  
        \fill[settwopeach,draw=black,rounded corners=0.5cm]
            (0.8,1.2) rectangle (13.2,7.1);

        \node[align=center,text width=8cm] at (7,6.5)
        {\singlespacing\noindent Set 2: RQT operationally independent states.\par{\setstretch{0.5}\footnotesize States for which all local \mbox{measurements} lead to factorising probabilities.\par}};

        \fill[settwoyellow,draw=black]
            (3.8,3.4) ellipse (2.7 and 1.9);

        \node[align=center,text width=5.2cm] at (3.8,3.8)
        {\singlespacing\noindent Set 2.1: RQT product states\smallskip\newline{\centering \setstretch{0.5} $\rho_{\Xx\Yy} = \rho_\Xx \ktensor \rho_\Yy$.\par}};

        \draw[thick, dashed, red] (3.8,2.9) ellipse (3.2 and 2.4);
        \node[anchor=west, align=center, text width=6cm, text=red] at (5.5,1.1)
        {\singlespacing\noindent Set 3: `partially independent' states};

        \fill[settwogreen,draw=black]
            (10.2,3.4) ellipse (2.7 and 1.9);

        \node[align=center,text width=5.2cm] at (10.3,3.8)
        {\singlespacing\noindent Set 2.2: 
        RNQT product states\smallskip\newline {\centering \setstretch{0.5} $\rho_{\Xx\Yy} = \rho_\Xx \rtensor \rho_\Yy$.\par}};

\end{scope}
    \end{tikzpicture}
    \caption{{\bf Relations between sets of bipartite states in real quantum theory (RQT).} \Cref{prop:independent_sources} shows that in RQT, the sets 2 and 2.1 are disjoint. \Cref{prop:locallyIndistinguishable} further shows that many states are locally indistinguishable from the product states. An example of such locally equivalent non-product-yet-non-correlated states is provided by set 2.2: the image of QT product states in RNQT (set 2.2) is valid set of RQT states (within set 1) that are operationally independent (within set 2), but distinct from the set of product states (outside of set 2.1), yet locally indistinguishable from it.
    The red circle, set 3, sketches the relaxation of the product-state assumption considered in Refs.~\cite{Renou2021,Weilenmann2025}. Our argument is that the operationally motivated relaxation of set 2.1 should rather be set 2.
    }
    \label{fig:VennDiagram}
\end{figure}
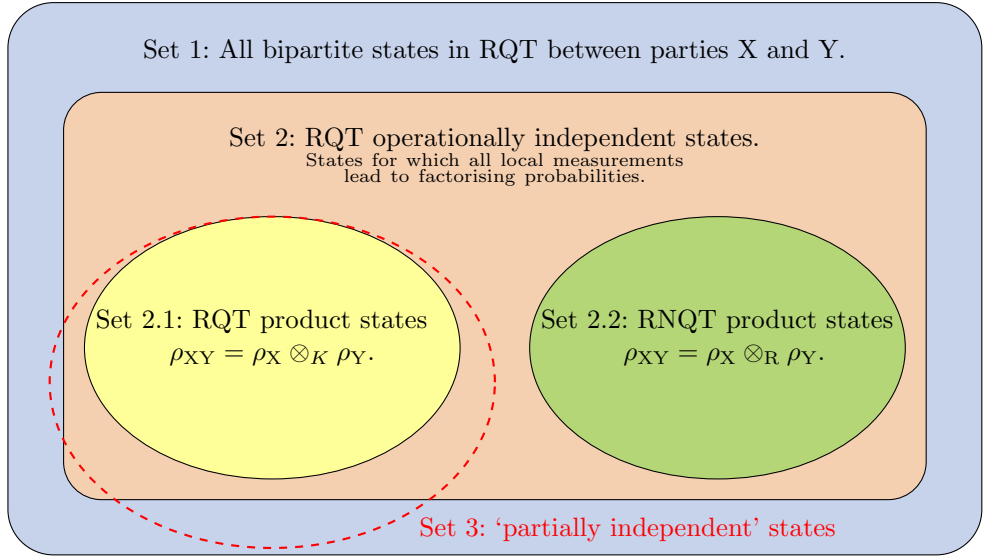

To what extent is it a problem? To begin with, it implies that the sources cannot be locally certified: in RQT, there are states that cannot be distinguished from the product states by local measurements alone. An important class of these are the operationally independent RQT states that are in the QT-image as in \cref{prop:localRQTislocalQT}. Even when they are entangled, these states must correspond to a QT product state because of \cref{prop:independent_sources}. However, the RQT product states must already correspond to the QT product states, so it implies that more than one state locally describes the same situation.
\begin{prop}\label{prop:locallyIndistinguishable}
    Let $\rho_{\Aa\Bb\dots}$ be an $n$-partite RQT state that is operationally independent as in \cref{def:independent_sources}. 
    Then, there exists $n$ single-partite RQT states $\rho_{\Aa}$, $\rho_{\Bb}$, $\dots$ such that $\rho_{\Aa\Bb\dots}$ is locally indistinguishable from $\rho_{\Aa} \ktensor \rho_{\Bb} \ktensor \dots$

    That is, for all RQT POVMs $\{A_{a|x}\}_a, \{B_{b|y}\}_b, \dots$, the following equality holds:
    \begin{equation}\label{eq:locallyIndistinguishable}
        \TrX{}{\rho_{\Aa\Bb\dots}(A_{a|x} \ktensor B_{b|y} \ktensor \dots)} = \TrX{}{(\rho_{\Aa} \ktensor \rho_{\Bb} \ktensor \dots)(A_{a|x} \ktensor B_{b|y} \ktensor \dots)}\:.
    \end{equation}
\end{prop}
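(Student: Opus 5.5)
The natural candidates for the local states are the reduced states. I will set $\rho_{\Aa} := \TrX{\Bb\dots}{\rho_{\Aa\Bb\dots}}$, and similarly for every other party. Each of these is a valid RQT state: a partial trace of a real symmetric positive semi-definite matrix with unit trace is again real, symmetric, positive semi-definite and of unit trace. The task is then to show that these marginals reproduce all local statistics when they are recombined with the Kronecker product.

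First I identify the single-party distributions that appear in \cref{def:independent_sources}. Operational independence gives conditional distributions with $p(a,b,\dots|x,y,\dots)=p(a|x)p(b|y)\cdots$. For a fixed party, say Alice, I sum this identity over the outcomes of all the other parties. Completeness of their POVMs gives $\sum_b B_{b|y}=\id$, and similarly for the others. The left-hand side becomes $\TrX{}{\rho_{\Aa\Bb\dots}(A_{a|x}\ktensor\id\ktensor\cdots)}=\TrX{}{\rho_\Aa A_{a|x}}$. The right-hand side becomes $p(a|x)\prod_{\text{others}}\sum_b p(b|y)$.

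The main subtlety sits in this step. The factors $p(b|y)$ are only assumed to exist, so they are not a priori normalised. They could for instance be rescaled by constants whose product is one. I would handle this either by arguing that each factor's sum over outcomes is a constant independent of the setting, or more simply by renormalising. Summing the full factorised identity over all outcomes gives $1=\prod_{X}\sum_{\alpha}p(\alpha|\mu)$. I can therefore rescale each factor so that it is a normalised distribution without changing the product. After this rescaling, the marginalisation above yields $p(a|x)=\TrX{}{\rho_\Aa A_{a|x}}$, and the same holds for every other party.

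Finally, I substitute these marginals into the factorised form:
\begin{equation*}
\TrX{}{\rho_{\Aa\Bb\dots}(A_{a|x}\ktensor B_{b|y}\ktensor\cdots)}=\prod_{\Xx}\TrX{}{\rho_\Xx X_{\alpha|\mu}}=\TrX{}{(\rho_\Aa\ktensor\rho_\Bb\ktensor\cdots)(A_{a|x}\ktensor B_{b|y}\ktensor\cdots)},
\end{equation*}
where the last equality is the multiplicativity of the trace under Kronecker products, as already used in the proof of \cref{prop:independent_sources}. This holds for every choice of RQT POVMs, which is exactly \cref{eq:locallyIndistinguishable}. The proof needs no real-versus-complex input. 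All of the RQT-specific content, namely that $\rho_{\Aa\Bb\dots}$ itself need not be a product state, is left untouched, as the example in \cref{eq:II_alphaJJ} illustrates.
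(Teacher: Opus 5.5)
Your proposal is correct and takes the same route as the paper. You take the reduced states as the local states, identify $p(a|x)=\mathrm{tr}[\rho_{\Aa}A_{a|x}]$ by marginalising over the other parties' outcomes, and conclude with multiplicativity of the trace under Kronecker products. Two of your points are details the paper's proof leaves implicit: the normalisation step (the per-setting normalisers multiply to one, so rescaling leaves the product unchanged), and the check that partial traces of real positive semi-definite unit-trace matrices are valid RQT states.
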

\begin{proof}
    If $\rho_{\Aa\Bb\dots}$ is operationally independent, it verifies $\TrX{}{\rho_{\Aa\Bb\dots}(A_{a|x} \ktensor B_{b|y} \ktensor \dots)} = p(a|x)p(b|y)\dots$. One can then set $\rho_{\Aa} = \TrX{\Bb\textup{C}\dots}{\rho_{\Aa\Bb\dots}},\rho_{\Bb} = \TrX{\Aa\textup{C}\dots}{\rho_{\Aa\Bb\dots}},$ etc. such that $p(a|x) = \TrX{}{\rho_{\Aa} A_{a|x}}$, $p(b|y) = \TrX{}{\rho_{\Bb} B_{b|y}}$, etc., which implies that $ p(a|x)p(b|y)\dots = \TrX{}{(\rho_{\Aa} \ktensor \rho_{\Bb} \ktensor \dots)(A_{a|x} \ktensor B_{b|y} \ktensor \dots)}  = \TrX{}{\rho_{\Aa\Bb\dots}(A_{a|x} \ktensor B_{b|y} \ktensor \dots)} $.
\end{proof}
This does not happen in locally tomographic theories like QT. In that case, \cref{eq:locallyIndistinguishable} necessarily implies $\rho_{\Aa\Bb\dots} = \rho_{\Aa} \ktensor \rho_{\Bb} \ktensor \dots$ because operational independence implies product-state independence. But for RQT, the product states might be locally indistinguishable from entangled states.

A class of such locally indistinguishable entangled states is actually given by \cref{prop:QTisRNQT}: the image of QT product states, i.e., the RNQT product states which are of the form $\rho_\Aa \rtensor \rho_\Bb \rtensor \dots$, are indistinguishable from the corresponding RQT product states. Indeed, for every RQT POVMs, $\{A_{a|x} \}_a$, \{$B_{b|y}\}$, etc., one has (see the corollaries of \cref{lemm:Rprod_contaminates}):
\begin{multline}
    \TrX{}{(\rho_{\Aa} \rtensor \rho_{\Bb} \rtensor \dots)(A_{a|x} \ktensor B_{b|y} \ktensor \dots)} \\
    = \TrX{}{(\rho_{\Aa} \ktensor \rho_{\Bb} \ktensor \dots)(A_{a|x} \ktensor B_{b|y} \ktensor \dots)}\:.
\end{multline}
Yet, the state $\rho_{\Aa} \rtensor \rho_{\Bb} \rtensor \dots$ is actually entangled with respect to the Kronecker product (the bilinear map $\rtensor$ is entangling; see the \cref{def:Rprod} for the exact form of this map). 

This proposition entails that assuming the systems have been prepared in a product state cannot be certified locally. In a local experiment, it only makes sense to require independent preparation procedures to yield operationally independent states, since being uncorrelated under every local measurement is the only experimentally accessible way to define independence to begin with\footnote{One may still remark that, using global tomography, product states are distinguishable from non-product but operationally independent ones. While this is true, it does not imply that the latter states are correlated in any observable manner; it only means they differ from the former in a way that can be revealed only by using correlated measurements. This is nonetheless an intriguing property of RQT, discussed in more detail in \Cref{sec:RQTisnotQT}, and one might use it to study a stronger notion of independence using entropic or resource considerations.}. Yet, Renou et al.'s proof \textit{requires} the state of independent sources to be a product state. This uncertifiable assumption amounts to removing from the model every non-product state in the equivalence classes implied by \cref{prop:locallyIndistinguishable}. But if a state is locally indistinguishable from a product state, why should it not correspond to a valid preparation procedure of independent sources as well, since it is as uncorrelated as a product state can be?  

Aware of this limitation, the authors of Ref.~\cite{Renou2021} allowed separable but non-product states in their proof. Going further, some of these authors sought to strengthen the result by studying how allowing entangled states to correspond to independent preparation procedures narrows the gap between QT and RQT for the Renou et al. game~\cite{Weilenmann2025}. They did not realise that some RQT entangled states---the operationally independent ones---correspond to uncorrelated systems and thus verify the heuristics of independent preparation procedures. 
These are precisely the states of independent systems needed to show (using our proof technique) that the RQT models can always explain the prediction of the QT models, as we will prove in the next section. What the Renou et al. result~\cite{Renou2021} actually showed, therefore, is that in the RQT model of entanglement swapping scenarios, the optimal state to saturate their bilocal game is not of the Kronecker-product form. The only conclusion that can be drawn from the experimental data~\cite{Li2022,Lancaster2025}, therefore, is that the RQT description of the state in which the system started the experiment was not a product state.

\subsection*{Multiple-source experiments have an RQT description}
As announced, RQT models with operationally independent sources can explain QT models. Our result holds for every possible network scenario, which we will call `multilocal' as a shorthand for `local experiments with multiple sources'. Since single-source networks do not require defining independent sources, this result recovers the local and single-partite cases previously mentioned as special cases.
\begin{restatable}[Multilocal QT models are sub-models of multilocal RQT models.]{prop}{MulitilocalRQTisMultilocalQT}\label{prop:MulitilocalRQTisMultilocalQT}
    A multipartite QT model involving $n$ parties, $m\leq n$ measurements, and $p\leq n$ operationally independent sources, can always be embedded into an RQT model of $2^n$-th the matrix size.

    Specifically, let $\mathcal{N}:=\{A,B,...\}$ be the set of parties, $\abs{\mathcal{N}} = n$; let $\mathcal{M} = \{M_1,M_2,..., M_m\}$ be the set of subsets of $\mathcal{N}$ representing how the parties are regrouped to perform the $m$ measurements (i.e., $M_i \subseteq \mathcal{N}$ and $\bigcup_{i=1}^m M_i = \mathcal{N} $); and let the sources be represented by a joint state $\rho_{\Aa\Bb\ldots}$ operationally independent w.r.t. the given partitionning into $p$ independent subsets of the parties.
    Then, there exists a pair of map $(M_{\Aa\Bb\dots}, E_{\Aa\Bb\dots})$ from $ \cc^{d_Ad_B ... \times d_Ad_B...}$ to $\rr^{2d_A2d_B ...\times 2d_A2d_B...}$ such that $M_{\Aa\Bb\dots}$ injectively sends the QT state to an RQT state while preserving operational independence, such that  $E_{\Aa\Bb\dots}$ injectively sends the QT POVMs to RQT POVMs while preserving locality (i.e., $E_{\Aa\Bb\dots} = E_{M_1} \ktensor E_{M_2} \ktensor \dots \ktensor E_{M_m}$), and such that the pair of map preserves all inner products between any states and effects.
\end{restatable}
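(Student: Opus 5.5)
The plan is to use RNQT as an intermediate model, in the same way as \cref{prop:localRQTislocalQT}, and to handle sources and measurements separately. Since the sources are operationally independent in QT, \cref{prop:independent_sources} lets us write the QT source state as a Kronecker product $\rho_{\Aa\Bb\ldots} = \sigma_{S_1}\ktensor\sigma_{S_2}\ktensor\dots\ktensor\sigma_{S_p}$ over the $p$ source blocks. Each $\sigma_{S_j}$ may be an arbitrary (entangled) state of the parties it serves.

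The first step is to apply \cref{prop:QTisRNQT} and define
\[
  M_{\Aa\Bb\dots}(\rho_{\Aa\Bb\ldots}) := M_{S_1}(\sigma_{S_1})\rtensor\dots\rtensor M_{S_p}(\sigma_{S_p}).
\]
This is a valid RQT state, and it is the RNQT image of the QT state. For the effects, the joint QT effect factorises along the measurement partition, $F = F_{M_1}\ktensor\dots\ktensor F_{M_m}$. By \cref{prop:QTisRNQT} its RNQT image is $E_{M_1}(F_{M_1})\rtensordual\dots\rtensordual E_{M_m}(F_{M_m})$. The Born rule is preserved at this stage by \cref{prop:QTisRNQT}, since the map is an embedding of QT into RNQT, including the composition rules.

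The second step replaces $\rtensordual$ by $\ktensor$ on the effect side inside the trace. The tool is the identity used in \cref{prop:localRQTislocalQT} (the corollaries of \cref{lemm:Rprod_contaminates}): on RNQT-image states, $\Tr[\omega\,(E_X\rtensordual E_Y)] = \Tr[\omega\,(E_X\ktensor E_Y)]$ for $J$-commuting local effects. Applied repeatedly over the measurement blocks, this gives $E_{\Aa\Bb\dots}=E_{M_1}\ktensor\dots\ktensor E_{M_m}$, so locality is preserved. Operational independence of the RQT state follows from the multipartite analogue of the displayed identity after \cref{prop:locallyIndistinguishable}. For RQT local effects $X,Y,\dots$ on the source blocks, $\Tr[(\omega_1\rtensor\omega_2\rtensor\dots)(X\ktensor Y\ktensor\dots)]$ equals the corresponding Kronecker-product expression, and this factorises. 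Note that this RQT state is generally not a Kronecker product, which is exactly why only operational independence can be claimed. Injectivity is inherited from \cref{prop:QTisRNQT}, and preservation of inner products follows by chaining the two equalities.

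The main obstacle is the \emph{mismatch between the two partitions}. The state is an $\rtensor$-product over the source blocks, while the effect is a $\ktensor$-product over the measurement blocks, and these blocks cross. For example, party B$_1$B$_2$ in the bilocal scenario measures jointly across two sources. The lemma therefore cannot be invoked with aligned bipartitions. The approach is to refine everything to the level of individual parties. Because $\rtensor$ is associative and compatible with \cref{eq:Kronecker_to_Rprod}, $M_{S_j}(\sigma_{S_j})$ decomposes via real-linear affine expansions of $\sigma_{S_j}$ over single-party operators into sums of $\rtensor$-products of single-party $J$-commuting operators. The whole state is then a linear combination of party-wise $\rtensor$-products. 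The identity is then proved party by party: for an $\rtensor$-product of $J$-commuting single-party operators contracted against a $\ktensor$-product of effects, the $J\otimes J$ correction terms in $\rtensor$ cancel against $J$-commutation. Effects that are joint on a block $M_i$ are then handled by regrouping. This last point needs care, because $E_{M_i}$ is itself an $\rtensordual$-type object internally. I would first verify the bipartite, crossing-block case (the bilocal scenario) explicitly from \cref{def:Rprod}, and then induct on the number of sources.
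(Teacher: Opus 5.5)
\textbf{Gap: the effect map is not unital on joint measurement blocks.} Your treatment of the state (operational independence via \cref{lemm:Rprod_contaminates}) and of the trace identity is fine. The problem is on the effect side. You take $E_{M_i}$ to be the effect map of \cref{prop:QTisRNQT}, i.e.\ $\overline{\Gamma}^{(k)}$ with $k=\abs{M_i}$, and this map is not unital for $k\geq 2$. By \cref{prop:gamma_n_properties}, $\overline{\Gamma}^{(k)}\{\id_d\}\cong\I{k}\ktensor\id_d$, and $\I{k}$ is a rank-$2$ projector on $\rr^{2^k}$. For example, $\I{2}=\frac{1}{2}(I\ktensor I-J\ktensor J)$ projects onto the $-1$ eigenspace of $J\ktensor J$. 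So the images $\{\overline{\Gamma}^{(k)}\{F_x\}\}_x$ of a joint QT POVM on a block of $k\geq2$ parties are individually valid effects, but they sum to $\I{k}_d$ rather than $\id_{2^kd}$. They are therefore not an RQT POVM. Swapping $\rtensordual$ for $\ktensor$ inside the Born rule does not touch completeness. Hence your claim that $E_{\Aa\Bb\dots}=E_{M_1}\ktensor\dots\ktensor E_{M_m}$ sends QT POVMs to RQT POVMs fails exactly in the case of interest, e.g.\ the joint measurement on $\textup{B}_1\textup{B}_2$ in the bilocal scenario. This is precisely the leftover modification recorded in \cref{lemm:RNQT_model_multilocal_op_indep}.

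\textbf{Missing step.} You need to replace the effect map on joint blocks by a unital one without changing any Born-rule value. The paper does this with the relocalisation map of \cref{def:relocalisation_map}, setting $E_{M_i}=\tau^{(k)}\circ\overline{\Gamma}^{(k)}$. Up to reordering this is $\id_{2^{k-1}}\ktensor\Gamma$, with the phase carried by a single party's $2\times2$ factor. It is therefore a unital, positivity-preserving $*$-homomorphism. It also satisfies $(\tau^{(k)}\circ\overline{\Gamma}^{(k)})\{F\}\,\I{k}_d=\overline{\Gamma}^{(k)}\{F\}$ by \cref{lemm:I_J_global}, so the projector $\I{n}_d$ carried by the state absorbs the difference (\cref{lemm:relocalisation}). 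Completing each POVM with a positive term proportional to $\id-\I{k}_d$, which $\I{n}_d$ annihilates, would also work.

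\textbf{The partition mismatch is not an obstacle.} The state $\omega=\frac{1}{2}\overline{\Gamma}^{(n)}\{\rho\}$ satisfies $\omega=\I{n}_d\,\omega\,\I{n}_d$ (\cref{lemm:special_sym_projector_n}). By cyclicity of the trace, this single global projector turns any Kronecker product of effects into an $\Rprod$-product, however the measurement blocks cut across the sources. The source structure is needed only for operational independence, so your party-wise expansion and induction on the number of sources are unnecessary.
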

The proof combines all the points discussed so far: it relies on showing that for every QT model, the equivalent RNQT model, given by \cref{prop:QTisRNQT}, can be turned into a sub-model of an RQT model. To do so requires combining three observations: first, that the RNQT model is a sub-model of the RQT model but with different system-composition rules. Second, that the RNQT composition rule of effects $\rtensordual$ can always be turned into a RQT composition rule $\ktensor$, like in \cref{prop:localRQTislocalQT}. And third, that the RNQT product states correspond to operationally independent RQT states, as discussed in the previous section. The technical phrasing of this intuitive proof is provided in \cref{sec:RNQTmultilocal}.

This proposition immediately implies our first result, that any network scenario with a QT model has a corresponding RQT model.
\QTnetworksareRQTnetworks*
\begin{proof}
    Since the dimension of systems is experimentally impossible to certify~\cite{Brunner_2008}, every network experiment which has a QT model can be alternatively explained by its corresponding RQT model of $2^n$ times the matrix size given by \cref{prop:MulitilocalRQTisMultilocalQT}.
\end{proof}

Naturally, the network scenarios considered so far may not represent every conceivable locality experiment, in the sense of every distributed quantum information processing task. For instance, we have not yet considered adaptive strategies over several rounds of operations, in which the subsystems are redistributed among the parties between rounds (and possibly destroyed or created). 

Fortunately, this does not change our conclusion: regardless of the observed statistics, if the state of the subsystems after each round and the linear maps representing the operations at a given round have a QT description, they also have an RQT description.
\RQTisQT*
The proof is presented in length in \cref{sec:proof}.

\bibliography{references} 






\crefalias{section}{appendix}

\newpage
\appendix
\appendixpage
\addcontentsline{toc}{part}{Appendices}

\startcontents[sections]
\printcontents[sections]{l}{1}{\setcounter{tocdepth}{2}}


\section{Real representation of quantum networks}\label{sec:RNQT}
The proofs of the main statements of this paper rely on the fact that models assuming real quantum theory (RQT) and systems of dimension $2d$ will always contain a sub-model (i.e., a subset of states and effects) that is isometrically isomorphic to a model assuming quantum theory (QT) and systems of at most dimension $d$. These QT-isomorphic submodels within RQT can be seen as models assuming an alternative theory isomorphic to QT, which we call \textbf{real number quantum theory (RNQT)}. We developed it in a previous work~\cite{RNQT}. 

To simplify the proofs in this work, we will not use the RNQT model of the larger possible dimension. Rather, for an $n$-partite RQT model, we will fix the dimension of its system to be $2d_A 2d_B 2d_C ... = 2^n d_A d_B d_C...$ and we will focus on the RNQT sub-model isomorphic to a QT model with systems of dimension $d_Ad_Bd_C...$, rather than the `biggest' possible RNQT sub-model (i.e., the one isomorphic to a QT model with systems of dimension $2^{n-1}d_Ad_Bd_C...$). The reason for considering the `small' model is that it is better behaved w.r.t. subsystems, making subsystem-wise proofs easier. 

\subsection{Single-partite models and the standard mapping}\label{sec:RNQTsingle}
We begin by recalling the `standard' mapping between complex and real matrices and reviewing its relevant properties. This is the mapping upon which the isomorphism between RNQT and RQT will be built. 
\begin{definition}[Standard mapping]\label{def:std_mapping}
    Let $I$ and $J$ be the matrices in $\rr^{2 \times 2}$ defined as $I = \left(\begin{smallmatrix}
        1 & 0 \\ 0 & 1
    \end{smallmatrix}\right)$ and $J = \left(\begin{smallmatrix}
        0 & -1 \\ 1 & 0
    \end{smallmatrix}\right)$. 
    Then, we refer to the following as the \textbf{standard (complex-to-real) mapping}:
    \begin{equation}
    \begin{aligned}
        \Gamma : \cc^{d \times d} &\rightarrow \rr^{2d \times 2d} \:, \\
        A & \mapsto \G{A} :=  I \ktensor \ReP{A} + J \ktensor \ImP{A} \:.
    \end{aligned}            
    \end{equation}
\end{definition}
Remark that the standard mapping is very common in the mathematics literature, so we skip the proofs of its properties (the reader can find them in our previous work if needed~\cite{RNQT}). 
\begin{prop}[Properties of the standard mapping]\label{prop:gamma_properties}
    The map $\Gamma$ is a $^*$-algebra homomorphism, i.e., $\forall A,B \in \cc^{d\times d}$, $\forall a,b \in \rr$
    \begin{subequations}
            \begin{gather}
                \G{a A + bB } = a\G{A} + b\G{B} \:; \\
                \G{AB} = \G{A}\G{B} \:; \\
                \G{A^\dag} = \G{A}^T \:.
            \end{gather}
    \end{subequations}
    In addition, when the vector spaces are endowed with their standard Frobenius inner products and the domain of $\Gamma$ is restricted to the self-adjoint matrices, then $\Gamma$ is
    \begin{subequations}
    \begin{enumerate}
        \item positive-(semi-)definitness-preserving,
        \begin{equation}
            A \geq 0 \quad \Rightarrow \quad \G{A} \geq 0 \:; 
        \end{equation}
        \item unital,
        \begin{equation}
            \G{\id_d} = \id_{2d} \:;
        \end{equation}
        \item trace-rescaling,
        \begin{equation}
            \TrX{}{A} = \frac{1}{2}\TrX{}{\G{A}} \:; 
        \end{equation}
        \item and isometric up to a constant, i.e.,
        \begin{equation}
            \TrX{}{A^\dag \: B} = \frac{1}{2} \TrX{}{\G{A}^T \: \G{B}} \:.
        \end{equation}
    \end{enumerate}
    \end{subequations}
\end{prop}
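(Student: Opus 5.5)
The plan is to reduce every item to the elementary algebra of the two $2\times 2$ blocks $I$ and $J$ together with the mixed-product rule of the Kronecker product, $(X\ktensor Y)(X'\ktensor Y') = XX'\ktensor YY'$. The facts about the blocks that I will use are
\begin{equation*}
J^2=-I,\qquad IJ=JI=J,\qquad J^T=-J,\qquad \TrX{}{I}=2,\qquad \TrX{}{J}=0 .
\end{equation*}
Throughout, write $A=A_r+iA_i$ and $B=B_r+iB_i$ with $A_r=\ReP{A}$, $A_i=\ImP{A}$, and similarly for $B$. The strategy is to prove the $^*$-homomorphism properties first, since the properties for self-adjoint matrices then follow from them almost immediately.

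Real linearity holds because $A\mapsto \ReP{A}$ and $A\mapsto\ImP{A}$ are real-linear and $\ktensor$ is bilinear. For multiplicativity, I expand $\G{A}\G{B}$ with the mixed-product rule:
\begin{equation*}
\G{A}\G{B} = I\ktensor A_rB_r + J\ktensor(A_rB_i + A_iB_r) + J^2\ktensor A_iB_i .
\end{equation*}
Using $J^2=-I$, this is $I\ktensor(A_rB_r-A_iB_i)+J\ktensor(A_rB_i+A_iB_r)$. These two brackets are exactly $\ReP{AB}$ and $\ImP{AB}$, so the expression equals $\G{AB}$. For the adjoint, note that $\ReP{A^\dag}=A_r^T$ and $\ImP{A^\dag}=-A_i^T$. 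On the other hand, $\G{A}^T = I\ktensor A_r^T + J^T\ktensor A_i^T = I\ktensor A_r^T - J\ktensor A_i^T$, which is $\G{A^\dag}$.

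For the second list I restrict to self-adjoint matrices. Unitality is immediate: $\ReP{\id_d}=\id_d$, $\ImP{\id_d}=0$, and $I\ktensor \id_d=\id_{2d}$. For the trace, $\TrX{}{I\ktensor X}=2\TrX{}{X}$ and $\TrX{}{J\ktensor Y}=0$. Hence $\tfrac12\TrX{}{\G{A}}=\TrX{}{A_r}=\ReP{\TrX{}{A}}$ for every $A$, and for self-adjoint $A$ this equals $\TrX{}{A}$ because the trace is real. The isometry property then comes from combining the homomorphism properties with this trace identity:
\begin{equation*}
\tfrac12\TrX{}{\G{A}^T\G{B}}=\tfrac12\TrX{}{\G{A^\dag B}}=\ReP{\TrX{}{A^\dag B}} .
\end{equation*}
For self-adjoint $A,B$, $\TrX{}{A^\dag B}=\TrX{}{AB}$ is real, so the real part can be dropped.

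For positivity, I write $A\ge 0$ as $A=C^\dag C$ (for instance $C=\sqrt A$). Multiplicativity and the adjoint rule give $\G{A}=\G{C}^T\G{C}$, which is a Gram matrix and therefore positive semidefinite; $\G{A}$ is also symmetric, since $\G{A}^T=\G{A^\dag}=\G{A}$. If $A>0$ then $C$ is invertible, and $\G{C}$ is invertible with inverse $\G{C^{-1}}$, so $\G{A}$ is positive definite. This gives the "(semi-)" version.

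There is no serious obstacle here. The only place needing care is the trace and isometry items: for general complex matrices the identities produce only real parts, so the restriction to self-adjoint matrices in the statement is exactly what makes the real parts equal the full traces.
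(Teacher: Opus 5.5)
Your proof is correct and complete. The $^*$-homomorphism identities follow from the mixed-product rule together with $J^2=-I$ and $J^T=-J$, exactly as you compute. Deriving the other items from these identities is also sound: positivity via the Gram form $\G{A}=\G{C}^T\G{C}$, with $\G{C}$ invertible whenever $C$ is, and the trace and isometry items via $\tfrac12\TrX{}{\G{X}}=\ReP{\TrX{}{X}}$.

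The paper gives no proof of this proposition. It calls the standard mapping classical and defers to the literature and the authors' earlier work, so there is no alternative route to compare; what you wrote is the standard verification.

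Your closing observation, that only real parts are reproduced outside the self-adjoint domain, is worth keeping. The same computation, using $\mathrm{tr}\,\I{n}=2$ and $\mathrm{tr}\,\J{n}=0$, shows that the trace-rescaling and isometry items of \cref{prop:gamma_n_properties} also hold only up to real parts; take $A=i\id_d$ as a counterexample. The paper states those items for arbitrary $A,B$. As far as I can see, this is harmless there, because the paper only applies them to Hermitian operators.
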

\begin{prop}\label{prop:Herm_image}
    The image of the real-linear subspace of Hermitian matrices $\HermC{n}$ through the standard mapping is the subspace of real matrices characterised by the following:
    \begin{equation}
        A \in \cc^{d \times d}: \quad A^\dag = A \iff \G{A} = \G{A}^T \:\wedge \: J_d\G{A} = \G{A}J_d \:,
    \end{equation}
    where we used the notation $J_d = J \ktensor \id_d$ \:.
\end{prop}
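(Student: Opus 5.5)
The claim is that $A=A^\dag$ holds if and only if $\G{A}$ is symmetric and commutes with $J_d$. I will prove the two directions separately, using only \cref{def:std_mapping} and the homomorphism properties of \cref{prop:gamma_properties}.

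The first step is to record two identities. Since $J$ is antisymmetric, $J^T=-J$. Moreover $\Gamma$ sends the complex scalar $i\id_d$ to $J\ktensor\id_d=J_d$, because $\ReP{i\id_d}=0$ and $\ImP{i\id_d}=\id_d$. Once these are in place, each direction is a short computation in which the hard part is keeping the index conventions straight.

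For the forward direction, assume $A=A^\dag$. Then \cref{prop:gamma_properties} gives $\G{A}^T=\G{A^\dag}=\G{A}$, so $\G{A}$ is symmetric. For commutation, note that $i\id_d$ is central in $\cc^{d\times d}$, so $(i\id_d)A=A(i\id_d)$. Multiplicativity of $\Gamma$ then yields $J_d\G{A}=\G{A}J_d$. Alternatively, this follows from the Kronecker identity $(J\ktensor\id)(I\ktensor X+J\ktensor Y)=(I\ktensor X+J\ktensor Y)(J\ktensor\id)$, which holds because $J$ commutes with both $I$ and $J$. Note that this commutation holds for every $A$, not only Hermitian ones.

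For the converse, assume $\G{A}$ is symmetric and commutes with $J_d$. Write $\G{A}=I\ktensor\ReP{A}+J\ktensor\ImP{A}$. Transposing gives
\begin{equation}
\G{A}^T = I\ktensor\ReP{A}^T - J\ktensor\ImP{A}^T .
\end{equation}
The matrices $I$ and $J$ are linearly independent in $\rr^{2\times2}$, so the map $(X,Y)\mapsto I\ktensor X+J\ktensor Y$ is injective. Comparing coefficients in $\G{A}=\G{A}^T$ therefore gives $\ReP{A}^T=\ReP{A}$ and $\ImP{A}^T=-\ImP{A}$. This is exactly $A^\dag=\ReP{A}^T-i\,\ImP{A}^T=A$.

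There is one subtlety in the converse, and it is the only point needing care. The statement implicitly quantifies over $A\in\cc^{d\times d}$, so $\G{A}$ automatically lies in the image of $\Gamma$, and the $J_d$-commutation condition is then vacuous. Its real role is to characterise the image: a real $2d\times2d$ matrix lies in $\operatorname{im}\Gamma$ if and only if it commutes with $J_d$. To prove this, write a general real matrix in $2\times2$ block form as $\sum_{k} \sigma_k\ktensor X_k$ over a basis $\{I,J,Z,X\}$ of $\rr^{2\times2}$, where $Z$ and $X$ are the symmetric Pauli-type matrices. Both $Z$ and $X$ anticommute with $J$, so commutation with $J_d$ forces $X_Z=X_X=0$. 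I will include this remark so the equivalence reads as a characterisation of the image subspace, which is how it is used later. The argument uses only linear independence and \cref{prop:gamma_properties}.
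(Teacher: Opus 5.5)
Your proof is correct, and it follows the only route the paper indicates. The paper gives no proof of this proposition: it defers the properties of $\Gamma$ to the authors' earlier work. It only remarks afterwards that $J_d=\G{i\id_d}$ acts as a complex structure and that a real matrix lies in the image of $\Gamma$ exactly when it commutes with $J_d$.

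Your argument supplies both ingredients. First, Hermiticity of $A$ is equivalent to symmetry of $\G{A}$, via $\G{A^\dag}=\G{A}^T$ and injectivity of $(X,Y)\mapsto I\ktensor X+J\ktensor Y$. Second, as you rightly point out, the $J_d$-commutation condition is vacuous for matrices already of the form $\G{A}$. Its real role is to single out $\mathrm{im}\,\Gamma$ inside $\rr^{2d\times 2d}$. Together these give that $\G{\HermC{d}}$ is exactly the set of real symmetric matrices commuting with $J_d$.

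Two cosmetic points:
\begin{itemize}
\item In the image characterisation, state explicitly that $[J,Z]=2JZ$ and $[J,X]=2JX$ are linearly independent, which holds because $J$ is invertible. This is what forces the $Z$- and $X$-coefficient blocks to vanish.
\item Rename the Pauli-type matrix $X$. It currently clashes with the coefficient blocks $X_k$ and with the $X$ in $(X,Y)\mapsto I\ktensor X+J\ktensor Y$.
\end{itemize}
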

In this case, $J_d$ is called a complex structure in $\rr^{2d\times 2d}$ since it plays the role of the scalar multiplication by $i$, $\G{i\id_d} = J_d$. For a real matrix to be in the image of the complex theory, it must be well-behaved with respect to the complex structure, meaning it should commute with it. We call these the `tetradionic' or `special' matrices, and, in particular, we defined~\cite{RNQT}
\begin{definition}[Special Symmetric matrices]\label{def:special_sym}
    The subspace of symmetric matrices in $\rr^{2d \times 2d}$ which commute with $J_d$ as in \cref{prop:Herm_image} is called \textbf{special symmetric} and noted $\SYR{d}$.
\end{definition}
This subspace is the image through $\Gs$ of the complex Hermitian matrices, $\G{\HermC{d}} = \SYR{d}$. It is straightforward to check that both are $d^2$ real vector spaces and thus in bijection. Because of the above, this bijection is furthermore an isometry of inner product spaces (w.r.t. the Frobenius inner product), so their positive (semi-)definite and trace-normalised subsets will also be in bijection. 

More generally, we call the elements of RQT that are well-behaved with the complex structure its \textbf{QT-image}. The QT-image of an RQT model with a single system of dimension $2d$ is therefore its RNQT sub-model, and thus isomorphic to a QT model with a single system of dimension $d$. 
\singleRQTissingleQT*
\begin{proof}
    Let the pair of mappings be defined from the standard mapping as 
    \begin{subequations}
        \begin{align}
            & M_A = \frac{1}{2}\Gamma \:;\\
            & E_A = \Gamma \:.
        \end{align}
    \end{subequations}
    Then, by \cref{prop:Herm_image}, the QT-image is made of all the special symmetric states and effects, i.e. the states and effects that commute with the matrix $J_{d_A} = J \ktensor \id_{d_A}$. 
    From \cref{prop:gamma_properties} $M_A$ preserves the trace and positive semi-definiteness, making it a bijection between the QT-image subset of the RQT states and the corresponding set of QT states. 
    Similarly, $E_A$ is a bijection between the QT-image subset of the RQT effects and the set of QT effects because it is positive semi-definiteness-preserving. It furthermore maps RQT POVMs to QT POVMs since $ E_A(\sum_a A_a) = \sum_a E_A(A_a)$ follows from the map being an algebra homomorphism and $E_A(\id_{2d_A}) = \id_{d_A}$ follows by unitality. 
    Finally, it also follows from \cref{prop:gamma_properties} that the pair of bijections define an isometry:
    \begin{equation}
        \TrX{}{M(\rho)E(A_{a|x})} = \frac{1}{2}\TrX{}{\Gamma\{\rho\}^T\Gamma\{A_{a|x}\}} = \TrX{}{\rho^\dag \: A_{a|x}} = \TrX{}{\rho \: A_{a|x}}\:.
    \end{equation}
\end{proof}
Simply put, the single-partite RNQT can be seen as restricting the RQT to its QT-image, i.e., the image of states and effects through $(\frac{1}{2}\Gs, \Gs )$.

\subsection{Local models and the \textit{n}-fold standard mapping}
We now extend the proof by showing that the local RNQT models are always embedded in local RQT models. By `local models', we mean the models representing network scenarios that assume a single source and several local (i.e. spacelike) measurements. The source produces a multipartite system whose parts are shared between the locally measuring parties.

\subsubsection{The \textit{n}-fold standard mapping}
When dealing with multipartite systems, the bijection from quantum theory to its real image provided by the standard mapping may cause trouble. At first glance, $\Gs$ requires an extra 2-by-2 matrix to encode the phase of a system, which one can physically interpret as a fiducial system~\cite{McKague}. But what happens when several parties are involved? For example, the state of a system shared by Alice and Bob will live in $\G{\cc^{n \times n}_A \otimes \cc^{m \times m}_B} = \rr^{2\times 2} \otimes \rr^{n \times n}_A \otimes \rr^{m \times m}_B$, raising the question of to which party the extra system $\rr^{2\times 2}$ should be associated? 
Of course, this is purely a problem of representation, which can be circumvented. Recall that the fiducial system simply encodes the (complex) phase freedom of quantum theory, so one should not attach too much ontology to it. Actually, interpreting it as a physical system shared by all parties would be equivalent to interpreting the phase freedom of (the pure-state representation of) quantum theory (using Hilbert spaces) as a physical phenomenon. 

So one can instead promote the globally shared phase-carrying fiducial system into a delocalised representation using a multipartite fiducial system~\cite{McKague}. To do so, instead of using a single $2 \times 2$ matrix for the representation of $(1, i)$, we pick a larger dimensional representation, say a 4-dimensional representation $(\I{2},\J{2}) \subset \rr^{4 \times 4}$ such that the extra system can be split evenly between Alice and Bob, $\rr^{4 \times 4} \cong \rr^{2 \times 2}_A \otimes \rr^{2 \times 2}_B$. The important property that must be required from this representation is that it is global, in the sense explained in the previous section: suppose the parties share such a bipartite element, e.g. $\J{2}_{AB}$, then the action of an element of a local 2-d representation acting on it from Alice's side, e.g. $(J_A \ktensor I_B)\J{2}_{AB}$, should be equivalent to the situation in which it is acting on Bob's side, but also if it were acting globally, e.g. $(J_A \ktensor I_B)\J{2} = (I_A \ktensor J_B)\J{2}_{AB} = \J{2}_{AB}\J{2}_{AB}$. This stabiliser-state-like representation is precisely the kind of representation reviewed in the \cref{sec:real_units}.

\begin{definition}[$n$-fold standard mapping]\label{def:std_n_mapping}
    Let $\I{n}$ and $\J{n}$ be defined as in \cref{def:I_J_n}. 
    Then, the following is called the \textbf{\textit{n}-fold standard (complex-to-real) mapping}:
    \begin{equation}\label{eq:Gamma_n}
    \begin{aligned}
        \overline{\Gamma}^{(n)} : \bigotimes_{i=1}^n \cc^{d_{A_i} \times d_{A_i}} &\rightarrow \bigotimes_{i=1}^n \rr^{2d_{A_i} \times 2d_{A_i}} \:, \\
        A & \mapsto \overline{\Gamma}^{(n)}\{A\} :\cong  \I{n} \ktensor \ReP{A} + \J{n} \ktensor \ImP{A} \:.
    \end{aligned}            
    \end{equation}
\end{definition}
We use $:\cong$ instead of $:=$ in the definition because there has been an implicit reordering of the tensor factors from $\bigotimes_{i=1}^n \rr^{2d_{A_i} \times 2d_{A_i}}$ to $\rr^{2^n \times 2^n} \otimes (\bigotimes_{i=1}^n \rr^{d_{A_i} \times d_{A_i}})$. Specifically, every phase-carrying $\rr^{2 \times 2}$ subspace has been moved to the left of the tensor factorisation in order to obtain a concise expression that is reminiscent of \cref{def:std_mapping}.

With the implicit reordering and because the matrices $\I{n}$ and $\J{n}$ are isomorphic to $I$ and $J$, it should be clear that the $n$-fold standard mapping is isomorphic to the standard mapping. Therefore, almost all of the properties proven in \Cref{prop:gamma_properties} apply to it as well. For example, $\TrX{}{\overline{\Gamma}^{(n)}\{A\}} = 2\TrX{}{A}$. The only difference between the standard mapping and its $n$-fold version is that the matrices $I$ and $J$ have been `inflated' into the larger matrices $\I{n}$ and $\J{n}$. As mentioned, this inflation amounts to delocalising the phase-carrying subsystem on $n$ subsystems, like in an encoding of a stabiliser state (or any other quantum error-correcting code). They nonetheless represent the same thing: the complex structure $(1,i)$. In that sense, the matrices $\overline{\Gamma}^{(n)}\{A\}$ will not only be compatible with a \textit{global} complex structure $\J{n}_{d_{A_1}d_{A_2}\dots d_{A_n}}$, it will also be compatible with the \textit{local} complex structure of each party, i.e. the matrices $J_{A_1,2_{A_2}\dots 2_{A_n} d_{A_1}d_{A_2}\dots d_{A_n}}, J_{A_2,2_{A_1}2_{A_3}\dots 2_{A_n} d_{A_1}d_{A_2}\dots d_{A_n}}, \dots$ (recall that the notation $J_{A_1,2_{A_2}\dots 2_{A_n} d_{A_1}d_{A_2}\dots d_{A_n}}$ means $J_{A_1} \ktensor I_{A_2} \ktensor \dots I_{A_n} \ktensor \id_{d_{A_1}} \ktensor \id_{d_{A_2}} \ktensor \dots \ktensor \id_{d_{A_n}}$).

That way, we can define the image of complex matrices representing multipartite systems while keeping each system locally accessible. All single-fold results and definitions then readily generalise to the $n$-fold case:
\begin{definition}[$n$-partite Special Symmetric matrices]\label{def:special_sym_global}
    In the $n$-fold tensor product space $\bigotimes_{i=1}^n \rr^{2d_{A_i} \times 2d_{A_i}}$, the subspace of symmetric matrices commuting with the matrix $\J{n}_{d_{A_1}d_{A_2}\dots d_{A_n}} \cong \J{n} \ktensor \id_{d_{A_1}} \ktensor \id_{d_{A_2}} \ktensor \dots \ktensor \id_{d_{A_n}}$ is called $n$-partite special symmetric.
\end{definition}
\begin{prop}[Properties of the $n$-fold standard mapping]\label{prop:gamma_n_properties}
    The map $ \overline{\Gamma}^{(n)}$ is a $^*$-algebra homomorphism. 
    In addition, when the vector spaces are endowed with their standard Frobenius inner products, then for any $ A,B \in \bigotimes_{i=1}^n \cc^{d_i \times d_i}$,  $ \overline{\Gamma}^{(n)}$ is
    \begin{subequations}
    \begin{enumerate}
        \item positive-semi-definitness-preserving,
        \begin{equation}
            A \geq 0 \quad \Rightarrow \quad \overline{\Gamma}^{(n)}\{A\} \geq 0 \:; 
        \end{equation}
        \item isomorphic to a unital map ($d = \prod_i d_i$),
        \begin{equation}
            \overline{\Gamma}^{(n)}\{\id_{d}\} \:\cong\: \I{n} \ktensor \id_{d} \:;
        \end{equation}
        \item trace-rescaling,
        \begin{equation}
            \TrX{}{A} = \frac{1}{2}\TrX{}{\overline{\Gamma}^{(n)}\{A\}} \:; 
        \end{equation}
        \item and isometric up to a constant, i.e.,
        \begin{equation}
            \TrX{}{A^\dag \: B} = \frac{1}{2} \TrX{}{\overline{\Gamma}^{(n)}\{A\}^T \: \overline{\Gamma}^{(n)}\{B\}} \:.
        \end{equation}
    \end{enumerate}
    \end{subequations}
\end{prop}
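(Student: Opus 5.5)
The plan is to reduce everything to the single-fold case (\cref{prop:gamma_properties}) through the isomorphism noted after \cref{def:std_n_mapping}. The pair $(\I{n},\J{n})$ from \cref{def:I_J_n} is a real representation of $(1,i)$: $\I{n}$ is an orthogonal projector or identity-like symmetric matrix, $\J{n}$ is antisymmetric, and they satisfy $\I{n}\J{n}=\J{n}\I{n}=\J{n}$ and $(\J{n})^2=-\I{n}$. Equivalently, up to an orthogonal change of basis, $\I{n}\cong I\ktensor P$ and $\J{n}\cong J\ktensor P$ for some projector $P$ whose trace I will determine from \cref{def:I_J_n}. Once these relations are recorded, every property becomes a direct rerun of the $2\times 2$ computation.

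The first step is the homomorphism property. Real-linearity is immediate, since $\ReP{\cdot}$ and $\ImP{\cdot}$ are real-linear. For multiplicativity, I write $A=A_r+iA_i$ and $B=B_r+iB_i$ and expand
\begin{equation*}
\overline{\Gamma}^{(n)}\{A\}\overline{\Gamma}^{(n)}\{B\} = (\I{n})^2\ktensor A_rB_r + \I{n}\J{n}\ktensor A_rB_i + \J{n}\I{n}\ktensor A_iB_r + (\J{n})^2\ktensor A_iB_i .
\end{equation*}
Using the algebraic relations above, this equals $\I{n}\ktensor(A_rB_r-A_iB_i)+\J{n}\ktensor(A_rB_i+A_iB_r)$, which is exactly $\overline{\Gamma}^{(n)}\{AB\}$. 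The $*$-property follows from $(\I{n})^T=\I{n}$, $(\J{n})^T=-\J{n}$, together with $\ReP{A^\dag}=A_r^T$ and $\ImP{A^\dag}=-A_i^T$. Unitality up to isomorphism is just the statement $\overline{\Gamma}^{(n)}\{\id_d\}\cong\I{n}\ktensor\id_d$.

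The second step is positivity. If $A\geq0$, I write $A=B^\dag B$, so that $\overline{\Gamma}^{(n)}\{A\}=\overline{\Gamma}^{(n)}\{B\}^T\overline{\Gamma}^{(n)}\{B\}\geq0$ by the first step. The third step is the trace. Since $\TrX{}{\J{n}}=0$ by antisymmetry,
\begin{equation*}
\TrX{}{\overline{\Gamma}^{(n)}\{A\}}=\TrX{}{\I{n}}\,\TrX{}{A_r},
\end{equation*}
and $\TrX{}{A}=\TrX{}{A_r}$ when $A$ is Hermitian. For general $A$, I take $\TrX{}{A}$ to mean the real part, which is what the single-fold statement implicitly uses. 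The isometry then follows from trace-rescaling applied to $A^\dag B$ together with multiplicativity.

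The main obstacle is the normalisation. The stated factor $\tfrac12$ requires $\TrX{}{\I{n}}=2$, so that $\I{n}$ is a rank-2 projector inside $\rr^{2^n\times 2^n}$ (equivalently $\TrX{}{P}=1$). If $\I{n}$ were the full identity, the factor would instead be $2^{-n}$. I will therefore check the normalisation of \cref{def:I_J_n} carefully; presumably $\I{n}$ is the projector onto the two-dimensional real span that carries the delocalised phase qubit. Positivity and the homomorphism property do not depend on this constant, only on the relations $\I{n}\J{n}=\J{n}$ and $(\J{n})^2=-\I{n}$, so that part of the argument is robust.
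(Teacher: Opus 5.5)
Your proposal is correct and follows essentially the paper's route: the paper simply asserts that $(\I{n},\J{n})$ satisfy the same algebra as $(I,J)$ and transfers \cref{prop:gamma_properties}, and your explicit expansion (using $\I{n}\J{n}=\J{n}$, $(\J{n})^2=-\I{n}$ and the (anti)symmetry of $\I{n}$, $\J{n}$), the $B^\dag B$ positivity argument, and your caveat that items 3 and 4 hold only for Hermitian arguments (or for real parts of the traces) are all sound. The normalisation you left open follows immediately from \cref{def:I_J_n}: $\TrX{}{\I{n}}=\tfrac12\big(2\,\TrX{}{\I{n-1}}-\TrX{}{\J{n-1}}\,\TrX{}{J}\big)=\TrX{}{\I{n-1}}=\dots=\TrX{}{I}=2$, so $\I{n}$ is a rank-2 projector and the factor $\tfrac12$ is correct.
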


We shall now see how the $n$-fold mapping is better suited to deal with the composition of multiple systems. To do so, we need to address the tensor-product decomposition of the real image of an $n$-partite complex matrix.

\subsubsection{Real representation of the complex Kronecker product}

The $n$-fold standard mapping is a way to apply the standard mapping to multipartite systems such that the decomposition of the space into tensor factors is respected. In the following, it is accordingly assumed that every space has a preferred tensor factorization (given by the model), so that any matrix $A \in \bigotimes_{i=1}^n \kk^{d_{A_i} \times d_{A_i}}$ is known to be $n$-partite and the dimension $d_{A_i}$ of each of its $n$ factors is known. Also, each locally-defined space of matrices over the reals will be assumed to contain an image of a complex space through the standard mapping, therefore every space of real matrices will be assumed to be of even dimension. There is no loss of generality in doing so, as no pre-requisite but finiteness was taken on the dimensions of the real spaces. Limiting our analysis to even-dimensional spaces, and, a fortiori, to only the image of complex spaces in their corresponding real spaces, is enough to prove everything we need.

In the following, we will encounter expressions involving the Kronecker product of $k$-fold standard mappings with $j$-fold ones. As we already did in the previous section, in order to avoid clustered equations with tensor-factor swaps everywhere, we will assume that all such swaps are implicit. In particular, we will take as a convention to always gather the $\rr^{2\times 2}$ subsystems encoding the phase onto the left of expressions. For example, an equation like
\begin{equation}
    \begin{aligned}
        &\overline{\Gamma}^{(j)}\{A\} \ktensor \overline{\Gamma}^{(k)}\{B\}\\
        & =\left( (\I{k} \ktensor \ReP{A} + \J{k} \ktensor \ImP{A}) \ktensor (\I{j} \ktensor \ReP{B} + \J{j} \ktensor \ImP{B} \right)\\
        & = \big( \I{k} \ktensor \I{j} \ktensor  \ReP{A} \ktensor \ReP{B} + \I{k} \ktensor \J{j}  \ktensor \ReP{A} \ktensor \ImP{B} \\
        & \qquad + \J{k} \ktensor \I{j} \ktensor \ImP{A} \ktensor \ReP{B} + \J{k} \ktensor \J{j}  \ktensor \ImP{A} \ktensor \ImP{B}\big)
    \end{aligned}
\end{equation}
will be assumed to hold, although the second equality is only true up to the permutation of the tensor factors that allowed us to bring the $\I{j}$ and $\J{j}$ matrices to the left.
    
In our previous work~\cite{RNQT}, we remarked that since $(\I{n})^2 =\I{n}$, it is a projector. This is precisely the projector sending to the quotient space of matrices global with respect to $(\I{n},\J{n})$. In other words, since $\I{n}$ is global w.r.t. any $(\I{k},\J{k})$, $k\leq n$, then $\I{n} X \I{n}$ will also be global; by the adjoint action of $\I{n}$, the matrix $X$ has been projected into the space of all matrices global w.r.t. local representation of $(1,i)$. 
For $n$-fold special symmetric matrices, this globality of the representation is encoded by the following lemma:
\begin{lemma}\label{lemm:special_sym_projector_n}
    Let $\widetilde{A} \in \bigotimes_{i=1}^n \rr^{2d_{A_i} \times 2d_{A_i}} \cong \rr^{2^n d_A \times 2^n d_A}, d_A = \prod_i d_{A_i}$ be a $n$-fold special symmetric matrix, then the following holds (up to tensor-factor permutations):
    \begin{subequations}
    \begin{enumerate}
        \item 
        \begin{equation}
            \widetilde{A} = \I{n}_{d_A} \:\widetilde{A} = \widetilde{A}\: \I{n}_{d_A} = \I{n}_{d_A} \:\widetilde{A}\: \I{n}_{d_A}\:;
        \end{equation}
        \item $\forall k \leq n,$
        \begin{equation}
             (\J{k} \ktensor I^{\otimes (n-k)})_{d_A} \widetilde{A} = \J{n}_{d_A} \: \widetilde{A} \:.
        \end{equation}
    \end{enumerate}
    \end{subequations}
\end{lemma}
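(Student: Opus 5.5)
The plan is to reduce both claims to algebraic identities satisfied by the pair $(\I{n},\J{n})$ from \cref{def:I_J_n}, together with the characterisation of $n$-fold special symmetric matrices as the image of Hermitian matrices under $\overline{\Gamma}^{(n)}$. The $n$-partite analogue of \cref{prop:Herm_image} says that $\widetilde{A}$ is $n$-fold special symmetric if and only if $\widetilde{A} = \overline{\Gamma}^{(n)}\{A\}$ for some Hermitian $A$. I would therefore write
\begin{equation}
\widetilde{A} \cong \I{n} \ktensor \ReP{A} + \J{n} \ktensor \ImP{A}
\end{equation}
and work term by term, with all tensor-factor permutations kept implicit as the paper's convention allows. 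Note that $\I{n}_{d_A}$ and $(\J{k}\ktensor I^{\otimes(n-k)})_{d_A}$ act as the identity on the $\bigotimes_i \rr^{d_{A_i}\times d_{A_i}}$ factors. So both statements reduce to identities among $2^n\times 2^n$ matrices, multiplied against $\I{n}$ and $\J{n}$.

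For item 1, I would use the defining relations of the inflated representation from \cref{sec:real_units}: $(\I{n})^2=\I{n}$, which makes it a symmetric projector; $\I{n}\J{n}=\J{n}\I{n}=\J{n}$; and $(\J{n})^2=-\I{n}$. These relations are exactly what makes $(\I{n},\J{n})$ a faithful copy of $(1,i)$ inside the corner algebra $\I{n}\rr^{2^n\times 2^n}\I{n}$. It then follows immediately that $\I{n}\widetilde{A} = \I{n}\ReP{A}$-term plus $\J{n}\ImP{A}$-term $=\widetilde{A}$, and similarly on the right. Hence the sandwiched identity holds too.

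For item 2, I would prove the key relation
\begin{equation}
(\J{k}\ktensor I^{\otimes(n-k)})\,\I{n} = \J{n}
\end{equation}
for every $k\leq n$. This is the stabiliser-like ``globality'' property described before \cref{def:std_n_mapping}: a local complex structure acting on the delocalised element gives the same result as the global one. I would derive it by induction on $n$, using the recursive construction of $\I{n},\J{n}$ in \cref{sec:real_units}. In the base case $k=n$, the claim $\J{n}\I{n}=\J{n}$ is already known. For the step from $k$ to $k-1$, the idea is to absorb one $J$ tensor factor into the stabilised subspace. This uses the fact that $\I{n}$ projects onto a subspace on which $J_{A_i}\ktensor J_{A_j}$ acts as $-\id$ for each pair $i,j$, or an equivalent relation. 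Granted the key relation, item 1 lets me write $\widetilde{A}=\I{n}\widetilde{A}$, so that $(\J{k}\ktensor I^{\otimes(n-k)})\widetilde{A}=(\J{k}\ktensor I^{\otimes(n-k)})\I{n}\widetilde{A}=\J{n}\widetilde{A}$.

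I expect the main obstacle to be the key relation in item 2. It depends on the exact form of $\I{n}$ and $\J{n}$ in \cref{def:I_J_n}, which is not visible here. I would first try to express $\I{n}$ as a normalised sum, for instance $2^{-(n-1)}$ times a sum of products of $J$'s over pairs of parties, or as a projector onto the joint $-1$ eigenspace of the $J_{A_i}\ktensor J_{A_{i+1}}$. Then I would check directly that multiplying by any $J_{A_i}\ktensor J_{A_j}$ gives $-\I{n}$. Item 2 follows from this by peeling off the extra $J$ factors two at a time, each time using $J^2=-I$ to adjust the sign.
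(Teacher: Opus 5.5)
Your argument is essentially the paper's proof. You decompose $\widetilde A\cong\I{n}\ktensor\ReP{A}+\J{n}\ktensor\ImP{A}$ and get item~1 from $(\I{n})^2=\I{n}$ and $\I{n}\J{n}=\J{n}\I{n}=\J{n}$. You get item~2 from $(\J{k}\ktensor I^{\otimes(n-k)})\I{n}=\J{n}$.

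That key relation is not an obstacle: it is the last identity of \cref{lemm:I_J_global}. The paper proves it in one line by multiplying the grouping identity $\I{n}=\frac12(\I{k}\ktensor\I{n-k}-\J{k}\ktensor\J{n-k})$ of \cref{coro:I_J_groupping} by $\J{k}\ktensor I^{\otimes(n-k)}$. Your pair-stabiliser derivation also works, but you need to close it explicitly. The relation $(J_a\ktensor J_b)\I{n}=-\I{n}$ makes $J_c\I{n}$ independent of $c$. It also reduces each signed odd-weight term of $\J{k}$ to that same matrix, so $(\J{k}\ktensor I^{\otimes(n-k)})\I{n}=J_c\I{n}$ for every $k$. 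The case $k=n$, together with $\J{n}\I{n}=\J{n}$, then identifies it with $\J{n}$.

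Routing item~2 through $\widetilde A=\I{n}_{d_A}\widetilde A$ is a small economy. The paper multiplies out both terms and so also needs $(\J{k}\ktensor I^{\otimes(n-k)})\J{n}=(\J{n})^2$.

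There is one caveat, and it applies equally to the paper's proof. You invoke an $n$-partite analogue of \cref{prop:Herm_image} that the paper's proof also tacitly assumes. For $n=1$ it holds, but for $n\geq 2$ it is false under \cref{def:special_sym_global} as literally written. Since $\I{n}$ has rank $2$, $\J{n}$ is singular. Take $\widetilde A=(\id-\I{n})_{d_A}$: it is symmetric and commutes with $\J{n}_{d_A}$, because both products vanish. Yet $\I{n}_{d_A}\widetilde A=0\neq\widetilde A$. Also $(J\ktensor I^{\otimes(n-1)})_{d_A}\widetilde A=(J\ktensor I^{\otimes(n-1)}-\J{n})_{d_A}\neq 0=\J{n}_{d_A}\widetilde A$. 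So both items of the lemma fail for this matrix.

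The lemma, and your proof, therefore hold only for $\widetilde A\in\overline{\Gamma}^{(n)}(\HermC{d_A})$. Equivalently, add the condition $\I{n}_{d_A}\widetilde A=\widetilde A$ to the definition. This is the class the paper actually applies the lemma to, and under that reading your argument is complete. In fact, your route to item~2 only needs $\I{n}_{d_A}\widetilde A=\widetilde A$ plus the key relation.
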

\begin{proof}
    Item 1 follows from \cref{prop:gamma_n_properties}: since the $n$-fold special symmetric matrix in $\rr^{2^n d_A \times 2^n d_A}$ are isomorphic to the Hermitian matrices in $\cc^{d_A \times d_A}$, it must have the form (once again, up to tensor-factor permutations)
    \begin{equation}
        \widetilde{A} = \I{n} \ktensor \ReP{A} + \J{n} \ktensor \ImP{A} \:,
    \end{equation}
    for some Hermitian $A$ in $\cc^{d_A \times d_A}$. Then,
    \begin{multline}
        \I{n}_{d_A} \:\widetilde{A}\: \I{n}_{d_A} = (\I{n} \ktensor \id_{d_A})(\I{n} \ktensor \ReP{A} + \J{n} \ktensor \ImP{A})(\I{n} \ktensor \id_{d_A})\\
        = (\I{n})^3 \ktensor \ReP{A} + (\I{n}\J{n}\I{n}) \ktensor \ImP{A})\\
        = \I{n} \ktensor \ReP{A} + \J{n} \ktensor \ImP{A} \\
        = \widetilde{A} \:,
    \end{multline}
    and we simply used the algebraic relations between $\I{n}$ and $\J{n}$. 
    The second item is proven similarly, but this time using \cref{lemm:I_J_global}
\end{proof}

Such projection can be used to make the real Kronecker product well-behaved with respect to all local irreps of the complex structure. That way, the local phase freedom present in complex quantum theory is preserved when going to the real representation. This was one of the main technical points of our paper~\cite{RNQT}, as well as of Ref.~\cite{Barrios2025}, although both these works focused on the `phase-localising' version of the projection, i.e. one reducing the number of subsystems encoding the phase from one per party to only one for all of the parties. This composition, respectively denoted $\otimes_r$ and $\otimes_F$ in the aforementioned works, was favoured for dimensionality reasons as well as for its property of being the image of the Kronecker product through the standard mapping, $\G{A \ktensor B} = \G{A} \otimes_r \G{B}$.

Differently here, the composition rule we considered will be called the `$R$-product' or `$\Rprod$-product' and denoted $\Rprod$.
\begin{definition}[$R$-product]\label{def:Rprod}
    Let $\widetilde{A} \in \bigotimes_{i=1}^j \rr^{2d_{A_i} \times 2d_{A_i}} \cong \rr^{2^jd_A \times 2^jd_A}$ and $\widetilde{B} \in \bigotimes_{l=1}^k \rr^{2d_{B_l} \times 2d_{B_l}} \cong \rr^{2^kd_B \times 2^k d_B}$ be two real matrices, where $d_A := \prod_i d_{A_i}$ and $ d_B := \prod_l d_{B_l}$, and let $\I{j+k}_{d_Ad_B} = \I{j+k} \ktensor \id_{d_A} \ktensor\id_{d_B}$ with $\I{j+k}$ be defined as in \Cref{def:I_J_n}. Then, the $\Rprod$-product of $\widetilde{A} $ and $\widetilde{B} $ is the matrix $\widetilde{A} \Rprod \widetilde{B} \in \left(\left(\bigotimes_{i=1}^j \rr^{2d_{A_i} \times 2d_{A_i}} \right) \otimes \left(\bigotimes_{l=1}^k \rr^{2d_{B_l} \times 2d_{B_l}}\right)\right) \cong \rr^{2^{j+k}d_Ad_B \times 2^{j+k}d_Ad_B}$ defined as
    \begin{equation}
    \begin{gathered}\label{eq:Rprod}
        \forall \widetilde{A} \in \bigotimes_{i=1}^j \rr^{2d_{A_i} \times 2d_{A_i}}, \: \forall \widetilde{B} \in \bigotimes_{l=1}^k \rr^{2d_{B_l} \times 2d_{B_l}},\\
         (\widetilde{A} \Rprod \widetilde{B}) := \I{j+k}_{d_Ad_B}(\widetilde{A} \ktensor \widetilde{B})\I{j+k}_{d_Ad_B} \:.
    \end{gathered}
    \end{equation}
\end{definition} 
It should be evident that, like the $\ktensor$ and the $\otimes_r$, this is a real-bilinear and associative composition rule. 
Similarly to how the $\otimes_r$-product is the image of the complex Kronecker product through the standard mapping $\Gamma$, we can show that the $\Rprod$-product is its image through the $n$-fold standard mapping. Indeed, the $\Rprod$ reduces to $\otimes_r$ when $n=1$ and notice that its definition implies 
\begin{subequations}
    \begin{align}
        & \I{j+k}_{d_Ad_B} = \I{j}_{d_A} \Rprod \I{k}_{d_B} = - \J{j}_{d_A} \Rprod\J{k}_{d_B}\:;\\
        & \J{j+k}_{d_Ad_B} = \I{j}_{d_A} \Rprod \J{k}_{d_B} = \J{j}_{d_A} \Rprod\I{k}_{d_B} \:.
    \end{align}
\end{subequations}
From there, it is direct to show that the $R$-product is the image through the $n$-fold standard mapping of the complex Kronecker product.
\begin{prop}\label{prop:RprodisKprod}
    The $R$-product is the image of the complex Kronecker product through the $\overline{\Gamma}^{(n)}$ map. In symbols:
    \begin{equation}
    \begin{gathered}
        \forall j,k, \quad \forall A \in \bigotimes_{i=1}^j \cc^{d_{A_i} \times d_{A_i}}, \: \forall B \in \bigotimes_{l=1}^k \cc^{d_{B_l} \times d_{B_l}}, \\  
        \overline{\Gamma}^{(j+k)}\{A \ktensor B\} \cong \overline{\Gamma}^{(j)}\{A\} \Rprod \overline{\Gamma}^{(k)}\{B\} \:.
    \end{gathered}
    \end{equation}
\end{prop}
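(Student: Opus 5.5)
Expand both sides in real and imaginary parts and compare term by term. By \cref{def:std_n_mapping}, write $A = A_r + iA_i$ and $B = B_r + iB_i$, with $A_r=\ReP{A}$, $A_i=\ImP{A}$, and similarly for $B$. Then
\begin{equation}
A\ktensor B = (A_r\ktensor B_r - A_i\ktensor B_i) + i\,(A_r\ktensor B_i + A_i\ktensor B_r).
\end{equation}
Applying $\overline{\Gamma}^{(j+k)}$, and using its real-linearity, gives
\begin{equation}
\overline{\Gamma}^{(j+k)}\{A\ktensor B\} \cong \I{j+k}\ktensor(A_r\ktensor B_r - A_i\ktensor B_i) + \J{j+k}\ktensor(A_r\ktensor B_i + A_i\ktensor B_r).
\end{equation}
Throughout, we use the paper's convention that all phase-carrying $\rr^{2\times2}$ factors are gathered on the left.

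For the right-hand side, expand $\overline{\Gamma}^{(j)}\{A\}\ktensor\overline{\Gamma}^{(k)}\{B\}$ into the four terms already displayed in the paper:
\begin{equation}
\I{j}\ktensor\I{k}\ktensor A_r\ktensor B_r,\quad \I{j}\ktensor\J{k}\ktensor A_r\ktensor B_i,\quad \J{j}\ktensor\I{k}\ktensor A_i\ktensor B_r,\quad \J{j}\ktensor\J{k}\ktensor A_i\ktensor B_i.
\end{equation}
Then conjugate by $\I{j+k}_{d_Ad_B} = \I{j+k}\ktensor\id_{d_A}\ktensor\id_{d_B}$. This projector acts trivially on the $A$ and $B$ blocks, so the conjugation factors through the phase part. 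Each term becomes $\I{j+k}(X\ktensor Y)\I{j+k}\ktensor(\text{real block})$, with $X\in\{\I{j},\J{j}\}$ and $Y\in\{\I{k},\J{k}\}$. By definition, this phase part is $X\Rprod Y$ at the level of the $2^{j+k}$-dimensional phase factors.

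Next, invoke the identities stated just before the proposition:
\begin{equation}
\I{j}\Rprod\I{k}=\I{j+k},\qquad \J{j}\Rprod\J{k}=-\I{j+k},\qquad \I{j}\Rprod\J{k}=\J{j}\Rprod\I{k}=\J{j+k}.
\end{equation}
Substituting these, the four terms recombine into exactly $\I{j+k}\ktensor(A_r\ktensor B_r - A_i\ktensor B_i)+\J{j+k}\ktensor(A_r\ktensor B_i+A_i\ktensor B_r)$, which is the left-hand side.

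The main obstacle is justifying these identities, which the paper asserts follow from the definition. Concretely, we need $\I{j+k}(\I{j}\ktensor\I{k})\I{j+k}=\I{j+k}$, $\I{j+k}(\J{j}\ktensor\J{k})\I{j+k}=-\I{j+k}$, and the mixed products equal to $\J{j+k}$. These rely on the structure of $(\I{n},\J{n})$ from \cref{def:I_J_n} and on the globality property \cref{lemm:I_J_global}. In particular, $\I{j+k}$ is a projector that absorbs $\I{j}\ktensor\I{k}$, and local copies of $\J{}$ act identically to $\J{j+k}$ on its range, so that $\I{j+k}(\J{j}\ktensor\I{k})=\J{j+k}$ and $(\J{j+k})^2=-\I{j+k}$.

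I would verify these from the explicit forms in \cref{sec:real_units}, checking that $\I{j+k}$ absorbs the tensor product of the smaller projectors, i.e.\ $\I{j+k}(\I{j}\ktensor\I{k})=\I{j+k}$. The remaining manipulations, namely real-bilinearity and the tensor-factor reorderings, are routine. Care is needed only to apply the same implicit permutation consistently on both sides, which is why the statement uses $\cong$.
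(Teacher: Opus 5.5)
Your proof is correct and is essentially the paper's own argument: you expand both sides into real and imaginary parts, conjugate the four Kronecker terms by $\I{j+k}_{d_Ad_B}$, and recombine them via $\I{j}\Rprod\I{k}=-\J{j}\Rprod\J{k}=\I{j+k}$ and $\I{j}\Rprod\J{k}=\J{j}\Rprod\I{k}=\J{j+k}$. Your plan for justifying these identities, which the paper merely asserts, is also sound: each follows in one line from \cref{coro:I_J_groupping}, $\I{j+k}=\tfrac{1}{2}(\I{j}\ktensor\I{k}-\J{j}\ktensor\J{k})$, together with $(\I{m})^2=\I{m}=-(\J{m})^2$ and $\I{m}\J{m}=\J{m}$.
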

\begin{proof}
Let $d_A = \prod_i d_{A_i}$ and $d_B = \prod_i {d_{B_i}}$, then:
    \begin{multline}
        \overline{\Gamma}^{(j)}\{A\} \Rprod \overline{\Gamma}^{(k)}\{B\} = \I{j+k}_{d_Ad_B}\left( \overline{\Gamma}^{(j)}\{A\} \ktensor \overline{\Gamma}^{(k)}\{B\} \right) \I{j+k}_{d_Ad_B}\\
        = \I{j+k}_{d_Ad_B}\left( (\I{k} \ktensor \ReP{A} + \J{k} \ktensor \ImP{A}) \ktensor (\I{j} \ktensor \ReP{B} + \J{j} \ktensor \ImP{B} \right) \I{j+k}_{d_Ad_B}\\
        = \I{j+k}_{d_Ad_B}\big( (\I{k} \ktensor \ReP{A} \ktensor \I{j} \ktensor \ReP{B} + \J{k} \ktensor \ImP{A} \ktensor \J{j} \ktensor \ImP{B})
        \\+ ( \I{k} \ktensor \ReP{A} \ktensor  \J{j} \ktensor \ImP{B}  +  \J{k} \ktensor \ImP{A} \ktensor \I{j} \ktensor \ReP{B})\big) \I{j+k}_{d_Ad_B}\\
        = \big(\I{j+k} \ktensor (\ReP{A} \ktensor \ReP{B} - \ImP{A} \ktensor \ImP{B})\big) \I{j+k}_{d_Ad_B}
        \\+ \big( \J{j+k} \ktensor (\ReP{A} \ktensor \ImP{B} +  \ImP{A} \ktensor \ReP{B})\big) \I{j+k}_{d_Ad_B}\\
        =\I{j+k} \ktensor \ReP{A \ktensor B} + \J{j+k} \ktensor \ImP{A \ktensor B}\\
        = \overline{\Gamma}^{(j+k)}\{A \ktensor B\} \:.
    \end{multline}
    In the above, some re-ordering of factors was necessary to pass from one step to the next.
\end{proof}
As first shown in Ref.~\cite{RNQT}, it entails that every property of the Kronecker product is true for the $R$-product when restricted to real matrices that have a preimage through the standard mapping, and most of its properties remain true in the general case. The following corollary summarises the ones we need.
\begin{corollary} \label{coro:RNQT_cross_norm}
    When acting on $n$-fold special symmetric matrices, the $R$-product obeys the mixed-product rule (or interchange law) and commutes with the trace up to a factor of 2:
    \begin{subequations}
        \begin{gather}
            \forall j,k \leq n:\: j+k =n, \quad \forall A, A' \in \bigotimes_{i=1}^j \cc^{d_{A_i} \times d_{A_i}}, \: \forall B,B' \in \bigotimes_{l=1}^k \cc^{d_{B_l} \times d_{B_l}}, \notag \\
            \begin{multlined}
                (\overline{\Gamma}^{(j)}\{A\} \Rprod \overline{\Gamma}^{(k)}\{B\})(\overline{\Gamma}^{(j)}\{A'\} \Rprod \overline{\Gamma}^{(k)}\{B'\} ) \\= (\overline{\Gamma}^{(j)}\{A\}\overline{\Gamma}^{(j)}\{A'\})\Rprod (\overline{\Gamma}^{(k)}\{B\}\overline{\Gamma}^{(k)}\{B'\} ) \:;
            \end{multlined}
            \label{eq:mixed_product}\\
            \TrX{}{\overline{\Gamma}^{(j)}\{A\} \Rprod \: \overline{\Gamma}^{(k)}\{B\}} = \frac{1}{2} \TrX{}{\overline{\Gamma}^{(j)}\{A\}}\TrX{}{\overline{\Gamma}^{(k)}\{B\}}\:.
            \label{eq:cross_norm}
        \end{gather}
    \end{subequations}
\end{corollary}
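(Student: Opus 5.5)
The plan is to reduce both identities to the Kronecker-product statements in QT by pulling everything back through the $n$-fold standard mapping, using \cref{prop:RprodisKprod} together with the $^*$-homomorphism and trace properties of \cref{prop:gamma_n_properties}. Write $\widetilde{A} = \overline{\Gamma}^{(j)}\{A\}$ and $\widetilde{B} = \overline{\Gamma}^{(k)}\{B\}$, and similarly for the primed matrices.

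For the mixed-product rule \cref{eq:mixed_product}, the chain of equalities is as follows.
\begin{itemize}
\item \Cref{prop:RprodisKprod} identifies $\widetilde{A}\Rprod\widetilde{B}$ with $\overline{\Gamma}^{(n)}\{A\ktensor B\}$, and likewise for the primed pair.
\item Multiplicativity of $\overline{\Gamma}^{(n)}$ then gives $\overline{\Gamma}^{(n)}\{(A\ktensor B)(A'\ktensor B')\}$.
\item The complex mixed-product rule rewrites this as $\overline{\Gamma}^{(n)}\{(AA')\ktensor(BB')\}$.
\item \Cref{prop:RprodisKprod} applied again turns it into $\overline{\Gamma}^{(j)}\{AA'\}\Rprod\overline{\Gamma}^{(k)}\{BB'\}$.
\item Multiplicativity of $\overline{\Gamma}^{(j)}$ and $\overline{\Gamma}^{(k)}$ splits this as $(\widetilde{A}\widetilde{A}')\Rprod(\widetilde{B}\widetilde{B}')$, which is the claim.
\end{itemize}
The only point needing care is that the implicit tensor-factor reorderings ($\cong$) are the same permutation on both sides. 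They are, because each side lives in the same ordered space $\bigotimes_i \rr^{2d_{A_i}\times 2d_{A_i}}\otimes\bigotimes_l\rr^{2d_{B_l}\times 2d_{B_l}}$. The homomorphism property holds for arbitrary complex matrices, not only Hermitian ones, so no special-symmetry hypothesis is needed beyond $A,A',B,B'$ having preimages.

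For the trace identity \cref{eq:cross_norm}, the chain is analogous.
\begin{itemize}
\item \Cref{prop:RprodisKprod} gives $\TrX{}{\widetilde{A}\Rprod\widetilde{B}} = \TrX{}{\overline{\Gamma}^{(n)}\{A\ktensor B\}}$.
\item Trace rescaling turns this into $2\TrX{}{A\ktensor B} = 2\TrX{}{A}\TrX{}{B}$.
\item Trace rescaling for the $j$- and $k$-fold maps gives $\TrX{}{A} = \tfrac12\TrX{}{\widetilde{A}}$ and $\TrX{}{B} = \tfrac12\TrX{}{\widetilde{B}}$, so the product is $2\cdot\tfrac14\TrX{}{\widetilde{A}}\TrX{}{\widetilde{B}} = \tfrac12\TrX{}{\widetilde{A}}\TrX{}{\widetilde{B}}$.
\end{itemize}

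Trace rescaling is stated in \cref{prop:gamma_n_properties} for self-adjoint inputs. I expect the main obstacle to be checking that it holds for general complex matrices. It does, because $\TrX{}{\J{n}}=0$ ($\J{n}$ is antisymmetric) and $\TrX{}{\I{n}}=2$. Both facts follow from the construction in \cref{def:I_J_n}: $\I{n}$ is a rank-2 projector isomorphic to $I$, and $\J{n}$ is isomorphic to $J$. Hence $\TrX{}{\overline{\Gamma}^{(n)}\{A\}} = 2\,\TrX{}{\ReP{A}}$, which equals $2\,\TrX{}{A}$ when the trace is real; this covers the Hermitian case the corollary uses. For general $A$, only the real part of the trace is preserved. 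The corollary's statement concerns special symmetric matrices, so I would restrict to Hermitian $A,B$ for the trace identity. The product $A\ktensor B$ is then Hermitian and every trace involved is real.
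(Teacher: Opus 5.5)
Your proposal is correct and follows the paper's own route: the paper derives this corollary directly from \cref{prop:RprodisKprod}, together with the $^*$-homomorphism and trace-rescaling properties of $\overline{\Gamma}^{(n)}$, by pulling back to the complex Kronecker product exactly as you do. Your caveat on the trace identity is also right, because it needs real traces (for instance Hermitian preimages, which the special-symmetric hypothesis provides): for $A=i\id_{d_A}$ and $B=i\id_{d_B}$ the left-hand side equals $-2d_Ad_B$ while the right-hand side vanishes.
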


\subsubsection{From QT to RNQT models}\label{sec:RNQT_model}
With the introduction of the $R$-product, we can characterise the image of an $n$-partite QT model through the $n$-fold standard mapping.
\QTisRNQT*
\begin{proof}
    Using the pair of maps $(M,E) = (\frac{1}{2}\overline{\Gamma}^{(n)},\overline{\Gamma}^{(n)})$ and using the system-composition rules defined by
    \begin{subequations}
        \begin{align}
            &\cdot \rtensor \cdot := 2(\cdot \Rprod \cdot) \:; \\
            &\cdot \rtensordual \cdot := \cdot \Rprod \cdot \:;
        \end{align}
    \end{subequations}
    the result follows from applying the properties proven in \cref{prop:gamma_properties,prop:gamma_n_properties,prop:RprodisKprod}.
\end{proof}
With this proposition, what remains to show is that the RNQT composition rules can be safely substituted by the RQT composition rules. That way, the RNQT becomes the QT-image of the RQT, and the inclusion of a sub-model isomorphic to QT within RQT is proven.

\subsubsection{Local RNQT models are embedded in local RQT models}\label{sec:RNQTlocal}
Before moving on to the general case, let us focus on the image of an $n$-partite, single-source, local QT model through the $n$-fold standard mapping.
\begin{corollary}\label{lemm:RNQT_model_local}
    The image of the states and of the effects of the $n$-partite, single-source, local QT model of matrix dimension $d_Ad_B \dots$ through the pair of maps $(\frac{1}{2}\overline{\Gamma}^{(n)}, \overline{\Gamma}^{(n)} )$ is a sub-model of the single-partite RQT model of matrix dimension $2d_A2d_B \dots$ that has underwent the following modification:
    \begin{itemize}
        \item The effect composition rule $\ktensor$ has been replaced by $\cdot \rtensordual \cdot := \cdot \Rprod \cdot $ .
    \end{itemize}
\end{corollary}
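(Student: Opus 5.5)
The plan is to read the corollary as a direct specialisation of \cref{prop:QTisRNQT} to the single-source, local setting, and to check three things in turn: the images are valid RQT objects, the effect composition rule becomes $\Rprod$, and no state composition rule is needed.

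Fix the $n$-partite QT model with a single source state $\rho_{\Aa\Bb\dots}$ of matrix dimension $d_Ad_B\dots$ and local POVMs $\{A_{a|x}\}_a$, $\{B_{b|y}\}_b$, and so on. Apply the pair $(\frac12\overline{\Gamma}^{(n)},\overline{\Gamma}^{(n)})$ to the state as a single global object, and to each local effect with the appropriate fold: $\overline{\Gamma}^{(1)}$ on each party's factor.

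\textbf{Validity of the images.} By \cref{prop:gamma_n_properties}, $\overline{\Gamma}^{(n)}$ preserves positive semi-definiteness, rescales the trace by $2$, and sends Hermitian matrices to symmetric ones, since it is a $^*$-homomorphism with $\G{A^\dag}=\G{A}^T$. Hence $\frac12\overline{\Gamma}^{(n)}\{\rho_{\Aa\Bb\dots}\}$ is a real, symmetric, positive semi-definite, trace-one matrix of size $2d_A2d_B\cdots$, that is, a valid RQT state. Likewise each $\overline{\Gamma}^{(1)}\{A_{a|x}\}$ is a real positive semi-definite effect. By linearity and unitality, $\sum_a \overline{\Gamma}^{(1)}\{A_{a|x}\}=\id_{2d_A}$, so each local family is a valid RQT POVM.

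\textbf{Composition of effects.} A product effect $A_{a|x}\ktensor B_{b|y}\ktensor\cdots$ is mapped by $\overline{\Gamma}^{(n)}$ to $\overline{\Gamma}^{(1)}\{A_{a|x}\}\Rprod\overline{\Gamma}^{(1)}\{B_{b|y}\}\Rprod\cdots$. This follows by iterating \cref{prop:RprodisKprod} together with associativity of $\Rprod$, which is exactly the effect rule $\rtensordual$ of \cref{prop:QTisRNQT}. Since there is only one source, no composition of states ever occurs, so the state rule $\rtensor$ plays no role and only the effect rule is modified.

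\textbf{Preservation of statistics.} By the isometry property of \cref{prop:gamma_n_properties} and the Hermiticity of $\rho$,
\[
\TrX{}{\rho\,(A_{a|x}\ktensor B_{b|y}\ktensor\cdots)}=\TrX{}{\tfrac12\overline{\Gamma}^{(n)}\{\rho\}\;\overline{\Gamma}^{(n)}\{A_{a|x}\ktensor B_{b|y}\ktensor\cdots\}},
\]
and the right-hand factor is the $\Rprod$-product above. This gives a sub-model of the RQT model of dimension $2d_A2d_B\cdots$, with the effect rule replaced.

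\textbf{Main obstacle.} The delicate point is bookkeeping rather than analysis. One must confirm that the implicit tensor-factor reordering used in $\overline{\Gamma}^{(n)}$ places each party's phase-carrying $\rr^{2\times2}$ factor inside that party's local space. This is what makes $\overline{\Gamma}^{(1)}\{A_{a|x}\}$ genuinely act on party A's $2d_A$-dimensional system. I would check this against the factor-gathering convention and \cref{def:I_J_n}, using \cref{lemm:special_sym_projector_n} to absorb the projectors $\I{n}$ when iterating the $\Rprod$-products.
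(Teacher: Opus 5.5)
Your proposal is correct and follows essentially the paper's route. The paper treats this corollary as the single-source specialisation of \cref{prop:QTisRNQT}, whose proof is exactly your combination of \cref{prop:gamma_properties,prop:gamma_n_properties} (for validity, trace and the isometry) with \cref{prop:RprodisKprod} (which turns $\overline{\Gamma}^{(n)}$ of a Kronecker product of local effects into an $\Rprod$-product of their local images); the state rule is never invoked because there is only one source, and the $\Rprod$-composed joint effects then sum to $\I{n}_{d_Ad_B\dots}$ rather than the identity, which is precisely what the replaced composition rule entails.
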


The QT image with this modified effect composition rule will be referred to as the local RNQT model. With the results of the previous sections, we already have enough material to show that the local RNQT model can be seen as the ``$n$-fold QT-image'' of the RQT model. We now show that every multipartite, single-source, local RQT model has a sub-model equivalent to a local QT model of $(1/2)^n$-th the matrix dimension. As there is only one source, the only modification that should be addressed to pass from the RNQT defined by \cref{lemm:RNQT_model_local} to the QT-image of RQT is the different composition rule for effects. 

The next proposition does that by proving that the $\Rprod$- and $\ktensor$-products of effects are operationally equivalent for single-source RQ(N)T models. As it turns out, the $n$-fold QT-image is a subspace which is invariant under the projection $\I{n}_{d_Ad_B\dots}$, and the projection $\I{n}_{d_Ad_B\dots}$ `swallows smaller projections' by definition, like e.g. $\I{n}_{d_Ad_B\dots} (\I{1}_{d_A} \ktensor \I{n-1}_{d_B\dots}) \cong \I{n}_{d_Ad_B\dots}$. This implies that in an inner product of a multipartite state in the $n$-fold QT-image with the Kronecker product of any real POVMs, the Kronecker products will be turned into $R$-products.
\localRQTislocalQT*
\begin{proof}
    Similar to the RNQT case, set $M_{\Aa\Bb\dots} = 2\overline{\Gamma}^{(n)}$. By \cref{prop:gamma_n_properties}, $M_{\Aa\Bb\dots}(\rho_{\Aa\Bb\dots})$ is a valid RQT state. On the other hand, set $E_{\Xx}(X_{\alpha|\mu}) = \Gamma\{X_{\alpha|\mu}\}$ for every local party $\Xx$; by \cref{prop:gamma_properties}, this transforms the QT POVMs into RQT POVMs. 
    
    Then, it remains to show that the inner products are preserved. For every $a,b, \dots$: 
    \begin{multline}
        \TrX{}{M_{\Aa\Bb...}(\rho_{\Aa\Bb\ldots}) (E_{\Aa}(A_{a|x}) \ktensor E_{\Bb}(B_{b|y}) \ktensor ...)}\\
        = 2\TrX{}{\overline{\Gamma}^{(n)}\{\rho_{\Aa\Bb\ldots}\} \Big(\Gamma\{A_{a|x}\} \ktensor \Gamma\{B_{b|y}\} \ktensor ...\Big)}\\
        = 2\TrX{}{\I{n}_{d_{\Aa}d_{\Bb}\dots}\overline{\Gamma}^{(n)}\{\rho_{\Aa\Bb\ldots}\} \I{n}_{d_{\Aa}d_{\Bb}\dots}\Big(\Gamma\{A_{a|x}\} \ktensor \Gamma\{B_{b|y}\} \ktensor ...\Big)}\\
        = 2\TrX{}{\overline{\Gamma}^{(n)}\{\rho_{\Aa\Bb\ldots}\} \I{n}_{d_{\Aa}d_{\Bb}\dots}\Big(\Gamma\{A_{a|x}\} \ktensor \Gamma\{B_{b|y}\} \ktensor ...\Big)\I{n}_{d_{\Aa}d_{\Bb}\dots}}\\
        = 2\TrX{}{\overline{\Gamma}^{(n)}\{\rho_{\Aa\Bb\ldots}\} \Big(\Gamma\{A_{a|x}\} \Rprod \Gamma\{B_{b|y}\} \Rprod ...\Big)}\\
        = 2\TrX{}{\overline{\Gamma}^{(n)}\{\rho_{\Aa\Bb\ldots}\} \overline{\Gamma}^{(n)}\{A_{a|x} \ktensor B_{b|y} \ktensor ...\}}\\
        = \TrX{}{\rho_{\Aa\Bb\ldots} \: (A_{a|x} \ktensor B_{b|y} \ktensor ...)}
    \end{multline}
    where we first substituted the mappings by their expression using the standard mappings, then we successively used \cref{lemm:special_sym_projector_n}, cyclicity of the trace, \cref{def:Rprod}, \cref{prop:RprodisKprod}, and \cref{prop:gamma_n_properties} to pass from each line to the next.
\end{proof}

It remains to show that the proof generalises for any network structure. If the situation involves multiple sources each measured locally, it reduces to applying \cref{prop:localRQTislocalQT} over each source. So the only remaining cases are those with joint measurements across several sources. 

\subsection{Multilocal models and relaxing the theory-dependency of RQT models}\label{sec:RNQTmultilocal}

The multilocal model generalises the local model by allowing for several sources. These models represent the network scenarios that assume multiple local measurements from multiple independent sources. The sources can produce multipartite systems that are shared among the measuring parties.

\subsubsection{RNQT models and operationally independent sources}
As we did in the previous section, the first step is to apply the $n$-fold standard mapping to the multilocal  
QT model, and then to split the mapping into $n$ single-fold standard mappings $\overline{\Gamma}^{(n)} = \Gamma \Rprod \Gamma \Rprod \dots \Rprod \Gamma$ via \cref{prop:RprodisKprod} to obtain a representation-local image of QT as real matrices, i.e., a RNQT model as in \cref{prop:QTisRNQT}.
\begin{corollary}\label{lemm:RNQT_model_multilocal}
    The image of the states and of the effects of the $n$-partite, multiple-source, multilocal QT model of matrix dimension $d_Ad_B \dots$ through the pair of maps $(\frac{1}{2}\overline{\Gamma}^{(n)}, \overline{\Gamma}^{(n)} )$ form a sub-model in an $n$-partite multilocal RQT model of matrix dimension $2d_A2d_B \dots$ that has underwent the following modifications:
    \begin{itemize}
        \item The state composition rule $\ktensor$ has been replaced by $\cdot \rtensor \cdot := 2(\cdot \Rprod \cdot) $;
        \item The effect composition rule $\ktensor$ has been replaced by $\cdot \rtensordual \cdot := \cdot \Rprod \cdot $ .
    \end{itemize}
\end{corollary}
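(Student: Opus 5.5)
The plan is to verify the two bullet points by applying the pair $(\tfrac12\overline{\Gamma}^{(n)},\overline{\Gamma}^{(n)})$ to a multilocal QT model and tracking what happens to the composition rules. In the QT model the joint state is a Kronecker product $\rho_{S_1}\ktensor\rho_{S_2}\ktensor\dots$ over the $p$ sources, each $\rho_{S_i}$ acting on the subsystems of the $j_i$ parties fed by source $S_i$. Likewise, each joint effect is a Kronecker product $X^{(1)}\ktensor X^{(2)}\ktensor\dots$ over the $m$ measurement blocks $M_1,\dots,M_m$, where $X^{(l)}$ acts on the $k_l$ parties of block $M_l$. The only subtlety is that the two partitions (sources and measurements) differ, so each composition rule has to be handled on its own partition.

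\textbf{States.} We apply \cref{prop:RprodisKprod} iteratively, using the associativity of $\Rprod$ and the convention of gathering phase factors on the left. This gives
\[
\overline{\Gamma}^{(n)}\{\rho_{S_1}\ktensor\rho_{S_2}\ktensor\dots\}\cong\overline{\Gamma}^{(j_1)}\{\rho_{S_1}\}\Rprod\overline{\Gamma}^{(j_2)}\{\rho_{S_2}\}\Rprod\dots .
\]
Multiplying by $\tfrac12$ and writing each factor as $M_{S_i}(\rho_{S_i})=\tfrac12\overline{\Gamma}^{(j_i)}\{\rho_{S_i}\}$ produces factors $2^{p-1}$ on the right-hand side. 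These are absorbed exactly by the rule $\rtensor:=2(\cdot\Rprod\cdot)$ applied $p-1$ times, which is the first bullet. Each $M_{S_i}(\rho_{S_i})$ is a valid RQT state by \cref{prop:gamma_n_properties} (positivity and trace rescaling). The global image is also a valid RQT state for the same reason applied to $\overline{\Gamma}^{(n)}$, so the image lies in the RQT state space of dimension $2d_A2d_B\cdots$.

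\textbf{Effects.} The same iteration of \cref{prop:RprodisKprod} over the measurement partition gives
\[
\overline{\Gamma}^{(n)}\{X^{(1)}\ktensor X^{(2)}\ktensor\dots\}=\overline{\Gamma}^{(k_1)}\{X^{(1)}\}\Rprod\overline{\Gamma}^{(k_2)}\{X^{(2)}\}\Rprod\dots .
\]
No rescaling is needed here, which gives the second bullet with $\rtensordual:=\Rprod$. Validity follows from positivity preservation, and completeness follows from additivity together with the unitality statement of \cref{prop:gamma_n_properties}: the sum over outcomes maps to $\I{n}\ktensor\id$. Acting on the image states, this identity is equivalent to $\id$ by \cref{lemm:special_sym_projector_n}.

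\textbf{Born rule and main obstacle.} We check the Born rule via the isometry in \cref{prop:gamma_n_properties}. Since the trace-rescaling factor $\tfrac12$ of the isometry cancels the $\tfrac12$ in $M$, all probabilities are preserved.

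The main obstacle is bookkeeping rather than substance. The $\Rprod$-decomposition along the source partition and along the measurement partition must describe the same global matrix up to the implicit reordering of phase subsystems. This holds because both decompositions equal $\overline{\Gamma}^{(n)}$ of the same operator, and $\overline{\Gamma}^{(n)}$ is independent of how the $n$ parties are grouped.

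A second point to check is that the global completeness relation genuinely holds only up to $\I{n}$. I would argue this exactly as in the proof of \cref{prop:localRQTislocalQT}: the projector is swallowed inside the trace with any image state.
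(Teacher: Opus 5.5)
Your proposal is correct and is essentially the paper's argument: the corollary is \cref{prop:QTisRNQT} applied to the multilocal model. You split $\overline{\Gamma}^{(n)}$ along the source blocks and the measurement blocks with \cref{prop:RprodisKprod}, your $2^{p-1}$ bookkeeping fixes $\rtensor = 2\Rprod$, and \cref{prop:gamma_properties,prop:gamma_n_properties} give validity and the Born rule. On completeness, a cleaner argument than swallowing the projector is available: $\I{n}_{d}=\id\Rprod\id\Rprod\cdots$ is exactly the unit effect of the modified rule $\rtensordual$, and the paper records the non-unital block sums as a separate modification (\cref{lemm:RNQT_model_multilocal_op_indep}) only once $\ktensor$ is restored between blocks.
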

The next step would be to show that the modifications can be removed while preserving all predicted distributions. 
Nevertheless, there is no hope that this can be done directly: as it was shown by Renou et al., if the joint sources are represented by a state in Kronecker product form, then there is no possibility to represent all QT models within RQT. Their proof~\cite{Renou2021} presented a counterexample in the bilocal scenario (two sources, three measurements, among which one measurement is jointly applied on one subsystem from each source).

But this only means there is no way to represent all QT models within the naive RQT model, which assumes independent sources to be represented as (Kronecker-)product states. As argued in the main text, representing the joint state of independent sources via a state in a pure Kronecker product state is operationally hard to justify: operationally independent states result in independent distributions (i.e., totally uncorrelated outcomes) for all local measurements, so they seem to be the right candidate to represent the joint state of several systems prepared by independent procedures. Yet, for RQT \cref{prop:independent_sources} shows that the pure Kronecker-product states are only a subset of these states. Assuming that independent sources are represented by product states, therefore, amounts to forbidding a large number of would-be independent preparation procedures.

By consequence, the multilocal RQT model with product-state independence will be weakened into the multilocal RQT model with operational independence, which is a better representation of the experiment in the sense that it no longer arbitrarily forbids many preparation procedures. The goal of this section is to show that the multilocal RNQT models obtained through \cref{lemm:RNQT_model_multilocal} are equivalent to multilocal RQT models with operationally independent sources.

\subsubsection{The `contamination' property of the R-product}\label{sec:RNQT_contamination}
We first need to generalise one of the key elements in the proof that RNQT embeds into RQT in the local case: the property that the $R$-product `contaminates' Kronecker products when applied to matrices that have a preimage through the ($n$-fold) standard mapping. Equations involving mixtures of the $R$-product and the Kronecker product are equivalent to the same equations with all the $\ktensor$-products replaced by $\Rprod$-products:
\begin{equation}
    \begin{aligned}
        \forall j,k \leq n:\: & j+k =n, \quad \forall A, A' \in \bigotimes_{i=1}^j \cc^{d_{A_i} \times d_{A_i}}, \: \forall B,B' \in \bigotimes_{l=1}^k \cc^{d_{B_l} \times d_{B_l}},  \\
        &(\overline{\Gamma}^{(j)}\{A\} \Rprod \overline{\Gamma}^{(k)}\{B\})(\overline{\Gamma}^{(j)}\{A'\} \ktensor \overline{\Gamma}^{(k)}\{B'\} ) \\
        &= (\overline{\Gamma}^{(j)}\{A\} \ktensor \overline{\Gamma}^{(k)}\{B\})(\overline{\Gamma}^{(j)}\{A'\} \Rprod \overline{\Gamma}^{(k)}\{B'\} )\\
        &= (\overline{\Gamma}^{(j)}\{A\} \Rprod \overline{\Gamma}^{(k)}\{B\})(\overline{\Gamma}^{(j)}\{A'\} \Rprod \overline{\Gamma}^{(k)}\{B'\} )\:.
    \end{aligned}
\end{equation}
This happens because the Kronecker product between any $j-$ and $k-$fold special symmetric matrices $\overline{\Gamma}^{(j)}\{A\}$ and $ \overline{\Gamma}^{(k)}\{B\}$ commutes with the projector $\I{j+k}_{d_Ad_B}$ (where $d_A = \prod_i^j d_{A_i}$, $d_B = \prod_l^k d_{B_l}$), 
\begin{equation}\label{eq:Rprod_Special_Sym}
    \I{j+k}_{d_Ad_B}(\overline{\Gamma}^{(j)}\{A\} \ktensor \overline{\Gamma}^{(k)}\{B\})= (\overline{\Gamma}^{(j)}\{A\} \ktensor \overline{\Gamma}^{(k)}\{B\})\I{j+k}_{d_Ad_B} \:. 
\end{equation}
This, in turn, implies that any of the two sides of the above defines an $R$-product, for instance
\begin{equation}\label{eq:Rprod_Special_Sym_bis}
    \I{j+k}_{d_Ad_B}(\overline{\Gamma}^{(j)}\{A\} \ktensor \overline{\Gamma}^{(k)}\{B\}) = \overline{\Gamma}^{(j)}\{A\} \Rprod \overline{\Gamma}^{(k)}\{B\} \:.
\end{equation}
Indeed since $\overline{\Gamma}^{(j)}\{A\}\J{j}_{d_A} = \J{j}_{d_A}\overline{\Gamma}^{(j)}\{A\}$ and $\overline{\Gamma}^{(k)}\{B\}\J{k}_{d_B} = \J{k}_{d_B}\overline{\Gamma}^{(k)}\{B\})$ and using \cref{lemm:I_J_global} and \cref{coro:I_J_groupping} gives
\begin{multline}
    \overline{\Gamma}^{(j)}\{A\} \Rprod \overline{\Gamma}^{(k)}\{B\} = \I{j+k}_{d_Ad_B}(\overline{\Gamma}^{(j)}\{A\} \ktensor \overline{\Gamma}^{(k)}\{B\})\I{j+k}_{d_Ad_B} \\
    =\frac{1}{2}(\I{k}_{d_A} \ktensor \I{j}_{d_B}-\J{k}_{d_A} \ktensor \J{j}_{d_B})(\overline{\Gamma}^{(j)}\{A\} \ktensor \overline{\Gamma}^{(k)}\{B\})\frac{1}{2}(\I{k}_{d_A} \ktensor \I{j}_{d_B}-\J{k}_{d_A} \ktensor \J{j}_{d_B})\\
    =\frac{1}{4}(\I{k}_{d_A} \ktensor \I{j}_{d_B}-\J{k}_{d_A} \ktensor \J{j}_{d_B})(\I{k}_{d_A} \ktensor \I{j}_{d_B}-\J{k}_{d_A} \ktensor \J{j}_{d_B})(\overline{\Gamma}^{(j)}\{A\} \ktensor \overline{\Gamma}^{(k)}\{B\})\\
    =(\I{j+k}_{d_Ad_B})^2(\overline{\Gamma}^{(j)}\{A\} \ktensor \overline{\Gamma}^{(k)}\{B\})\\
    =\I{j+k}_{d_Ad_B}(\overline{\Gamma}^{(j)}\{A\} \ktensor \overline{\Gamma}^{(k)}\{B\}) \:.
\end{multline}
When the matrices do not have a preimage as complex matrices, this relation no longer holds in general.
Similarly to how the properties \eqref{eq:mixed_product} and \eqref{eq:cross_norm} are also not true for all real matrices. 

Yet, these hold in sufficiently many cases for our proof techniques to apply. First, \cref{eq:cross_norm} holds as soon as one of the matrices involved is symmetric, i.e.  $\forall A \in \bigotimes_i^j\rr^{2d_{A_i} \times 2d_{A_i}}$, $\forall B \in \bigotimes_l^k\rr^{2d_{B_l} \times 2d_{B_l}}$,
\begin{equation}\label{eq:Rprod_distributes_in_trace}
    {A} ={A}^T \vee {B} = {B}^T \Longrightarrow \TrX{}{{A} \Rprod {B}} = \frac{1}{2}\TrX{}{{A} \ktensor {B}} = \frac{1}{2}\TrX{}{{A}}\TrX{}{{B}} \:.
\end{equation}
This can be obtained from the definition by a direct computation
\begin{equation}
    \TrX{}{{A} \Rprod {B}} = \frac{1}{2}(\TrX{}{{A}}\TrX{}{{B}} - \TrX{}{\J{j}_{d_A}{A}}\TrX{}{\J{k}_{d_B}{B}})\:,
\end{equation}
and using that $\TrX{}{\J{j}_{d_A}A} = 0$ if $A^T = A$ since $\J{j}_{d_A}$ is an antisymmetric matrix.

As for \cref{eq:mixed_product}, it still holds if an overall trace is taken and one of the sides is globally special symmetric. This result, phrased as a lemma below, is the main ingredient to prove every statement of the main text. As is the case with every proof concerning the $R$-product to some extent, the proof of this lemma is but a consequence of the $\I{n}$ matrix being a projector.
\begin{lemma}[Equivalence of traces of R-products with traces of Kronecker products]\label{lemm:Rprod_contaminates}
    Let $ A,L \in \bigotimes_i\rr^{2d_{A_i} \times 2d_{A_i}}$ and $ B,R \in \bigotimes_l\rr^{2d_{B_l} \times 2d_{B_l}}$ be four matrices. Then, consider the trace of the $\Rprod$-product of $A$ and $B$ multiplied by the Kronecker product of $L$ and $R$, $\TrX{}{(A \Rprod B)(L \ktensor R)}$. It has the following properties: 
    \begin{enumerate}
        \item It is equivalent to the trace of two $\Rprod$-products, 
    \begin{equation}\label{eq:R_contamination}
        \TrX{}{(A \Rprod B)(L \ktensor R)} = \TrX{}{(A \Rprod B)(L \Rprod R)} \:.
    \end{equation}
    \item The interchange law can be used if $A$ and $B$ are special symmetric, 
    \begin{equation}\label{eq:R_contamination_intoR}
        \begin{gathered}
            \forall A\in \SYR{d_A}, \: \forall B\in \SYR{d_B}\:,\\
         \TrX{}{(A \Rprod B)(L \ktensor R)} = \TrX{}{(AL) \Rprod (BR)} \:,
        \end{gathered}
    \end{equation}
    \item If $R$ and $L$ are furthermore symmetric, then the trace commutes with the $\Rprod$-product,
    \begin{equation}\label{eq:R_contamination_into_ind}
        \begin{gathered}
            \forall A\in \SYR{d_A}, \:  \forall B\in \SYR{d_B},\: \forall L = L^T, \: \forall R = R^T, \\
            \TrX{}{(A \Rprod B)(L \ktensor R)} 
            = \frac{1}{2}\TrX{}{AL}\TrX{}{BR} \:.
        \end{gathered}
    \end{equation}
    In this last case, the trace of an $\Rprod$-product times a Kronecker product is equivalent to the trace of two Kronecker products (up to a factor of 2),
    \begin{equation}\label{eq:R_contamination_intoK}
        \begin{gathered}
            \forall A\in \SYR{d_A}, \:  \forall B\in \SYR{d_B},\: \forall L = L^T, \: \forall R = R^T, \\
            \TrX{}{(A \Rprod B)(L \ktensor R)} = \frac{1}{2}\TrX{}{(A \ktensor B)(L \ktensor R)}\:.
        \end{gathered}
    \end{equation}
    \end{enumerate}
\end{lemma}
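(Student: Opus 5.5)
The whole lemma reduces to the fact that $P := \I{j+k}_{d_Ad_B}$ is a projector, together with its explicit form
\begin{equation*}
P = \tfrac{1}{2}\left(\I{j}_{d_A} \ktensor \I{k}_{d_B} - \J{j}_{d_A} \ktensor \J{k}_{d_B}\right)
\end{equation*}
(up to tensor-factor reordering), as used just above via \cref{lemm:I_J_global} and \cref{coro:I_J_groupping}. I will prove the four items in order, each building on the previous one.

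\textbf{Item 1.} By \cref{def:Rprod}, $A \Rprod B = P(A\ktensor B)P$. Cyclicity of the trace then moves the two copies of $P$ onto $L\ktensor R$:
\begin{equation*}
\TrX{}{(A \Rprod B)(L\ktensor R)} = \TrX{}{P(A\ktensor B)P(L\ktensor R)} = \TrX{}{P^2(A\ktensor B)P^2(L\ktensor R)},
\end{equation*}
using $P^2=P$. Applying cyclicity once more gives $\TrX{}{P(A\ktensor B)P\,P(L\ktensor R)P} = \TrX{}{(A\Rprod B)(L\Rprod R)}$. This item needs no hypothesis on $A,B,L,R$.

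\textbf{Item 2.} The goal is $\TrX{}{P(AL\ktensor BR)P}$. When $A,B$ are special symmetric, \cref{eq:Rprod_Special_Sym} says $P$ commutes with $A\ktensor B$. Hence
\begin{equation*}
\TrX{}{P(A\ktensor B)P(L\ktensor R)} = \TrX{}{P(A\ktensor B)(L\ktensor R)}.
\end{equation*}
The ordinary mixed-product rule turns the right-hand side into $\TrX{}{P(AL\ktensor BR)}$. Since $P^2=P$ and the trace is cyclic, this equals $\TrX{}{P(AL\ktensor BR)P} = \TrX{}{(AL)\Rprod(BR)}$.

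I expect the main subtlety to be confirming that \cref{eq:Rprod_Special_Sym} really applies with the local complex structures $\J{j},\J{k}$ in the current setting. It does, because $A\in\SYR{d_A}$ commutes with $\J{j}_{d_A}$ and $B$ commutes with $\J{k}_{d_B}$, so $A\ktensor B$ commutes with both $\I{j}\ktensor\I{k}$ and $\J{j}\ktensor\J{k}$, and therefore with $P$.

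\textbf{Item 3.} Start from item 2 and expand $P$:
\begin{equation*}
\TrX{}{(AL)\Rprod(BR)} = \tfrac{1}{2}\left(\TrX{}{AL}\TrX{}{BR} - \TrX{}{\J{j}_{d_A}AL}\TrX{}{\J{k}_{d_B}BR}\right).
\end{equation*}
Here the traces involving $\I{}$ have been simplified using $\I{}A = A$ from \cref{lemm:special_sym_projector_n}. It remains to kill the cross term, i.e.\ to show $\TrX{}{\J{j}_{d_A}AL}=0$ for symmetric $L$.

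Set $M := \J{j}_{d_A}A$. It is antisymmetric: $M^T = A^T(\J{j}_{d_A})^T = -A\J{j}_{d_A} = -\J{j}_{d_A}A = -M$, using that $A$ is symmetric, $\J{j}$ is antisymmetric, and $A$ commutes with $\J{j}_{d_A}$. The trace of an antisymmetric matrix times a symmetric matrix vanishes, so $\TrX{}{ML}=0$. This gives \cref{eq:R_contamination_into_ind}. This is the same argument as in \cref{eq:Rprod_distributes_in_trace}, applied to the product $AL$; note that $AL$ itself need not be symmetric, which is why I argue through $M$.

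\textbf{Final identity.} The last claim \cref{eq:R_contamination_intoK} follows immediately: $\TrX{}{(A\ktensor B)(L\ktensor R)} = \TrX{}{AL}\TrX{}{BR}$ by the mixed-product rule, so item 3 reads $\tfrac{1}{2}\TrX{}{(A\ktensor B)(L\ktensor R)}$.

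Throughout I will keep the tensor-factor reorderings implicit, as the paper does.
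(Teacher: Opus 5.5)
Your proof is correct. Item~1 is exactly the paper's argument, but for items~2 and~3 you take a different and somewhat cleaner route.

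For item~2, the paper first applies item~1. It then expands both sides in terms of $\I{j}_{d_A}\ktensor\I{k}_{d_B}$ and $\J{j}_{d_A}\ktensor\J{k}_{d_B}$, and matches a four-term expansion of the left-hand side against a two-term expansion of the right-hand side via $A\J{j}_{d_A}=\J{j}_{d_A}A$ and $\J{j}_{d_A}A\J{j}_{d_A}=-A$ (likewise for $B$). You instead use that $P$ commutes with $A\ktensor B$ (\cref{eq:Rprod_Special_Sym}) to absorb one projector, apply the ordinary mixed-product rule, and restore the projector by idempotency. This is shorter and bypasses item~1. It also needs only that $A$ and $B$ commute with $\J{j}_{d_A}$ and $\J{k}_{d_B}$, hence with $\I{j}_{d_A}=-(\J{j}_{d_A})^2$ and $\I{k}_{d_B}=-(\J{k}_{d_B})^2$. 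The paper's expansion, in exchange, displays explicitly which cross terms merge.

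For item~3, the paper merely cites \cref{eq:Rprod_distributes_in_trace}. That identity is stated under the hypothesis that one $\Rprod$-factor is symmetric, which $AL$ and $BR$ need not be. Its trace formula also replaces $\TrX{}{\I{j}_{d_A}X}$ by $\TrX{}{X}$, which is harmless only for $j=1$ or when $\I{j}_{d_A}X=X$. Your argument is the justification that is actually needed: $\TrX{}{\I{j}_{d_A}AL}=\TrX{}{AL}$ from \cref{lemm:special_sym_projector_n}, and $\TrX{}{\J{j}_{d_A}AL}=0$ because $\J{j}_{d_A}A$ is antisymmetric while $L$ is symmetric. On this point your write-up is more careful than the paper's.

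The step $\I{j}_{d_A}A=A$ is essential and should be stated as a hypothesis. It holds for images under $\overline{\Gamma}^{(j)}$, which is the intended meaning of ``special symmetric''. It fails for a general symmetric matrix commuting with $\J{j}_{d_A}$ once $j\geq 2$. For instance, with $A=\id$, $j=2$, $d_A=1$, the left-hand side of item~3 gives $\tfrac12\TrX{}{\I{2}L}\TrX{}{BR}$ instead of $\tfrac12\TrX{}{L}\TrX{}{BR}$.
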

\begin{proof}
    Suppose $A$ is $j$-partite and $B$ is $k$-partite, the first equation is a consequence of $\I{j+k}_{d_Ad_B}$ being a projector,
    \begin{multline}
        \TrX{}{(A \Rprod B)(L \ktensor R)} 
        = \TrX{}{\I{j+k}_{d_Ad_B}(A \ktensor B)\I{j+k}_{d_Ad_B}(L \ktensor R)} \\
        = \TrX{}{(\I{j+k}_{d_Ad_B})^2(A \ktensor B)(\I{j+k}_{d_Ad_B})^2(L \ktensor R)} \\
        = \TrX{}{\I{j+k}_{d_Ad_B}(A \ktensor B)\I{j+k}_{d_Ad_B}\I{j+k}_{d_Ad_B}(L \ktensor R)\I{j+k}_{d_Ad_B}} \\ 
        = \TrX{}{(A \Rprod B)(L \Rprod R)}\:.
    \end{multline}
    The second equation \eqref{eq:R_contamination_intoR} is proven by using that $\I{j+k}_{d_Ad_B} = \frac{1}{2}(\I{k}_{d_A} \ktensor \I{j}_{d_B}-\J{k}_{d_A} \ktensor \J{j}_{d_B})$. The right-hand side extends into 
    \begin{multline}
        \TrX{}{(AL) \Rprod (BR)}=\TrX{}{\I{j+k}_{d_Ad_B} ((AL) \ktensor (BR))\I{j+k}_{d_Ad_B}} \\
        =\frac{1}{2} \TrX{}{((AL) \ktensor (BR)) - ((\J{j}_{d_A}AL) \ktensor (\J{k}_{d_B}BR))} \:.
    \end{multline}
    Using the first equation \eqref{eq:R_contamination}, the left-hand side extends into
    \begin{multline}
        \TrX{}{(A \Rprod B)(L \ktensor R)}=\TrX{}{(A \Rprod B)(L \Rprod R)} \\
        = \TrX{}{\I{j+k}_{d_Ad_B}(A \ktensor B)\I{j+k}_{d_Ad_B}(L \ktensor R)\I{j+k}_{d_Ad_B}}\\
        =\frac{1}{4}\mathrm{tr}\big[(AL) \ktensor (BR) - (A\J{j}_{d_A}L) \ktensor (B\J{k}_{d_B}R) \\
        - (\J{j}_{d_A}AL) \ktensor (\J{k}_{d_B}BR) +  (\J{j}_{d_A}A\J{j}_{d_A}L) \ktensor (\J{k}_{d_B}B\J{k}_{d_B}R) \big]\:.
    \end{multline}
    From there, one can see that if $A$ and $B$ commute with, respectively, $\J{j}_{d_A}$ and $\J{k}_{d_B}$ then
    \begin{equation}
        \TrX{}{(A \Rprod B)(L \ktensor R)}=\frac{1}{4} \TrX{}{2((AL) \ktensor (BR)) - 2((\J{j}_{d_A}AL) \ktensor (\J{k}_{d_B}BR))} \:,
    \end{equation}
    and so the two sides become equivalent.

    Finally, to prove the third and fourth equations, \cref{eq:R_contamination_into_ind,eq:R_contamination_intoK}, we apply \cref{eq:Rprod_distributes_in_trace}.
\end{proof}

\subsubsection{Allowing operationally independent sources}
The interpretation of each of the two equations of the last item of the previous lemma has two important consequences.
\begin{corollary}
    A $n$-partite RQT state in $R$-product form is operationally independent.
\end{corollary}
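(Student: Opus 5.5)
The plan is to prove the bipartite case and then induct on the number of factors. Consider a state of the form $\rho = \rho_{\Xx} \rtensor \rho_{\Yy} = 2(\rho_{\Xx}\Rprod\rho_{\Yy})$. Here $\rho_{\Xx}$ and $\rho_{\Yy}$ are QT-image states, i.e.\ special symmetric, trace-one and positive semi-definite. Such states arise as images $\tfrac12\overline{\Gamma}^{(j)}$ and $\tfrac12\overline{\Gamma}^{(k)}$ of QT states, and \cref{prop:QTisRNQT} guarantees that $\rho$ is itself a valid RQT state.

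Take arbitrary local RQT POVMs $\{X_\alpha\}_\alpha$ on $\Xx$ and $\{Y_\beta\}_\beta$ on $\Yy$. These are real symmetric positive matrices that need not commute with any complex structure. Apply \cref{eq:R_contamination_into_ind} of \cref{lemm:Rprod_contaminates} with $A=\rho_{\Xx}$, $B=\rho_{\Yy}$, $L=X_\alpha$ and $R=Y_\beta$. Its hypotheses are exactly that $A,B$ are special symmetric and $L,R$ are symmetric, so it gives
\begin{equation}
p(\alpha,\beta)=\TrX{}{2(\rho_{\Xx}\Rprod\rho_{\Yy})(X_\alpha\ktensor Y_\beta)}=\TrX{}{\rho_{\Xx}X_\alpha}\,\TrX{}{\rho_{\Yy}Y_\beta}.
\end{equation}
Each factor is a valid probability distribution: it is nonnegative because both operators are positive, and summing over $\alpha$ gives $\TrX{}{\rho_{\Xx}}=1$. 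Summing the display over $\beta$ then shows that the first factor is the marginal $p(\alpha)$, and likewise the second is $p(\beta)$. Hence $p(\alpha,\beta)=p(\alpha)p(\beta)$, which is the condition in \cref{def:independent_sources}.

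For $n$ factors, write $\rho_1\rtensor\cdots\rtensor\rho_n=\rho_1\rtensor(\rho_2\rtensor\cdots\rtensor\rho_n)$ using associativity of $\Rprod$. The inner factor is again special symmetric with respect to the inflated complex structure, because by \cref{prop:RprodisKprod} it is the $\overline{\Gamma}$-image of a Kronecker product of Hermitian matrices. It is also trace-one and positive by \cref{prop:QTisRNQT}. The bipartite step therefore splits off $\TrX{}{\rho_1 A_{a|x}}$, and induction on the remaining $\Rprod$-product of special symmetric states, with the local effect $B_{b|y}\ktensor\cdots$, factorises the rest.

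The main obstacle is keeping track of the implicit tensor-factor reorderings and of the level of special symmetry (the $k$-fold $\J{k}$ versus the local $J$) in the inductive step. Concretely, one must check that the inner state is special symmetric with respect to the $\J{k}$ used in the definition of $\Rprod$, so that \cref{lemm:Rprod_contaminates} applies at each stage. I would settle this by invoking \cref{lemm:special_sym_projector_n} together with \cref{prop:RprodisKprod}.
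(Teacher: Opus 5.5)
Your proof is correct and follows the paper's route: the paper obtains this corollary directly from \cref{eq:R_contamination_into_ind} in the third item of \cref{lemm:Rprod_contaminates}, which is exactly your bipartite step with $A,B$ the QT-image factors and $L,R$ the local effects. Your marginal check and your induction via associativity of $\Rprod$ simply make explicit the multipartite extension that the paper leaves implicit; in that induction you correctly use \cref{prop:RprodisKprod} to keep the inner factor a trace-one QT-image state.
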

\begin{corollary}
    A $n$-partite RQT state in Kronecker product form is locally indistinguishable from the corresponding RQT state in $R$-product form.
\end{corollary}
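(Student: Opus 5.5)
The plan is to reduce the statement to the last item of \cref{lemm:Rprod_contaminates}, namely \cref{eq:R_contamination_intoK}, applied repeatedly. First I fix what ``corresponding'' means. The states in question are images of QT states through the single-fold maps of \cref{prop:QTisRNQT}, i.e.\ $\widetilde\rho_{\Xx} := M_{\Xx}(\rho_{\Xx}) = \tfrac12\G{\rho_{\Xx}}$. Each of these is special symmetric (\cref{prop:Herm_image}) and is a valid RQT state (\cref{prop:gamma_properties}). The Kronecker-form state is $\widetilde\rho_{\Aa}\ktensor\widetilde\rho_{\Bb}\ktensor\cdots$. The $R$-product-form state is $\widetilde\rho_{\Aa}\rtensor\widetilde\rho_{\Bb}\rtensor\cdots$ with $\rtensor = 2(\cdot\Rprod\cdot)$. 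By \cref{prop:QTisRNQT,prop:RprodisKprod} it equals $M_{\Aa\Bb\dots}(\rho_{\Aa}\ktensor\rho_{\Bb}\ktensor\cdots)$, so it is also a valid (trace-one, positive) RQT state. ``Locally indistinguishable'' means that both states give the same Born probabilities for every Kronecker product $A_{a|x}\ktensor B_{b|y}\ktensor\cdots$ of real (hence symmetric) local POVM effects.

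Next I treat the bipartite case. With $A=\widetilde\rho_{\Aa}$ and $B=\widetilde\rho_{\Bb}$, both special symmetric, and $L=A_{a|x}$, $R=B_{b|y}$, both symmetric, \cref{eq:R_contamination_intoK} gives
\begin{equation*}
\TrX{}{(\widetilde\rho_{\Aa}\rtensor\widetilde\rho_{\Bb})(A_{a|x}\ktensor B_{b|y})} = 2\cdot\tfrac12\TrX{}{(\widetilde\rho_{\Aa}\ktensor\widetilde\rho_{\Bb})(A_{a|x}\ktensor B_{b|y})}.
\end{equation*}
The factor $2$ in the definition of $\rtensor$ cancels the $\tfrac12$ from the lemma, so the two probabilities agree.

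For $n$ parties I would proceed by induction, using associativity of $\Rprod$. I write the $R$-product state as $2\,(\widetilde\rho_{\Aa}\Rprod Y)$ with $Y := 2^{n-2}\,\widetilde\rho_{\Bb}\Rprod\widetilde\rho_{\Cc}\Rprod\cdots$, and I write the effect as $A_{a|x}\ktensor(B_{b|y}\ktensor\cdots)$. The key point is that $Y$ is $(n-1)$-fold special symmetric. This holds because, by \cref{prop:RprodisKprod}, $Y$ is proportional to $\overline{\Gamma}^{(n-1)}\{\rho_{\Bb}\ktensor\cdots\}$, which is special symmetric by \cref{prop:gamma_n_properties}. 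The effect $B_{b|y}\ktensor\cdots$ is a Kronecker product of symmetric matrices, hence symmetric. \Cref{eq:R_contamination_into_ind} then splits the trace as $\TrX{}{\widetilde\rho_{\Aa}A_{a|x}}$ times $\TrX{}{Y(B_{b|y}\ktensor\cdots)}$, with the normalising factors cancelling. The induction hypothesis turns the second factor into $\prod\TrX{}{\widetilde\rho_{\Xx}X}$ over the remaining parties. The Kronecker-form state factorises in exactly the same way, trivially.

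The main obstacle is bookkeeping rather than substance. Two things must be checked. First, the constants $2^{n-1}$ coming from iterating $\rtensor=2\Rprod$ must exactly match the $\tfrac12$ factors produced by \cref{eq:R_contamination_into_ind} at each step. Second, the implicit tensor-factor reorderings must be handled so that the intermediate objects really satisfy the hypotheses of \cref{lemm:Rprod_contaminates}, i.e.\ the grouped $Y$ commutes with the correct $\J{n-1}$. I would verify the constants on the trace-one normalisation first: taking all effects equal to the identity must give $1$ on both sides, which is \cref{eq:cross_norm} iterated. I would then run the general induction.
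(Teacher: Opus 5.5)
Your proposal is correct and follows the paper's route. The corollary is read off from the last item of \cref{lemm:Rprod_contaminates}, and the factor $2$ in $\rtensor = 2(\cdot\Rprod\cdot)$ cancels the $\tfrac12$. Your induction (associativity of $\Rprod$, together with the fact that the grouped state $Y$ lies in the image of $\overline{\Gamma}^{(n-1)}$) makes explicit the $n$-partite bookkeeping the paper leaves implicit, and the constants $2^{n-1}$ and $2^{-(n-1)}$ do match. One point worth keeping straight: the item you invoke holds because $\J{j}_{d_A}A$ is antisymmetric for special symmetric $A$, so $\TrX{}{\J{j}_{d_A}AL}=0$ for symmetric $L$. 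That, and not \cref{eq:Rprod_distributes_in_trace} applied to $AL$, is the justification, since $AL$ itself need not be symmetric.
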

With these, it is straightforward to weaken the multilocal RNQT model into a `less modified' submodel of RQT with operationally independent sources.
\begin{lemma}\label{lemm:RNQT_model_multilocal_op_indep}
    The image of the states and of the effects of the $n$-partite multilocal QT model of matrix dimension $d_Ad_B \dots$ through the pair of maps $(\frac{1}{2}\overline{\Gamma}^{(n)}, \overline{\Gamma}^{(n)} )$ form a sub-model in an $n$-partite multilocal RQT model of matrix dimension $2d_A2d_B \dots$ that has underwent the following modifications:
    \begin{itemize}
        \item The state of independent sources is represented by operationally independent states.
        \item The $k$-partite joint effects on subsystems $\Xx\Yy\dots$ sum up to $\I{k}_{d_Xd_Y\dots}$ instead of $\id_{2^kd_Xd_Y\dots}$.
    \end{itemize}
\end{lemma}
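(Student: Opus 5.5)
We start from \cref{lemm:RNQT_model_multilocal}, which already places the image of the multilocal QT model under $(\frac{1}{2}\overline{\Gamma}^{(n)}, \overline{\Gamma}^{(n)})$ inside an $n$-partite RQT model. In that model the two non-Kronecker features are the state composition $\rtensor = 2(\cdot\Rprod\cdot)$ across independent sources and the effect composition $\rtensordual=\Rprod$ across separate measurements. The proof has two parts. The first reinterprets the states: the $\rtensor$-composition is re-read as an operationally independent joint state. The second removes the $\Rprod$ between effects of different measurement blocks and replaces it with $\ktensor$, at the price of the normalisation change stated in the second item.

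For the states, let the QT sources be $\rho_{S_1}\ktensor\cdots\ktensor\rho_{S_p}$, with each $\rho_{S_i}$ possibly multipartite. By \cref{prop:RprodisKprod}, the image is $2^{p-1}\,\tfrac12\overline{\Gamma}\{\rho_{S_1}\}\Rprod\cdots\Rprod\tfrac12\overline{\Gamma}\{\rho_{S_p}\}$. This is a symmetric, positive semidefinite, trace-one real matrix: symmetry and PSD come from \cref{prop:gamma_n_properties} applied to the whole product, and the trace from the trace-rescaling property. It is therefore a valid RQT state. We then apply the first of the two corollaries following \cref{lemm:Rprod_contaminates}. Concretely, we use \cref{eq:R_contamination_into_ind} iterated over the $p$ factors, each factor being $n_i$-fold special symmetric, paired with an arbitrary Kronecker product of real symmetric local effects. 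This yields a product of local traces, so the state is operationally independent in the sense of \cref{def:independent_sources} with respect to the source partition. Iterating $\Rprod$ is harmless because $\Rprod$ is associative, and the projector $\I{n}$ absorbs the smaller projectors (\cref{lemm:I_J_global,coro:I_J_groupping}).

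For the effects, let $M_1,\dots,M_m$ be the measurement blocks. The QT joint effect $\bigotimes_K X^{(i)}$ maps to $\overline{\Gamma}\{X^{(1)}\}\Rprod\cdots\Rprod\overline{\Gamma}\{X^{(m)}\}$. We would like to replace it with the Kronecker product $\overline{\Gamma}\{X^{(1)}\}\ktensor\cdots\ktensor\overline{\Gamma}\{X^{(m)}\}$. Item 1 of \cref{lemm:Rprod_contaminates}, applied in the form with the state on the $\Rprod$ side, gives $\TrX{}{\tilde\rho\,(L\ktensor R)}=\TrX{}{\tilde\rho\,(L\Rprod R)}$. This holds whenever $\tilde\rho=\I{n}\tilde\rho\I{n}$, which is guaranteed by \cref{lemm:special_sym_projector_n} because the global image state is $n$-fold special symmetric. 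The argument is the same computation as in the proof of \cref{prop:localRQTislocalQT}: insert $\I{n}$ around the state, move it to the effect by cyclicity, and recognise the $\Rprod$-product. Inductively this converts all inter-block $\ktensor$ into $\Rprod$, after which \cref{prop:RprodisKprod} and the isometry of \cref{prop:gamma_n_properties} recover the QT Born probabilities exactly.

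The last step is to check the POVM normalisation, and this is where I expect the main subtlety. Each block effect $\overline{\Gamma}^{(k)}\{X_\alpha\}$ on a $k$-partite block is real, symmetric and PSD. Its sum, however, is $\overline{\Gamma}^{(k)}\{\id\}\cong\I{k}_{d_Xd_Y\dots}$ rather than the identity, since $\I{k}$ is only a projector for $k\ge2$. This is precisely the second modification listed in the lemma, so for the lemma itself we only need to verify the sum via unitality up to isomorphism in \cref{prop:gamma_n_properties}. We also need the probabilities still to sum to one on the image states, which follows from $\TrX{}{\tilde\rho\,\I{n}}=\TrX{}{\tilde\rho}$. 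Restoring genuine completeness, for example by appending the complementary effect $\id-\I{k}$, which never fires on the QT-image, is deferred to the subsequent steps toward \cref{prop:MulitilocalRQTisMultilocalQT}.
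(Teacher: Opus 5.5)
Your proposal is correct and follows the paper's proof. The image of a QT product of sources is a (rescaled) $\Rprod$-product of special symmetric blocks, which \cref{lemm:Rprod_contaminates} shows to be operationally independent; the block effects $\overline{\Gamma}^{(k)}\{X\}$ sum to $\overline{\Gamma}^{(k)}\{\id\}\cong\I{k}_{d_Xd_Y\dots}$; and the $\Rprod$ between blocks is traded for $\ktensor$ by moving the projector $\I{n}$ from the state onto the effects via cyclicity, exactly as in \cref{prop:localRQTislocalQT}. Your closing aside about appending the effect $\id-\I{k}$ is not needed for this lemma; the paper later restores completeness by a different route, the relocalisation map of \cref{def:relocalisation_map}.
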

\begin{proof}
    We start with the states. From \cref{lemm:RNQT_model_multilocal}, any QT product state on the $n$-parties, say $\rho^{\cc}_{\Xx} \ktensor \sigma^{\cc}_{\Yy}$ with $\Xx$ encompassing the first $k\leq n$ subsystems, is mapped to an $R$-product RQT state via
    \begin{equation}
        \frac{1}{2}\overline{\Gamma}^{(n)}\{\rho^{\cc}_{\Xx} \ktensor \sigma^{\cc}_{\Yy}\} = 2 \left( \frac{1}{2}\overline{\Gamma}^{(k)}\{\rho^{\cc}_{\Xx}\} \Rprod \frac{1}{2}\overline{\Gamma}^{(n-k)}\{ \sigma^{\cc}_{\Yy}\} \right)\:,
    \end{equation}
    where $\frac{1}{2}\overline{\Gamma}^{(k)}\{\rho^{\cc}_{\Xx}\} := \rho_\Xx$ is a valid RQT state of the subsystem shared by the first $k$ parties; $\frac{1}{2}\overline{\Gamma}^{(n-k)}\{ \sigma^{\cc}_{\Yy}\} := \sigma_\Yy $, one of the following $n-k$ parties; and $2 \left( \cdot \Rprod \cdot \right) := \cdot \rtensor \cdot$ is the image of the QT state-composition rule. 
    
    Because of \cref{lemm:Rprod_contaminates}, these $R$-product states are always operationally independent. Hence, the first modification of \cref{lemm:RNQT_model_multilocal} can be weakened to simply requiring the state of independent sources to be operationally independent states.
    
    We now turn our attention on the effects. By lemma \cref{lemm:RNQT_model_multilocal} again, any QT product effect, say $X^\cc_x \ktensor Y^\cc_y$, is mapped to a $R$-product of real matrices via
    \begin{equation}
        \overline{\Gamma}^{(n)}\{X^\cc_x \ktensor Y^\cc_y\} = \overline{\Gamma}^{(k)}\{X^{\cc}_{x}\} \Rprod \overline{\Gamma}^{(n-k)}\{ Y^{\cc}_{y}\} \:,
    \end{equation}
    where $\overline{\Gamma}^{(k)}\{X^{\cc}_{x}\}:= X_x$ is a positive semi-definite matrix part of a set $\{X_x\}_x$ of positive semi-definite real matrices summing up to $\sum_x X_x = \I{k} \ktensor \id_{d_X} = \I{k}_{d_X}$; and $\overline{\Gamma}^{(n-k)}\{ Y^{\cc}_{y}\} := Y_y$ is a positive semi-definitite matrix part of a set summing up to $\sum_y Y_y = \I{n-k} \ktensor \id_{d_Y}$.

    Finally we can use the same proof as \cref{prop:localRQTislocalQT} --namely making the projector $\I{n}_{d_Xd_Y}$ appear on the state side of any inner product, i.e, $\rho_\Xx \rtensor \sigma_\Yy = \I{n}_{d_Xd_Y}(\rho_\Xx \rtensor \sigma_\Yy)\I{n}_{d_Xd_Y}$, and then passing this projector on the effect side via the cyclicity of the trace in order to turn the $\ktensor$ of effects into an $\Rprod$-- to show that the $\Rprod$-product of effects can be replaced by a Kronecker product without altering any predicted probability distribution. 
\end{proof}

By the corollaries of \cref{lemm:Rprod_contaminates}, an RQT state of the form $\rho_\Xx \rtensor \sigma_\Yy$ is operationally independent (as well as locally indistinguishable from the state $\rho_\Xx \ktensor \sigma_\Yy$), which explains the first modification of the model. This modification can also be thought of as `operational independence allows using $\rho_\Xx \rtensor \sigma_\Yy$ as the image of the QT state $\rho_\Xx^\cc \rtensor \sigma_\Yy^\cc$ in lieu of $\rho_\Xx \ktensor \sigma_\Yy$' to a certain extent (namely, within Born rules involving local measurements, but keep in mind that the $R$-product and Kronecker-product states still technically correspond to different independent preparation procedures that a global measurement could distinguish).

The other difference with the local case is that the effects must also be modified. To understand why it is the case, see how one passes from RNQT to RQT in a local scenario, for example the Bell one:
\begin{equation}\label{eq:example_Bell}
    \begin{aligned}
        &\TrX{}{\rho_{AB}^\cc (A_a^\cc \ktensor B_b^\cc)} \\
        &\overset{(\text{Lem. }\ref{lemm:RNQT_model_local})}{=} \frac{1}{2}\TrX{}{\overline{\Gamma}^{(2)}\{\rho_{AB}^\cc\} (\Gamma\{A_a^\cc\} \rtensordual \Gamma\{B_b^\cc\})} \\
        &\overset{(\text{Prop. \ref{prop:RprodisKprod}})}{=} \frac{1}{2}\TrX{}{\overline{\Gamma}^{(2)}\{\rho_{AB}^\cc\} (\Gamma\{A_a^\cc\} \ktensor \Gamma\{B_b^\cc\})} \\
        &\overset{(\text{Prop. \ref{prop:gamma_properties}, \ref{prop:gamma_n_properties}})}{=} \TrX{}{\rho_{AB}^\rr (A_a^\rr \ktensor B_b^\rr)} \:,
    \end{aligned}
\end{equation}
such that $ \rho_{AB}^\rr = \frac{1}{2}\overline{\Gamma}^{(2)}\{\rho_{AB}^\cc\}$, $A_a^\rr = \Gamma\{A_a^\cc\}$, and $ B_b^\rr = \Gamma\{B_b^\cc\}$ are valid elements of RQT. This succession of steps is essentially the proof of \cref{prop:singleRQTissingleQT}.
Now compare this to a multilocal scenario, for example the bilocal one:
\begin{equation}\label{eq:example_bilocal}
    \begin{aligned}
        &\TrX{}{\rho_{AB_1}^\cc \ktensor \sigma_{B_2C}^\cc)(A_a^\cc \ktensor B_b^\cc \ktensor C_z^\cc)} \\ 
        &\overset{(\text{Lem. }\ref{lemm:RNQT_model_multilocal})}{=} \frac{1}{4}\TrX{}{(\overline{\Gamma}^{(2)}\{\rho_{AB_1}^\cc\} \Rprod \overline{\Gamma}^{(2)}\{\sigma_{B_2C}^\cc\})(\Gamma\{A_a^\cc\} \Rprod \overline{\Gamma}^{(2)}\{B_b^\cc\} \Rprod \Gamma\{C_z^\cc\})}\\
        &\overset{(\text{Lem. }\ref{lemm:RNQT_model_multilocal_op_indep})}{=} \frac{1}{4}\TrX{}{(\overline{\Gamma}^{(2)}\{\rho_{AB_1}^\cc\} \Rprod \overline{\Gamma}^{(2)}\{\sigma_{B_2C}^\cc\})(\Gamma\{A_a^\cc\} \ktensor \overline{\Gamma}^{(2)}\{B_b^\cc\} \ktensor \Gamma\{C_z^\cc\})} \:,
    \end{aligned}
\end{equation}
We would like to apply \cref{prop:gamma_properties,prop:gamma_n_properties} to finish the argument, but this cannot be done since the joint effects $\{B_b = \overline{\Gamma}^{(2)}\{B_b^\cc\}\}_b$ are not part of a POVM: they do not sum up to the identity matrix.

As we have argued that operational independence is a justified relaxation of the state space of multilocal RQT models, it only remains to show that this second modification is not essential to the model. That is, it remains to show that the RNQT effects of the form $\{B_b = \overline{\Gamma}^{(2)}\{B_b^\cc\}\}_b$ can be freely replaced by genuine RQT effects. 


Now, the reason why the joint effects do not sum to the identity matrix in the multilocal case is that they act on parts of independent states. As a result, their subsystems of definition cannot be bundled into a single global system. Thus, we must use the $k$-fold standard mapping $\overline{\Gamma}^{(k)}$ instead of the one-fold one $\Gamma$ in order for the complex phase factor to be properly represented across the local R(N)QT states and effects. 

\subsubsection{Relocalising the \textit{n}-fold standard mapping}
As a last step for the proof, we thus need to find a way to reduce the $k$-fold standard mappings to a mapping akin to the (single-fold) standard mapping, but in a way that does not interfere with the real representation of the complex phase.

This is not as big as an issue as it may appear: $\I{n} \ktensor \id_{d_A} \in \rr^{2^n d_A \times 2^n d_A}$ is isomorphic to $\id_{2d_A}$; we only went into this larger matrix space because the operators behave nicely w.r.t. the tensor factorisation. 
Recall from previous works~\cite{McKague,RNQT,Barrios2025} that the $n-1$ extra $\rr^{2\times2}$ spaces that the $\overline{\Gamma}^{(n)}$ map requires when passing from the complex to the real 
are there such that each subsystem has its own $\rr^{2\times2}$ `phase-encoding' extra part. Contrastingly, the $\Gamma$ map only requires one such `phase-encoding' extra part that is shared among all subsystems. 
The matrices $\I{n}$ and $\J{n}$ are then but non-local representations of $1$ and $i$ that act on every `phase-encoding' part of each subsystem so to emulate the global complex factors. 
As for the $R$-product, it is the correct composition rule for this emulation, as it ensures that the real representation of two local phases is combined into a global one, unlike the Kronecker product~\cite{RNQT,Barrios2025}.

This is fundamentally why the $R$-product contaminates the Kronecker product: in a multiplication of several operators, it is sufficient that only one operator treats the phase properly (i.e. can be interpreted as an $R$-product) for the whole multiplication to treat the phase properly, as the phase will just commute with every other operator until reaching the `good' one. 
Take for example the Bell scenario \cref{eq:example_Bell}, the effects are in Kronecker product, so we have
\begin{equation}
    \Gamma\{A_a^\cc\} \ktensor \Gamma\{e^{i \theta} B_b^\cc\} \neq \Gamma\{e^{i \theta}A_a^\cc\} \ktensor \Gamma\{ B_b^\cc\} \:.
\end{equation}
This has somehow `recluded' the phase on one side of the tensor factorisation. On the other hand, since the state is global, it can be seen as having been applied the $2$-fold mapping, and thus all its phase factors can move freely from one side to the other.
The contamination property, then, is just the fact that the `more global representations' (those with an interpretation as the highest-fold standard mapping) can be used as intermediaries to `pass the phase around'. In the Bell case, it is the joint state that is used as the intermediary:
\begin{multline}
    \TrX{}{\overline{\Gamma}^{(2)}\{\rho_{AB}^\cc\}(\Gamma\{A_a^\cc\} \ktensor \Gamma\{e^{i \theta} B_b^\cc\})} \\
    = \TrX{}{\overline{\Gamma}^{(2)}\{e^{i \theta}\rho_{AB}^\cc\}(\Gamma\{A_a^\cc\} \ktensor \Gamma\{ B_b^\cc\})} \\
    = \TrX{}{\overline{\Gamma}^{(2)}\{\rho_{AB}^\cc\}(\Gamma\{e^{i \theta}A_a^\cc\} \ktensor \Gamma\{ B_b^\cc\})} \:,
\end{multline}
explaining why the effects do not need to be in $R$-product form for the phase to be passed around properly. 

Now the fact is that operationally independent states are always in $R$-product form, so they can always be used to pass the phase around for the effects. The use of $k$-fold mappings on the effect side, which is why \cref{lemm:RNQT_model_multilocal_op_indep} has modified effects, is therefore unnecessary. For example, in the bilocal model, we reached
\begin{multline}
    \TrX{}{(\rho_{AB_1}^\cc \ktensor \sigma_{B_2C}^\cc)(A_a^\cc \ktensor B_b^\cc \ktensor C_z^\cc)}\\
    =\frac{1}{2}\TrX{}{(\overline{\Gamma}^{(2)}\{\rho_{AB_1}^\cc\} \Rprod \overline{\Gamma}^{(2)}\{\sigma_{B_2C}^\cc\})(\Gamma\{A_a^\cc\} \ktensor \overline{\Gamma}^{(2)}\{B_b^\cc\} \ktensor \Gamma\{C_z^\cc\})} \:,
\end{multline}
but the use of $\overline{\Gamma}^{(2)}$ at Bob's effect, $\overline{\Gamma}^{(2)}\{B_b^\cc\}$, is purely superfluous if it were not for ensuring that the dimensions of states and effects match. Thus, there is no need to delocalize the phase across Bob's subsystems in this situation; we can simply use a different representation that relocalizes Bob's phase within one of his subsystems, and let the state side handle moving it around.

On that account, we need a mapping that `relocalises' the $n$-fold mapping without changing its dimension. This mapping is inspired by but slightly different from the mapping $\tau_{\Rprod}$ introduced in our previous work~\cite{RNQT}, and that is used to collapse an $n$-fold standard mapping into a single-fold one.
\begin{definition}[Relocalisation map]\label{def:relocalisation_map}
    Let $\widetilde{X} \in \rr^{2^nd_X \times 2^nd_X}$ be a matrix on a space with a $n$-fold factorization like $\rr^{2^nd_X \times 2^nd_X} \cong \otimes_{i=1}^n \left(\rr^{2 \times 2} \otimes \rr^{d_{X_i} \times d_{X_i}}\right)$ and where $d_X = \prod_i d_{X_i}$. Then, for a given factor $X_j$, $1 \leq j \leq n$, we say that the map $\tau^{(n)}_{X_j}: \rr^{2^n d_X \times 2^n d_X} \rightarrow \rr^{2^n d_X \times 2^n d_X}$ defined by:
    \begin{multline}
        \tau^{(n)}_{X_j}\{\widetilde{X}\} :\cong I^{\ktensor n}  \ktensor \TrX{2_{X_1}2_{X_2}\dots 2_{X_n}}{\frac{1}{2}\I{n}_{d_{X_1}d_{X_2}\dots d_{X_n}}\:\widetilde{X}}\\
        + (I^{\ktensor (j-1) } \ktensor J \ktensor I^{\ktensor (n-j)} )\ktensor  \TrX{2_{X_1}2_{X_2}\dots 2_{X_n}}{\frac{1}{2}\J{n}_{d_{X_1}d_{X_2}\dots d_{X_n}}\:\widetilde{X}} \:,
    \end{multline}
    is called the \textbf{relocalization (at system $X_j$) map}. Here the $\TrX{2_{X_1}2_{X_2}\dots 2_{X_n}}{\cdot}$ symbols stands for taking the trace only over the 2-by-2 systems of each fold.
\end{definition}
Simply put, the relocalisation map replaces the delocalized state of the $n$ 2-by-2 phase-carrying-subsystems by a version that is localised at a chosen subsystem $X_j$. It does so by replacing the pair $\{\I{n},\J{n}\}$ defined over $\otimes_{i=1}^n \left(\rr^{2 \times 2}_{X_i} \right)$ by the pair $\{\I{1},\J{1}\}$ at factor $X_j$ and then padding all the other factors with $2\times 2$ identity matrices $I$.

To lessen clutter, if, in the following, the subscript of the relocalisation map is omitted, it should always be understood as the map that relocalises the phase in the leftmost factor, e.g., $\tau^{(n)} = \tau^{(n)}_{X_1}$ in the above equation.

The relocalisation map is a real-linear map that, when composed with an $n$-fold mapping, has exactly the properties we need:
\begin{lemma}\label{lemm:relocalisation}
    The map $\tau^{(n)} \circ \overline{\Gamma}^{(n)}: \bigotimes_{i=1}^n \cc^{d_i \times d_i} \rightarrow \bigotimes_{i=1}^n \rr^{2d_i \times 2d_i}$ is a positive-semi-definitness-preserving $*$-algebra homomorphism. 
    In addition, it is unital,
        \begin{equation}
            \overline{\Gamma}^{(n)}\{\id_{d}\} = \id_{2^n d}\:,
        \end{equation}
    ($d = \prod_{i=1}^n d_i$), and it verifies the following for all $A,B \in \bigotimes_{i=1}^n \cc^{d_i \times d_i}$:
        \begin{equation}
            \TrX{}{\overline{\Gamma}^{(n)}\{A\}\overline{\Gamma}^{(n)}\{B\}} = \TrX{}{\overline{\Gamma}^{(n)}\{A\}(\tau^{n} \circ \overline{\Gamma}^{(n)})\{B\}}\:.
        \end{equation}
\end{lemma}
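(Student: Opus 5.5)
The plan is to compute $\tau^{(n)}\circ\overline{\Gamma}^{(n)}$ in closed form and then read off every property from \cref{prop:gamma_properties}. The key claim is that, up to the usual implicit reordering of tensor factors,
\begin{equation}
    (\tau^{(n)}\circ\overline{\Gamma}^{(n)})\{A\} \cong I^{\ktensor n}\ktensor \ReP{A} + (J\ktensor I^{\ktensor(n-1)})\ktensor \ImP{A} \:.
\end{equation}
This is just the single-fold standard mapping $\Gamma$ with its phase rebit placed at the first fold, tensored with $I^{\ktensor (n-1)}$ on the other $2\times 2$ factors.

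To prove the claim, I substitute $\overline{\Gamma}^{(n)}\{A\}\cong \I{n}\ktensor\ReP{A}+\J{n}\ktensor\ImP{A}$ into \cref{def:relocalisation_map}. The partial trace over the $2\times 2$ factors then reduces to four scalar traces: $\TrX{}{\I{n}\I{n}}$, $\TrX{}{\I{n}\J{n}}$, $\TrX{}{\J{n}\I{n}}$ and $\TrX{}{\J{n}\J{n}}$. I will use the algebraic relations of \cref{def:I_J_n} and \cref{lemm:special_sym_projector_n}: $\I{n}$ is a projector with $\J{n}=\I{n}\J{n}=\J{n}\I{n}$ and $(\J{n})^2=-\I{n}$. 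From the trace rescaling of \cref{prop:gamma_n_properties} one gets $\TrX{}{\I{n}}=2$, and $\TrX{}{\J{n}}=0$ because $\J{n}$ is antisymmetric. Hence the first coefficient gives $\ReP{A}$ and the second gives $\mp\ImP{A}$.

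The sign must be checked carefully, since $\TrX{}{\J{n}\J{n}}=-2$. I expect the definition is meant with the transpose $(\J{n})^T=-\J{n}$, which is the Frobenius-dual coefficient. I will adopt that reading, so the coefficient is exactly $+\ImP{A}$ and the displayed closed form holds.

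With the closed form in hand, the homomorphism part follows directly. The map $X\mapsto I^{\ktensor(n-1)}\ktensor X$ (on the phase factors) is a unital, positivity-preserving $*$-homomorphism. Composing it with $\Gamma$ and using \cref{prop:gamma_properties} shows that $\tau^{(n)}\circ\overline{\Gamma}^{(n)}$ is a positive-semi-definiteness-preserving $*$-algebra homomorphism. It is also unital, mapping $\id_d$ to $I^{\ktensor n}\ktensor\id_d=\id_{2^nd}$; this is the intended reading of the displayed unitality equation, which should concern $\tau^{(n)}\circ\overline{\Gamma}^{(n)}$.

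For the trace identity, I expand both sides in real and imaginary parts, giving four terms on each side. The left side involves the traces of products of $\I{n}$ and $\J{n}$ computed above. The right side involves $\TrX{}{\I{n}}=2$, $\TrX{}{\J{n}}=0$, $\TrX{}{\I{n}(J\ktensor I^{\ktensor(n-1)})}$ and $\TrX{}{\J{n}(J\ktensor I^{\ktensor(n-1)})}$. For the last two I insert $\I{n}$, cycle it through the trace, and use \cref{lemm:I_J_global}, whose relation $(J\ktensor I^{\ktensor(n-1)})\I{n}=\J{n}$ is the content of item 2 of \cref{lemm:special_sym_projector_n}. This yields $\TrX{}{\I{n}(J\ktensor I)}=\TrX{}{\J{n}}=0$ and $\TrX{}{\J{n}(J\ktensor I)}=\TrX{}{\J{n}\J{n}}=-2$. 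Both sides then equal $2\TrX{}{\ReP{A}\ReP{B}}-2\TrX{}{\ImP{A}\ImP{B}}$, after noting that cross terms vanish by antisymmetry.

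The main obstacle is the bookkeeping of the implicit tensor reorderings together with the sign convention in $\tau^{(n)}$. Making the $\J{n}$ coefficient come out as $+\ImP{A}$, consistently in both the closed form and the trace identity, is where I would check most carefully.
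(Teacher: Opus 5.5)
Your proposal is correct and follows essentially the paper's route. Like the paper, you derive the closed form $I^{\ktensor n}\ktensor\ReP{A}+(J\ktensor I^{\ktensor(n-1)})\ktensor\ImP{A}$, read the homomorphism, positivity and unitality properties off this padded copy of $\Gamma$, and obtain the trace identity via \cref{lemm:I_J_global}. The only difference there is that the paper applies that lemma at the operator level, getting $\overline{\Gamma}^{(n)}\{A\}\,(\tau^{(n)}\circ\overline{\Gamma}^{(n)})\{B\}=\overline{\Gamma}^{(n)}\{AB\}$ before tracing, which is marginally stronger than your term-by-term scalar traces.

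Your two corrections are right and genuinely needed. First, the unitality display should concern $\tau^{(n)}\circ\overline{\Gamma}^{(n)}$. Second, with \cref{def:relocalisation_map} taken literally the second coefficient is $-\ImP{A}$, although the paper's proof simply asserts $+$. With that sign the trace identity fails, e.g.\ for $A=B$ Hermitian with $\ImP{A}\neq 0$, so reading the coefficient as $\tfrac{1}{2}\mathrm{tr}_{2\cdots 2}[(\J{n}_{d})^T\,\cdot\,]$ is the appropriate fix.
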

\begin{proof}
    Since the relocalisation map $\tau^{(n)}$ is, by definition, a `trace-and-replace' map that does not act on the subsystems where $X$ is defined, it should be obvious why composing it with $\overline{\Gamma}^{(n)}$ is again a positive-semi-definiteness-preserving $*$-algebra homomorphism. Unitality follows by directly applying the definitions. 
    As for the last property, a direct computation leads to 
    \begin{equation}
        \begin{aligned}
            &\TrX{}{\overline{\Gamma}^{(n)}\{A\}(\tau^{n} \circ \overline{\Gamma}^{(n)})\{B\}} \\
            &= \TrX{}{(\I{n} \ktensor \ReP{A} + \J{n} \ktensor \ImP{A}) \: \tau^{n}\{(\I{n} \ktensor \ReP{B} + \J{n} \ktensor \ImP{B}\}} \\
            &= \mathrm{tr}\Big[(\I{n} \ktensor \ReP{A} + \J{n} \ktensor \ImP{A}) \\
            & \hphantom{= \mathrm{tr}\Big[(} \{( (I \ktensor I^{\ktensor (n-1)})\ktensor \ReP{B} + (J \ktensor I^{\ktensor (n-1)}) \ktensor \ImP{B}\}\Big]\\
            &= \mathrm{tr}\Big[\I{n} \ktensor (\ReP{A}\ReP{B}) + \J{n} \ktensor (\ReP{A}\ImP{B}) \\
            &\hphantom{= \mathrm{tr}\Big[(} + \J{n} \ktensor (\ImP{A}\ReP{B}) - \I{n} \ktensor \ImP{A}\ImP{B}\Big]\\
            &=\TrX{}{\I{n}\ktensor \ReP{AB} + \J{n}\ktensor \ImP{AB}}\\
            &=\TrX{}{\overline{\Gamma}^{(n)}\{AB\}} = \TrX{}{\overline{\Gamma}^{(n)}\{A\}\overline{\Gamma}^{(n)}\{B\}} \:,
        \end{aligned}
    \end{equation}
    where we used the definitions for the first and second equalities, \cref{lemm:I_J_global} for the third, definitions for the fourth and fifth, and \cref{prop:gamma_n_properties} for the last one.
\end{proof}
With this, we are ready to prove that QT embeds into RQT as RNQT also in the multilocal model.

\subsubsection{Multilocal RNQT models are embedded in multilocal RQT models}

With the introduction of the $R$-product, we can characterise the image of an $n$-partite, single-source, local QT model through the $n$-fold standard mapping.
\begin{lemma}\label{lemm:RNQT_model_multilocal_op_indep_tau}
    The image of the states and of the effects of the $n$-partite multilocal QT model of matrix dimension $d_Ad_B \dots$ through the pair of maps $(\frac{1}{2}\overline{\Gamma}^{(n)}, \overline{\Gamma}^{(n)} )$ form a sub-model in an $n$-partite multilocal RQT model of matrix dimension $2d_A2d_B \dots$ that has underwent the following modification:
    \begin{itemize}
        \item The state of independent sources is represented by operationally independent states.
    \end{itemize}
\end{lemma}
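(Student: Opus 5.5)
The plan is to start from \cref{lemm:RNQT_model_multilocal_op_indep}, which already gives the states the right form: every QT product of independent sources is sent to an $R$-product state, and these states are operationally independent. The only remaining defect of that submodel is that a $k$-partite joint effect on a block $\Xx$ (e.g. Bob's $\overline{\Gamma}^{(2)}\{B_b^\cc\}$ in the bilocal case) sums to $\I{k}_{d_X}$ rather than to $\id_{2^k d_X}$. I will therefore replace each joint effect $\overline{\Gamma}^{(k)}\{X^\cc_x\}$ by its relocalised version $(\tau^{(k)}\circ\overline{\Gamma}^{(k)})\{X^\cc_x\}$ and show that no predicted probability changes. By \cref{lemm:relocalisation}, $\tau^{(k)}\circ\overline{\Gamma}^{(k)}$ is positivity preserving and unital. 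The relocalised effects are therefore positive semi-definite real symmetric matrices summing to $\id_{2^kd_X}$, i.e.\ genuine RQT POVMs. For single-party blocks ($k=1$) nothing changes, since $\tau^{(1)}\circ\Gamma=\Gamma$.

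The core step is invariance of the Born rule. For a state $\rho=\rho_1\rtensor\rho_2\rtensor\cdots$ (with each source state in the image of its $\overline{\Gamma}$) and a Kronecker product of effects $E_{M_1}\ktensor\cdots\ktensor E_{M_m}$, I will handle one measurement block at a time. First insert the global projector $\I{n}_{d}$ on both sides of the state, as in the proof of \cref{prop:localRQTislocalQT}. Cyclicity of the trace then turns the Kronecker product of effects into an $R$-product, $E_{M_1}\Rprod\cdots\Rprod E_{M_m}$, by \cref{eq:R_contamination}.

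Next I group the state as $\widetilde{\rho}=\I{n}\rho\,\I{n}$, which lies in the image of $\overline{\Gamma}^{(n)}$. I write the $R$-product of effects as $E_{M_i}\Rprod F$, with $F$ collecting the other blocks. By associativity and \cref{prop:RprodisKprod}, together with \cref{lemm:Rprod_contaminates}(2), the trace becomes a trace against an $n$-fold image. The aim is to reduce the block-$M_i$ contribution to the form $\TrX{}{\overline{\Gamma}^{(k)}\{A\}\,\overline{\Gamma}^{(k)}\{X\}}$, where $A$ is the (complex) partial trace of the QT state against the remaining QT effects on the other blocks. The identity in \cref{lemm:relocalisation} then allows $\overline{\Gamma}^{(k)}\{X\}$ to be swapped for $\tau^{(k)}\circ\overline{\Gamma}^{(k)}\{X\}$. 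Performing this block by block, with relocalised effects always placed back into Kronecker products, replaces all joint effects.

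The main obstacle is this partial-trace reduction. I must justify that tracing a global $R$-product state against the effects of all other blocks yields a matrix on block $M_i$ that is again of the form $\overline{\Gamma}^{(k)}\{A\}$, with its phase carried by $\I{k},\J{k}$ on that block. This is delicate because the relocalised effect is no longer special symmetric with respect to the global structure $\J{n}$. I would first try to prove a partial-trace analogue of \cref{lemm:relocalisation}: for any $\widetilde{B}$ in the $n$-fold image and any real $Y$ on the complement,
\[
\TrX{}{\widetilde{B}\,(\overline{\Gamma}^{(k)}\{X\}\ktensor Y)}=\TrX{}{\widetilde{B}\,((\tau^{(k)}\circ\overline{\Gamma}^{(k)})\{X\}\ktensor Y)}.
\]
I would prove this by expanding $\widetilde{B}=\I{n}\ktensor\ReP{B}+\J{n}\ktensor\ImP{B}$ and using \cref{lemm:I_J_global} (that $\J{n}$ acts as $J$ on any single fold when multiplied against $\I{n}$) to check that only the $\I{k}$/$\J{k}$ components of the block effect survive the trace over the $2\times2$ factors. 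These components are exactly what $\tau^{(k)}$ preserves. Finally, combining this with \cref{prop:gamma_n_properties} gives equality with the QT probability, which completes the lemma.
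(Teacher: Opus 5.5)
Your proposal is correct and follows the paper's proof. You start from \cref{lemm:RNQT_model_multilocal_op_indep}, replace every joint effect $\overline{\Gamma}^{(k)}\{X\}$ by $(\tau^{(k)}\circ\overline{\Gamma}^{(k)})\{X\}$, and use \cref{lemm:relocalisation} both for invariance of the predicted distributions and for unitality and positivity of the new effects. Your blockwise partial-trace identity tightens a step the paper only cites, since \cref{lemm:relocalisation} is stated for full $n$-fold operators rather than for one block inside a Kronecker product. It holds for arbitrary real $Y$ because $\I{k}_{d_X}\,(\tau^{(k)}\circ\overline{\Gamma}^{(k)})\{X\}=\overline{\Gamma}^{(k)}\{X\}$ and $\I{n}$ absorbs $\I{k}\ktensor I^{\otimes(n-k)}$; this is essentially the computation the paper carries out in its proof of \cref{theo:RQTisQT}.
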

\begin{proof}
    Starting from \cref{lemm:RNQT_model_multilocal_op_indep}, we only have to prove that there is a way to remove the modification on effects. This is exactly what the localisation map has been developed for: by \cref{lemm:relocalisation} we can apply the relocalisation map on every effect without changing any predicted distribution\footnote{In case this is not clear: the choice of party for the relocalisation is totally arbitrary. Pick your favourite.}. Since this map still preserves sums and positivity, but makes the $n$-fold standard mapping unital, the effects now satisfy their RQT definition.
\end{proof}

This brings us to the main technical statement of this~\app{} which underlies the proof of~\Cref{theo:QTnetworksareRQTnetworks}.

\MulitilocalRQTisMultilocalQT*
\begin{proof}
    \Cref{lemm:RNQT_model_multilocal_op_indep_tau} guarantees that the QT-image of the RQT model is in isometric bijection with the QT model. The mappings are explicitly obtained as 
    \begin{equation}
        \begin{aligned}
            &M_{\Aa\Bb\dots} = \frac{1}{2}\overline{\Gamma}^{(n)}\:;\\
            \forall i, 1 \leq i \leq m\::\:& E_{M_i} = \tau^{(\abs{M_i})} \circ \overline{\Gamma}^{(\abs{M_i})}\:.
        \end{aligned}
    \end{equation}
\end{proof}
This last proposition exhausts all possible network structures for nonlocal inequalities. Since the standard mapping is an algebra homomorphism, the proof will readily extend to iterative scenarios, as we do in \cref{sec:proof}. 

\begin{remark}
    At the end of Ref.~\cite{Weilenmann2025}, the authors present a proposition that appears similar to \cref{prop:MulitilocalRQTisMultilocalQT}. The conceptual difference is that their proposition `externalises' the phase-encoding subsystems. The $n$ 2-by-2 systems are treated as an additional resource shared by the sources, instead of some non-physical `entanglement of representation' internal to the description of independent sources. 
    
    The technical difference is that this proposition only considers pure states and simplifies the measurements greatly: by seeing the phase-carrying subsystems as external resources, they effectively allow their measurements to be represented by matrices not summing up to the identity---in contrast to the POVM formalism of both QT and RQT. 
    In summary, their proposition is better seen as a special case of \cref{lemm:RNQT_model_multilocal_op_indep} rather than one of \cref{prop:MulitilocalRQTisMultilocalQT}. These simplifications explain why their proof is significantly more concise than ours\footnote{If one ignores the conceptual differences to focus only on the technical content of each result, Proposition 1 in Ref.~\cite{Weilenmann2025} has been independently derived in our previous work~\cite{RNQT} as well as in the work of Barrios et al.~\cite{Barrios2025}. All these results extended upon the works of Myrheim~\cite{Myrheim_1999} and McKague et al.~\cite{McKague}}.
\end{remark} 

\section{Real representation of the complex structure}\label{sec:real_units}
In this appendix, we detail the $2^n$-by-$2^n$ matrix representation of the pair $\{1,i\}$ used to define the $n$-fold standard mapping, and we prove some of its properties needed for our purposes.

\begin{definition}[Two-dimensional real representation of $(1,i)$]\label{def:I_J}
    The following two matrices in $\rr^{2 \times 2}$,
    \begin{subequations}
        \begin{align}
            &I := \begin{pmatrix} 1 &0 \\ 0 & 1 \end{pmatrix} \:,\\
            &J := \begin{pmatrix} 0 & -1 \\ 1 & 0 \end{pmatrix} \:,
        \end{align}
    \end{subequations}
    form what we call a two-dimensional real representation of the complex units $(1,i)$.
\end{definition}
The reason for the name is self-explanatory and is only presented as a definition in order to condense the representation-theory background. These matrices obey the same algebraic rules as the units of $\cc$, i.e., $I^2 = -J^2 = 1$ and $IJ = JI = J$. They are therefore used to represent a complex number as a real $2 \times 2$ matrix. This representation is indeed based on seeing the field $\cc$ as the two-dimensional real vector space spanned by the real $1$ and imaginary $i$ units, and provides an equivalent representation as the two-dimensional subspace of $\rr^{2 \times 2}$ spanned by $I$ and $J$. 

We will need a `global' generalisation of this representation. By this, we mean a representation on an $n$-fold tensor product space which generalises the matrices $I$ and $J$ into their `stabiliser code' logical counterparts $\I{n}$ and $\J{n}$. To do so, we define $(I,J) \in \rr^{2 \times 2}$ as  the single-fold representation $(\I{1},\J{1})$. The $n$-fold generalization $(\I{n},\J{n}) \in \bigotimes_{k=1}^n \rr^{2 \times 2}_k$ is defined by its globality property, meaning that we want these to be irreducible representations of $(1,i)$ such that they treat any local application (with respect to the real Kronecker product) of a $k$-fold representation $(\I{k},\J{k})$, $ k \leq n$, as if it were the corresponding $n$-fold representation. For example, we would need $\I{n} (\J{k} \ktensor \underbrace{I \ktensor I \ktensor \ldots}_{n-k \: \mathrm{terms}} ) $ to be equivalent to $\I{n}\J{k}$.
A recursive definition of such matrices is given by the following.
\begin{definition}[$n$-fold two-dimensional real representation of $(1,i)$]\label{def:I_J_n}
    For $n\in \nn_0$, the matrices $(\I{n},\J{n})$ in $\rr^{2n \times 2n}$ are defined via the following recursive rule:
    \begin{subequations}\label{eq:def_I_J}
        \begin{gather}
        (1,i) \in \cc \times \cc \mapsto (\I{n},\J{n}) \in \rr^{2n\times 2n} \times \rr^{2n \times 2n}\:: \notag \\
            \I{1} := I  \:, \quad \I{n}{} := \frac{1}{2}\left(\I{n-1} \ktensor I  - \J{n-1} \ktensor J\right)\:;\\
            \J{1} := J  \:, \quad \J{n}{} := \frac{1}{2}\left(\J{n-1} \ktensor I + \I{n-1} \ktensor J\right)  \:;
        \end{gather}
    \end{subequations}
    with $ I = \left(\begin{smallmatrix} 1 & 0 \\ 0 & 1 \end{smallmatrix}\right)$
    and $J = \left(\begin{smallmatrix}  0 & -1 \\ 1 & 0 \end{smallmatrix}\right)$.
\end{definition}
\begin{prop}
    The matrices $(\I{n},\J{n})$ provide an irreducible representation in $\rr^{2n \times 2n}$ of the cyclic group $\zz_2$ under matrix multiplication, similarly to how $(1,i)$ provides one in $\cc$.
\end{prop}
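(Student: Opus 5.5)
The plan is to reduce the statement to a handful of algebraic identities, verified by induction on $n$ from the recursion in \cref{def:I_J_n}. Concretely, I would show that for every $n\geq 1$
\begin{equation}
  (\I{n})^2 = \I{n}, \qquad \I{n}\J{n}=\J{n}\I{n}=\J{n}, \qquad (\J{n})^2 = -\I{n}.
\end{equation}
From these relations the map $a+ib\mapsto a\I{n}+b\J{n}$ is a real-algebra homomorphism from $\cc$. It sends $1$ to the idempotent $\I{n}$ rather than to the identity, mirroring $I^2=-J^2=I$ and $IJ=JI=J$ for $(I,J)$. The group elements $\{\pm\I{n},\pm\J{n}\}$ then close under multiplication exactly as $\{\pm1,\pm i\}$ do, which is the group-theoretic content the statement intends.

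\textbf{Inductive step.} The base case $n=1$ is \cref{def:I_J}. For the step, write $\I{}' = \I{n-1}$ and $\J{}'=\J{n-1}$, assume the three identities for them, and expand the products with the mixed-product rule of $\ktensor$ together with $J^2=-I$. For the first identity,
\begin{equation}
  (\I{n})^2 = \tfrac14\big(\I{}'^2\ktensor I - 2\,\J{}'\ktensor J + \J{}'^2\ktensor J^2\big) = \tfrac14\big(2\,\I{}'\ktensor I - 2\,\J{}'\ktensor J\big)=\I{n}.
\end{equation}
Here the two cross terms combine because $\I{}'\J{}'=\J{}'\I{}'=\J{}'$, and the last term uses $\J{}'^2\ktensor J^2 = (-\I{}')\ktensor(-I)$. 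The same bookkeeping gives $(\J{n})^2=\tfrac14(-2\,\I{}'\ktensor I+2\,\J{}'\ktensor J)=-\I{n}$ and $\I{n}\J{n}=\tfrac14(2\,\J{}'\ktensor I+2\,\I{}'\ktensor J)=\J{n}$, with the product in the other order treated symmetrically. The same induction also shows that $\I{n}$ is symmetric and $\J{n}$ is antisymmetric.

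\textbf{Nontriviality and dimension.} Since the map is only useful if it is faithful, I would next show the image is genuinely two-dimensional. Taking traces in the recursion gives $\TrX{}{\I{n}}=\tfrac12\cdot 2\,\TrX{}{\I{n-1}}=\TrX{}{\I{n-1}}$, because $\TrX{}{J}=0$ kills the second term. Hence $\TrX{}{\I{n}}=2$ for all $n$, so $\I{n}$ is a rank-$2$ orthogonal projector. The matrix $\J{n}$ is nonzero because $(\J{n})^2=-\I{n}\neq0$. Moreover $\I{n}$ and $\J{n}$ are linearly independent, since one is symmetric and the other is a nonzero antisymmetric matrix.

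\textbf{Irreducibility (main obstacle).} This is the delicate step, because the ambient space has dimension $2^n$ (not $2n$ as written in the statement), so the representation is not irreducible on all of $\rr^{2^n}$. I would therefore interpret irreducibility on the support of the projector $\I{n}$, which is the meaningful reading. That support $V$ is two-dimensional, and $\J{n}$ maps it to itself because $\J{n}=\I{n}\J{n}\I{n}$. On $V$, $\J{n}$ squares to $-\id_V$, so it has no real eigenvalue and therefore no invariant real line; hence $V$ is irreducible. Equivalently, the span of $\{\I{n},\J{n}\}$ is a field isomorphic to $\cc$. On the complement of $V$ both matrices vanish, so the nontrivial part of the representation is exactly this irreducible block, just as $(I,J)$ acts on $\rr^2$.
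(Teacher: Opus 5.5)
Your proof is correct. The algebraic part is exactly what the paper does: $(\I{n})^2=\I{n}=-(\J{n})^2$ and $\I{n}\J{n}=\J{n}\I{n}=\J{n}$, shown by induction on the recursive definition with the mixed-product rule.

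The difference is in irreducibility. The paper only asserts that it follows from character theory and points to its earlier work. You instead give a self-contained elementary argument:
\begin{itemize}
\item you compute $\TrX{}{\I{n}}=\TrX{}{\I{n-1}}=2$, using $\TrX{}{J}=0$, so the symmetric idempotent $\I{n}$ is a rank-two orthogonal projector;
\item $\J{n}=\I{n}\J{n}\I{n}$ preserves its range $V$ and annihilates $V^\perp$;
\item on $V$ one has $(\J{n})^2=-\mathrm{id}_V$, which rules out an invariant real line.
\end{itemize}

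A character-theoretic proof would rest on the same trace data ($\TrX{}{\I{n}}=2$, $\TrX{}{\J{n}}=0$), so the two routes are close in substance. Yours avoids outside machinery and also pins down the sense in which the claim holds:
\begin{itemize}
\item the matrices are $2^n\times 2^n$, so the $2n$ in the statement is a typo;
\item $1$ is sent to a projector rather than to the identity;
\item for $n\geq 2$, the action on the full space is the irreducible block on $V$ plus a zero block on $V^\perp$.
\end{itemize}
This is precisely how $\I{n}$ is used later in the paper, as the projector onto the QT-image.

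You should state one point outright rather than leaving it as `the group-theoretic content the statement intends'. A two-dimensional real representation of $\zz_2$ is never irreducible, since the real irreducibles of $\zz_2$ are one-dimensional. So the statement must be read as you implicitly read it: as a non-unital representation of $\cc$ as a real algebra. Equivalently, it is a representation of $\{\pm 1,\pm i\}\cong\zz_4$ with the generator squaring to $-1$. This matches the paper's own reformulation at the end of its appendix on the real representation of the complex structure.
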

\begin{proof}
    The following are straightforward to show by induction:
     \begin{subequations}
        \begin{align}
            &\I{n}  = \big(\I{n} \big)^2 = - \big(\J{n}\big)^2 \:; \\
            &\J{n} = \J{n}\I{n} = \I{n}\J{n} \:.
        \end{align}
    \end{subequations}
    The proof of irreducibility can be done via character theory; see our previous work~\cite{RNQT}.
\end{proof}

We will need some notation to prove the properties of these matrices relevant to this \doc. First of all, the notation $A^{\otimes n} := \underbrace{A \otimes A \otimes \ldots \otimes A}_{n}$ will be used for an $n$-fold application of a given combination rule to the matrix $A$. For example, instead of writing $\underbrace{I \ktensor I \ktensor \ldots \ktensor I}_n \in \underbrace{\rr^{2 \times 2} \otimes \rr^{2 \times 2} \otimes \ldots \otimes \rr^{2 \times 2}}_n$, we shall use the more compact $I^{\ktensor n} \in \bigotimes_{i=1}^n \rr^{2 \times 2}$.

We will also need to address independent tensor factors in a composite space. To do so, we associate each factor with a party like Alice, Bob, Charlie, etc. as a handle to individually address factors. That way, we can quickly refer to the dimension of a factor by using the notation $d_A,$ to mean that Alice is associated with a space of $d_A \times d_A$-dimensional matrices.


Finally, we will need a notation to `pad' a matrix with a Kronecker product of unit matrices in order for it to reach a certain size (this procedure is sometimes called an \textit{amplification} in the literature); to do so, we will use a subscript with the corresponding dimension: $A_{d_B} := A \ktensor \id_{d_B}$. Because the Kronecker product is associative and because $\id_{d_Bd_C} = \id_{d_B} \ktensor \id_{d_C}$, several such subscripts can be combined, like in $A_{d_Bd_C} = A \ktensor \id_{d_B} \ktensor \id_{d_C}$ for instance. 
By mixing this notation with the previous ones, we shall then write expressions like 
\begin{subequations}
    \begin{align}
        &I_{A,2_B2_C} = I_{B,2_A2_C} = I_{C,2_A2_B} = I_A \ktensor I_B \ktensor I_C \:; \\
        &I_{A,2_B2_D} \ktensor J_C = I_A \ktensor I_B \ktensor J_C \ktensor I_D \:.
    \end{align}
\end{subequations}
Also, when there is no risk of confusion, the labels may be dropped. In the first line, we could have written $I_{2_B2_C}$ or $ I^{\ktensor 3} = I \ktensor I \ktensor I$ instead of the leftmost term in each equation.

\begin{prop}[Characterisation of the $n$-fold representation]
    \begin{subequations}\label{eq:I_J_characterisation}
        \begin{align}
            &\I{n}{} = \sum_{\substack{\Vec{i}=(i_1 i_2 ... i_n) \in \{0,1\}^n,\\ \sum_k i_k \text{ is even}}} \frac{1}{2^{n-1}} (-1)^{\frac{\sum_k i_k}{2}}J^{i_1} \ktensor J^{i_2} \ktensor \ldots \ktensor J^{i_n}\:; \label{eq:I_characterisation}\\
            &\J{n}{} = \sum_{\substack{\Vec{i}=(i_1 i_2 ... i_n) \in \{0,1\}^n,\\ \sum_k i_k \text{ is odd}}} \frac{1}{2^{n-1}} (-1)^{\frac{\sum_k i_k-1}{2}}J^{i_1} \ktensor J^{i_2} \ktensor \ldots \ktensor J^{i_n}\:; \label{eq:J_characterisation}
        \end{align}
    \end{subequations}
\end{prop}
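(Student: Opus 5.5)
The natural route is induction on $n$, using the recursive definition in \cref{def:I_J_n} and matching coefficients term by term. For $n=1$ the formulas reduce to $\I{1}=J^0=I$ (only $\vec i=(0)$ has even weight, with prefactor $2^0=1$ and sign $(-1)^0$) and $\J{1}=J$ (only $\vec i=(1)$, sign $(-1)^0$). So the base case is immediate.

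For the inductive step, assume \cref{eq:I_characterisation,eq:J_characterisation} hold at level $n-1$ and substitute them into
\[
\I{n}=\tfrac12\left(\I{n-1}\ktensor I-\J{n-1}\ktensor J\right),\qquad \J{n}=\tfrac12\left(\J{n-1}\ktensor I+\I{n-1}\ktensor J\right).
\]
Each string $\vec i=(i_1,\dots,i_n)$ arises from exactly one term, according to its last index $i_n$:
\begin{itemize}
\item if $i_n=0$, it comes from the first summand with $\vec i'=(i_1,\dots,i_{n-1})$ of the same weight parity;
\item if $i_n=1$, it comes from the second summand with $\vec i'$ of the opposite parity.
\end{itemize}
In every case the prefactor $\tfrac12\cdot\tfrac1{2^{n-2}}=\tfrac1{2^{n-1}}$ is correct. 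So the real content is checking the signs.

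Write $s=\sum_{k<n}i_k$. For $\I{n}$ there are two cases:
\begin{itemize}
\item $i_n=0$, $s$ even: the sign is $(-1)^{s/2}$, which matches because the total weight is also $s$.
\item $i_n=1$, $s$ odd: the inherited sign $(-1)^{(s-1)/2}$ is multiplied by the explicit $-1$, giving $(-1)^{(s+1)/2}=(-1)^{(\sum_k i_k)/2}$, as required.
\end{itemize}
For $\J{n}$ there are also two cases:
\begin{itemize}
\item $i_n=0$, $s$ odd: the sign $(-1)^{(s-1)/2}$ matches directly.
\item $i_n=1$, $s$ even: the sign is $(-1)^{s/2}=(-1)^{((s+1)-1)/2}$, which matches with total weight $s+1$.
\end{itemize}

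The only real obstacle is this sign bookkeeping, in particular getting the $-\J{n-1}\ktensor J$ term right. Writing it in the half-integer exponent form above makes each case a one-line parity check. A cleaner alternative, which I would mention as a sanity check, is to note that $\I{n}+\J{n}\,\hat\imath$ formally corresponds to $2^{-(n-1)}\prod_k(I+\hat\imath J)$ restricted to real and imaginary parts. Expanding $(1+iJ)^{\otimes n}$ and separating even and odd powers of $i$ gives exactly the signs $i^{|\vec i|}$, which are real for even weight and $i$ times real for odd weight.
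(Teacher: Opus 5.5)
Your proof is correct and follows the same route as the paper: induction on $n$ through the recursive definition, with each string $\vec i$ assigned to a summand according to its last bit $i_n$. Your explicit sign check for the $-\J{n-1}\ktensor J$ term, where $-(-1)^{(s-1)/2}$ becomes $(-1)^{(s+1)/2}$, spells out a step the paper leaves implicit, and the extra $n=2$ base case the paper checks is not needed.
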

\begin{proof}
    Using \cref{def:I_J_n}, the cases $n=1$, i.e. $\I{1}=I, \J{1}=J$ and $n=2$, i.e. $\I{2}=\frac{1}{2}\left( I \ktensor I - J\ktensor J \right)$, $\J{2}=\frac{1}{2}\left( I \ktensor J + J\ktensor I \right)$ can be proven to fit \cref{eq:I_J_characterisation} by direct inspection. We shall now use the $n=2$ case to prove the induction on $n$.

    Suppose it holds for $n-1$, then for the case $n$, we have for $\I{n}$,
    \begin{equation}
        \begin{aligned}
            \I{n}{} = \frac{1}{2}\Big(&\big(\sum_{\substack{\Vec{i} \in \{0,1\}^{n-1},\\ \sum_k i_k \text{ is even}}} \frac{1}{2^{n-2}} (-1)^{\frac{\sum_k i_k}{2}}J^{i_1} \ktensor J^{i_2} \ktensor \ldots \ktensor J^{i_{n-1}}\big) \ktensor I\\
            &- \big( \sum_{\substack{\Vec{i} \in \{0,1\}^{n-1},\\ \sum_k i_k \text{ is odd}}} \frac{1}{2^{n-2}} (-1)^{\frac{\sum_k i_k-1}{2}}J^{i_1} \ktensor J^{i_2} \ktensor \ldots \ktensor J^{i_{n-1}} \big) \ktensor J \Big) \\
            = &\big(\sum_{\substack{\Vec{i} \in \{0,1\}^{n-1},\\ \sum_k i_k \text{ is even}}} \frac{1}{2^{n-1}} (-1)^{\frac{\sum_k i_k}{2}}J^{i_1} \ktensor J^{i_2} \ktensor \ldots \ktensor J^{i_{n-1}} \ktensor I \big) \\
            &- \big( \sum_{\substack{\Vec{i} \in \{0,1\}^{n-1},\\ \sum_k i_k \text{ is odd}}} \frac{1}{2^{n-1}} (-1)^{\frac{\sum_k i_k-1}{2}}J^{i_1} \ktensor J^{i_2} \ktensor \ldots \ktensor J^{i_{n-1}} \ktensor J \big) \:. \\
            = &\big(\sum_{\substack{\Vec{j} \in \{0,1\}^{n-1}, j_n =0,\\ \sum_k^{n-1} j_k \text{ is even}}} \frac{1}{2^{n-1}} (-1)^{\frac{\sum_k j_k}{2}}J^{j_1} \ktensor J^{j_2} \ktensor \ldots \ktensor J^{i_{n-1}} \ktensor J^{j_n} \big) \\
            &- \big( \sum_{\substack{\Vec{j} \in \{0,1\}^{n}, j_n = 1\\ \sum_k^{n-1} j_k \text{ is odd}}} \frac{1}{2^{n-1}} (-1)^{\frac{\sum_k i_k-1}{2}}J^{i_1} \ktensor J^{i_2} \ktensor \ldots \ktensor J^{i_{n-1}} \ktensor J^{j_n} \big) \:. \\
        \end{aligned}
    \end{equation}
    This is exactly $\I{n}$ as in \cref{eq:I_characterisation}, since the first term is the sum up to $n$ but only for the cases where the last bit is fixed to 0, $\Vec{i}=(i_1, i_2, ... i_{n-1}, \: i_n=0)$ whereas the second is the one for the cases where it is fixed to 1, $\Vec{i}=(i_1, i_2, ... i_{n-1}, \: i_n= 1)$. Since $i_n$ can only have a value in $\{0,1\}$, this covers all cases proving the induction.
    
    The proof of \eqref{eq:J_characterisation} is a similar induction.
\end{proof}
This immediately implies
\begin{corollary}
    The matrices $(\I{n}, \J{n})$ are tensor-factor-permutation-independent.     
\end{corollary}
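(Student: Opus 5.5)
The corollary claims that $(\I{n},\J{n})$ are invariant under any permutation of the $n$ tensor factors $\rr^{2\times 2}$. I would prove it directly from the characterisation \cref{eq:I_J_characterisation} just established, rather than from the recursive \cref{def:I_J_n}, which singles out the last factor.

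\textbf{Setup.} Let $\pi\in S_n$ act on $\bigotimes_{k=1}^n\rr^{2\times 2}$ by the associated swap (commutation) matrix $P_\pi$. Conjugation by $P_\pi$ permutes Kronecker factors:
\begin{equation}
P_\pi\,(X_1\ktensor\cdots\ktensor X_n)\,P_\pi^{T} = X_{\pi^{-1}(1)}\ktensor\cdots\ktensor X_{\pi^{-1}(n)} .
\end{equation}
This holds for arbitrary $2\times 2$ factors, and it extends to the whole algebra by linearity.

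\textbf{Key step.} Apply this to each term $J^{i_1}\ktensor\cdots\ktensor J^{i_n}$ of \cref{eq:I_characterisation}. Conjugation sends it to $J^{i_{\pi^{-1}(1)}}\ktensor\cdots\ktensor J^{i_{\pi^{-1}(n)}}$, which is the term indexed by the permuted bit string $\vec i\circ\pi^{-1}$. Three things are unchanged under this relabelling:
\begin{itemize}
\item the Hamming weight $\sum_k i_k$, so the parity constraint selecting which strings appear;
\item the sign $(-1)^{\sum_k i_k/2}$ (respectively $(-1)^{(\sum_k i_k-1)/2}$ for $\J{n}$);
\item the prefactor $2^{1-n}$.
\end{itemize}
The map $\vec i\mapsto \vec i\circ\pi^{-1}$ is a bijection of $\{0,1\}^n$ preserving weight. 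Hence it bijects the even-weight index set onto itself and the odd-weight index set onto itself. The sum is therefore merely reindexed, giving $P_\pi\I{n}P_\pi^T=\I{n}$, and the same argument gives $P_\pi\J{n}P_\pi^T=\J{n}$.

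\textbf{Extension to padded matrices.} For the padded versions used elsewhere, such as $\I{n}_{d_A}$, the same conclusion holds when permuting only the $\rr^{2\times2}$ factors. Those swaps act trivially on the $\id_{d}$ factors, so invariance carries over unchanged.

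\textbf{Main obstacle.} The only delicate point is the characterisation itself. If one doubted \cref{eq:I_J_characterisation}, the fallback would be to check invariance under adjacent transpositions using the recursion. That route needs the identity that $\I{2}$ and $\J{2}$ are swap-symmetric (immediate by inspection) together with associativity. Since the characterisation is already proven, however, the argument reduces to the reindexing above and nothing further is needed.
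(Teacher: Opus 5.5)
Your proof is correct and takes the same route as the paper. The paper also obtains the corollary directly from \cref{eq:I_J_characterisation}, noting that the sets of even- and odd-weight words in $\{0,1\}^n$ are permutation-invariant; that is exactly your reindexing step. Your added details (the explicit conjugation by $P_\pi$, the unchanged signs and prefactor, and the remark on padded matrices) are accurate and simply spell out that one-line argument.
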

since the set of all even (resp. odd) words $\Vec{i}\in \{0,1\}^n$ is permutation-invariant.

Also, the same inductive techniques as in the proof of \cref{eq:I_J_characterisation} can be used to further regroup the terms on the right as a $k$-fold representation. 
\begin{corollary}\label{coro:I_J_groupping}
    For every $k \in \nn_0, k \leq n$, the definition \eqref{eq:def_I_J} verifies:
    \begin{subequations}\label{eq:I_J_groupping}
        \begin{gather}
            \I{n}{} = \frac{1}{2}\left(\I{n-k} \ktensor \I{k}  - \J{n-k} \ktensor \J{k}\right)\:;\\
            \J{n}{} = \frac{1}{2}\left(\J{n-k} \ktensor \I{k} + \I{n-k} \ktensor \J{k}\right)  \:.
        \end{gather}
    \end{subequations}
\end{corollary}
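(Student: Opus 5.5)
The plan is to prove \cref{eq:I_J_groupping} by induction on $k$, proving both identities (for $\I{n}$ and $\J{n}$) together and for all $n$ at once. Here $1\leq k\leq n-1$. The cases $k=0$ and $k=n$ are degenerate: there one of the factors is a $0$-fold representation. I would dispose of them by a convention on $(\I{0},\J{0})$, or simply exclude them, since only $1\leq k\leq n-1$ is used later.

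The base case $k=1$ is exactly the recursive \cref{def:I_J_n}, because $\I{1}=I$ and $\J{1}=J$.

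For the inductive step, assume the identities hold for $k$, for every $n'\geq k+1$, and let $n\geq k+2$. In the right-hand side $\frac{1}{2}\left(\I{n-k-1}\ktensor\I{k+1}-\J{n-k-1}\ktensor\J{k+1}\right)$, expand $\I{k+1}$ and $\J{k+1}$ with the recursive definition. This gives four terms, each ending in either $I$ or $J$ on the last factor. Collect them according to that last factor:
\begin{equation*}
\frac{1}{4}\left(\I{n-k-1}\ktensor\I{k}-\J{n-k-1}\ktensor\J{k}\right)\ktensor I-\frac{1}{4}\left(\I{n-k-1}\ktensor\J{k}+\J{n-k-1}\ktensor\I{k}\right)\ktensor J .
\end{equation*}
Apply the induction hypothesis with $n'=n-1$ and the same $k$. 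The two brackets become $2\I{n-1}$ and $2\J{n-1}$ respectively. The expression therefore equals $\frac{1}{2}\left(\I{n-1}\ktensor I-\J{n-1}\ktensor J\right)=\I{n}$ by \cref{def:I_J_n}. The identity for $\J{n}$ follows from the identical computation with the roles of the terms swapped.

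The only point needing care, and the step I expect to be the main obstacle, is bookkeeping rather than algebra. The recursion appends the new factor on the right, so the $k$-fold block must be the rightmost block for the regrouping to match. Any other placement of the blocks is covered by the tensor-factor-permutation independence corollary stated just above.

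As a cross-check, or as an alternative direct proof, one can use the characterisation \cref{eq:I_J_characterisation}. An even word on $n$ bits splits into (even, even) or (odd, odd) subwords on $n-k$ and $k$ bits. The signs $(-1)^{\sum i/2}$ combine correctly, with an extra $-1$ in the odd–odd case. The prefactors $2^{-(n-k-1)}\cdot2^{-(k-1)}=2\cdot2^{-(n-1)}$ account for the overall factor $\frac{1}{2}$. The analogous parity split, odd $=$ (odd, even) or (even, odd), gives the $\J{n}$ identity.
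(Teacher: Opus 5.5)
Your proof is correct. Expanding $\I{k+1}$ and $\J{k+1}$ by the recursion, grouping by the last factor, and applying the hypothesis at level $n-1$ (which needs $n-1\geq k+1$, as you arrange) gives exactly $\frac{1}{2}(\I{n-1}\ktensor I-\J{n-1}\ktensor J)=\I{n}$ and $\frac{1}{2}(\J{n-1}\ktensor I+\I{n-1}\ktensor J)=\J{n}$.

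The paper only sketches its argument. It reuses the inductive technique from the closed form \eqref{eq:I_J_characterisation} to regroup its word sums into a $k$-fold block, which is essentially your cross-check. Even words split as even--even or odd--odd, and odd words as odd--even or even--odd. The odd--odd case carries the extra sign, and $2^{-(n-k-1)}2^{-(k-1)}=2\cdot 2^{-(n-1)}$ absorbs the $\frac{1}{2}$. Your main route is more self-contained: it needs only the recursive definition and associativity of $\ktensor$, with no closed form and no sign bookkeeping. The word-sum route, once the closed form is available, gives every $k$ at once and makes the permutation symmetry manifest.

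Two small simplifications are available.
\begin{itemize}
\item The placement issue is milder than you suggest. A $k$-fold block on the left is just the stated identity with $k$ replaced by $n-k$, so only non-contiguous blocks need the permutation corollary.
\item The degenerate cases need not be excluded. Take the $1\times 1$ convention $(\I{0},\J{0})=(2,0)$, not the naive $\I{0}=1$. It is forced by requiring the recursion to return $\I{1}=I$ and $\J{1}=J$. It also agrees with the closed form at $n=0$ (only the empty word, prefactor $2$). With it, both identities are trivial at $k=0$ and $k=n$, so your induction can start at $k=0$ and cover the full stated range.
\end{itemize}
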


This last property can be used to show that representation on more subsystems acts globally in the following sense: 
\begin{lemma}\label{lemm:I_J_global}
    The matrices $(\I{n},\J{n})$ are a global representation of $(1,i)$ in the sense that they behave properly when multiplying elements of a smaller-dimensional representation $(\I{k},\J{k})$ with $k \leq n$:
     \begin{subequations}
        \begin{align}
            \forall k \leq n,& \notag \\
            &(\I{k} \ktensor I^{\otimes (n-k)})\I{n} = \I{n}(\I{k} \ktensor I^{\otimes (n-k)}) = (\I{n})^2\:;\\
            &(\J{k} \ktensor I^{\otimes (n-k)})\J{n}{} = \J{n}{}(\J{k} \ktensor I^{\otimes (n-k)}) = (\J{n}{})^2 \:;\\
            &(\I{k} \ktensor I^{\otimes (n-k)})\J{n} = \J{n}(\I{k} \ktensor I^{\otimes (n-k)}) = \J{n} \:;\\
            & (\J{k} \ktensor I^{\otimes (n-k)})\I{n}{} = \I{n}{}(\J{k} \ktensor I^{\otimes (n-k)}) = \J{n}\:.
        \end{align}
    \end{subequations}
\end{lemma}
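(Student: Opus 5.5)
The plan is a direct algebraic computation that reduces each of the four identities to the algebra of the $k$-fold pair $(\I{k},\J{k})$. First I dispose of the edge cases. For $k=n$ the statements are exactly the relations $\I{n}=(\I{n})^2=-(\J{n})^2$ and $\J{n}=\I{n}\J{n}=\J{n}\I{n}$ from the proposition establishing that $(\I{n},\J{n})$ represent $(1,i)$. For $k=0$, reading $(\I{0},\J{0})$ as the trivial factor, the statements reduce to the same relations.

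For $1\le k<n$, I will use \cref{coro:I_J_groupping} to split the $n$-fold matrices into a $k$-fold block on the first $k$ factors and an $(n-k)$-fold block on the rest. The corollary as written places the $k$-fold block on the right. By the tensor-factor-permutation independence corollary (or by rerunning the same induction with the roles of the blocks exchanged), I can instead write
\begin{equation*}
\I{n}=\tfrac12\left(\I{k}\ktensor\I{n-k}-\J{k}\ktensor\J{n-k}\right),\qquad
\J{n}=\tfrac12\left(\J{k}\ktensor\I{n-k}+\I{k}\ktensor\J{n-k}\right).
\end{equation*}
The key observation is that $\I{k}\ktensor I^{\otimes(n-k)}$ and $\J{k}\ktensor I^{\otimes(n-k)}$ act as the identity on the last $n-k$ factors. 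The Kronecker mixed-product rule therefore only requires the $k$-fold relations $(\I{k})^2=\I{k}$, $\I{k}\J{k}=\J{k}\I{k}=\J{k}$ and $(\J{k})^2=-\I{k}$ on the first factor. Importantly, no relation between $I^{\otimes(n-k)}$ and $\I{n-k}$ is needed, which is fortunate because $I^{\otimes(n-k)}\neq\I{n-k}$ in general.

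The four computations then follow from these splittings.
\begin{itemize}
\item $(\I{k}\ktensor I^{\otimes(n-k)})\I{n}$ gives back $\tfrac12(\I{k}\ktensor\I{n-k}-\J{k}\ktensor\J{n-k})=\I{n}=(\I{n})^2$.
\item $(\I{k}\ktensor I^{\otimes(n-k)})\J{n}=\J{n}$ in the same way.
\item $(\J{k}\ktensor I^{\otimes(n-k)})\I{n}=\tfrac12(\J{k}\ktensor\I{n-k}+\I{k}\ktensor\J{n-k})=\J{n}$, using $(\J{k})^2=-\I{k}$.
\item $(\J{k}\ktensor I^{\otimes(n-k)})\J{n}=\tfrac12(-\I{k}\ktensor\I{n-k}+\J{k}\ktensor\J{n-k})=-\I{n}=(\J{n})^2$.
\end{itemize}
The right multiplications follow identically, because $\I{k}$ and $\J{k}$ commute with each other and with themselves, so the same expressions appear on the other side.

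The only step I expect to need care is the legitimacy of the splitting with the $k$-fold block on the left. If the permutation corollary is judged insufficient, I would instead prove the left-grouped version directly by induction on $n-k$, starting from \cref{def:I_J_n} exactly as in the proof of the characterisation \eqref{eq:I_J_characterisation}. The rest is routine bookkeeping of signs.
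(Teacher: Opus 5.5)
Your proof is correct and is essentially the paper's argument: the paper likewise expands $\I{n}$ and $\J{n}$ with the $k$-fold block on the left, then multiplies by $\I{k}\ktensor I^{\otimes(n-k)}$ or $\J{k}\ktensor I^{\otimes(n-k)}$, using only $(\I{k})^2=\I{k}$, $\I{k}\J{k}=\J{k}\I{k}=\J{k}$ and $(\J{k})^2=-\I{k}$ on that block. The splitting you were worried about needs neither the permutation corollary nor a new induction, since it is just \cref{coro:I_J_groupping} with $k$ replaced by $n-k$; the $k=0$ case is better excluded than read as a ``trivial factor'', because no scalar $\J{0}$ squares to $-1$.
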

\begin{proof}
    The proof is the same for every equation and relies on \cref{eq:I_J_groupping}. We prove the first line as an example:
    \begin{multline}
        \I{n} (\I{k} \ktensor I^{\otimes (n-k)})= \frac{1}{2}\left(\I{k} \ktensor \I{n-k}  - \J{k} \ktensor \J{n-k}\right)(\I{k} \ktensor I^{\otimes (n-k)})\\
        = \frac{1}{2}\left((\I{k})^2 \ktensor \I{n-k}  - (\J{k}\I{k}) \ktensor \J{n-k}\right)\\
        = (\I{k} \ktensor I^{\otimes (n-k)})\frac{1}{2}\left(\I{k} \ktensor \I{n-k}  - \J{k} \ktensor \J{n-k}\right)\\
        = (\I{k} \ktensor I^{\otimes (n-k)})\I{n} \\
        = \frac{1}{2}\left((\I{k})^2 \ktensor \I{n-k}  - (\J{k}\I{k}) \ktensor \J{n-k}\right)\\
        = \frac{1}{2}\left(\I{k} \ktensor \I{n-k}  - \J{k} \ktensor \J{n-k}\right) \\ = (\I{n})^2\\
        =\I{n}\:.
    \end{multline}
\end{proof}
In a more technically accurate language, a global representation is the tensor product of representations of algebras. In this case, $(\I{n},\J{n})$ is the $n-$fold tensor product of the 2-dimensional irreducible representation $(I,J)$ of $\cc$ seen as a real algebra. In terms of representation of algebras, the above lemma is actually a necessary condition for $(\I{n},\J{n})$ to be called a tensor product of representations.

\section{Proof of \cref{theo:RQTisQT}}\label{sec:proof} 
This appendix shows how our main theorem is implied by \cref{prop:MulitilocalRQTisMultilocalQT}. That is, 
how that every network scenario that has a QT explanation also has an RQT explanation implies that every experiment with a QT explanation will have one with an RQT explanation. 

In \cref{sec:proof_setup}, we make more explicit what is meant by the statement:
\RQTisQT*
To do so, we introduce a general description of experimental protocols using the quantum channel formalism. With this general QT description of `an experiment', we can then introduce its RQT counterpart and obtain a technically precise rephrasing of the above theorem statement. 

In \cref{sec:proof_proof}, we then prove the theorem by generalising the proof technique of \cref{prop:MulitilocalRQTisMultilocalQT}, restated below for completeness.
\MulitilocalRQTisMultilocalQT*

\subsection{Setup and statement of the theorem}\label{sec:proof_setup}
\paragraph{Quantum-Theoretical description of the experimental setup.} 
Consider an $n$-partite quantum system associated with the joint Hilbert space $\mathcal H=\bigotimes_{j=1}^n \mathcal H_{S_j}$, where each of its subsystems $S_1,\ldots,S_n$ is associated with a tensor factor of the Hilbert space $\mathcal{H}_{S_1}, \ldots ,\mathcal{H}_{S_n}$. 
Let $\rho^{(0)}$ be the arbitrary initial state of the joint system\footnote{This initial state can be pictured as any prior on the `state of the global past'. For all intents and purposes, it can describe the electrical noise in the experimentalists' devices before they turn them on, just as it can describe the universe right after the Big Bang.} 
The dynamics of the experiment consist of the successive operations made by parties on some part(s) of the system. We organise the operations into rounds, each round consisting of the parties applying a quantum channel (encompassing the deterministic operations (some of) the parties have performed during this round, like e.g., preparing the system in a given state), followed by them potentially applying a generalised measurement (encompassing the probabilistic operations (some of) the parties have performed during this round, like e.g., measuring the part of the system). 

These operations are associated with measurement outcomes, and these outcomes can influence the operation performed at a subsequent round. In other words, if $x_T$ represents the outcome(s) obtained at round $T$, then the operations at round $T+1$ may be conditioned according to the string $\mathbf{x}_{\leq T}:= (x_1,x_2,\ldots,x_T)$ of all the outcomes obtained during the $T$ previous rounds. 

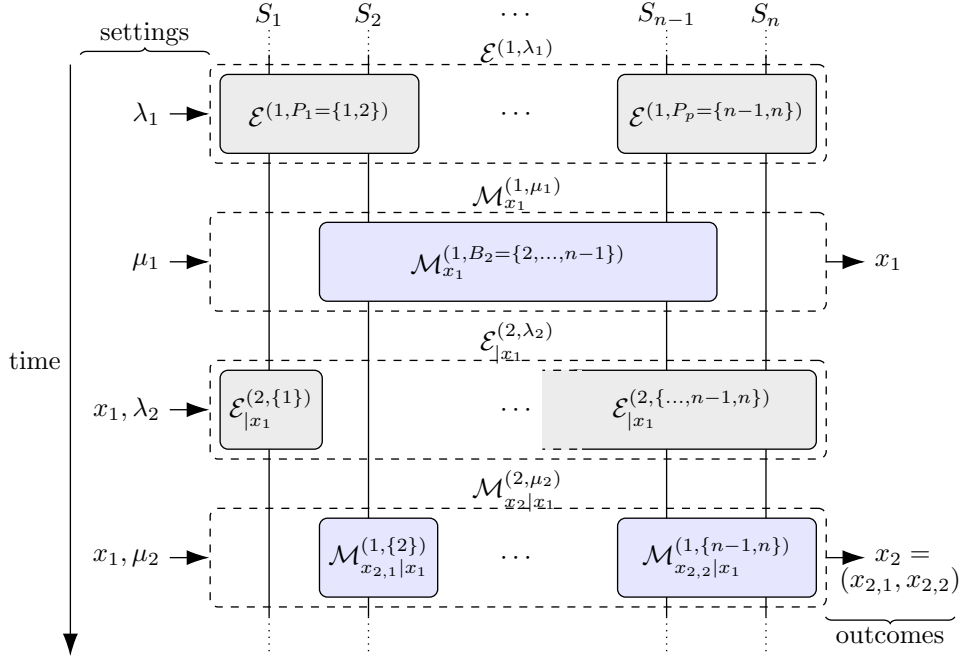
\begin{figure}[htb]
	\centering
	\begin{tikzpicture}[
		x=0.05\linewidth,
		y=0.05\linewidth,
		>=Latex,
		line width=0.5pt,
		worldline/.style={gray!60},
		dotchan/.style={
			draw,
            dashed,
			rounded corners=3pt,
			minimum height=.1\linewidth,
			inner xsep=8pt,
			inner ysep=3pt
		},
        chan/.style={
			draw,
			rounded corners=3pt,
			fill=gray!15,
			minimum height=.08\linewidth,
			inner ysep=3pt
		},
		meas/.style={
			draw,
			rounded corners=3pt,
			fill=blue!10,
			minimum height=.08\linewidth,
			inner ysep=3pt
		}
		]
		
		\node at (0,1) {$S_1$};
		\node at (2,1) {$S_2$};
		\node at (5,1) {$\cdots$};
		\node at (8,1) {$S_{n-1}$};
		\node at (10,1) {$S_n$};

		\foreach \x in {0,2,8,10} {
			\draw (\x,0.2) -- (\x,-11.2);
            \draw[dotted] (\x,0) -- (\x,0.7);
            \draw[dotted] (\x,-11.2) -- (\x,-11.9);
		}
		
		\draw[-{Latex[length=3mm,width=2mm]}] (-4,0) -- (-4,-12);
		\node[left] at (-4,-6) {time};

        \path[draw,decorate,decoration=brace] (-3.8,0.2) -- (-1.2,0.2)
        node[midway,above]{settings};
        \path[draw,decorate,decoration=brace] (13.8,-11.2) -- (11.2,-11.2) 
        node[midway,below]{outcomes};
        \node[dotchan, minimum width=.62\linewidth, anchor=west] at (-1.2,-1) {$\cdots$};
        \node[above] at (5,-0.15) {$\mathcal E^{(1,\lambda_1)}$};
        \draw[-{Latex[length=3mm,width=2mm]}] (-2,-1) -- (-1.2,-1);
		\node[left] at (-2,-1) {$\lambda_1$};
		\node[chan, minimum width=.2\linewidth, anchor=west] at (-1,-1) {$\mathcal E^{(1,P_1 = \{1,2\})}$};
		\node[chan, minimum width=.2\linewidth, anchor=west] at (7,-1) {$\mathcal E^{(1,P_p = \{n-1,n\})}$};
		
        \node[dotchan, minimum width=.62\linewidth, anchor=west] at (-1.2,-4) {$\cdots$};
        \node[above] at (5,-3.15) {$\mathcal M^{(1,\mu_1)}_{x_1}$};
        \draw[-{Latex[length=3mm,width=2mm]}] (-2,-4) -- (-1.2,-4);
		\node[left] at (-2,-4) {$\mu_1$};
        \draw[-{Latex[length=3mm,width=2mm]}] (11.2,-4) -- (12,-4);
		\node[right] at (12,-4) {$x_1$};
		\node[meas, minimum width=.4\linewidth, anchor=west] at (1,-4) {$\mathcal M^{(1,B_2 = \{2,\ldots,n-1\})}_{x_1}$};

        \node[dotchan, minimum width=.62\linewidth, anchor=west] at (-1.2,-7) {$\cdots$};
        \node[above] at (5,-6.25) {$\mathcal E^{(2,\lambda_2)}_{|x_1}$};
        \draw[-{Latex[length=3mm,width=2mm]}] (-2,-7) -- (-1.2,-7);
		\node[left] at (-2,-7) {$x_1,\lambda_2$};
		\node[chan, minimum width=.1\linewidth, anchor=west] at (-1,-7) {$\mathcal E^{(2,\{1\})}_{|x_1}$};
		\node[chan, minimum width=.25\linewidth, anchor=west] at (6,-7) {$\mathcal E^{(2,\{\dots,n-1,n\})}_{|x_1}$};
        \node[anchor=center, minimum height=.081\linewidth, minimum width=.04\linewidth, fill=gray!15] at (5.9,-7) {};
        \draw[dashed] (6.2,-6.2) --  (5.5,-6.2);
        \draw[dashed] (6.2,-7.8) --  (5.5,-7.8);

        \node[dotchan, minimum width=.62\linewidth, anchor=west] at (-1.2,-10) {$\cdots$};
        \node[above] at (5,-9.25) {$\mathcal M^{(2,\mu_2)}_{x_2|x_1}$};
        \draw[-{Latex[length=3mm,width=2mm]}] (-2,-10) -- (-1.2,-10);
		\node[left] at (-2,-10) {$x_1,\mu_2$};
        \draw[-{Latex[length=3mm,width=2mm]}] (11.2,-10) -- (12,-10);
		\node[right] at (12,-10) {$x_2=$};
        \node[right,below] at (12.7,-10) {$(x_{2,1},x_{2,2})$};
		\node[meas, minimum width=.1\linewidth, anchor=west] at (1,-10) {$\mathcal M^{(1,\{2\})}_{x_{2,1}|x_1}$};
		\node[meas, minimum width=.2\linewidth, anchor=west] at (7,-10) {$\mathcal M^{(1,\{n-1,n\})}_{x_{2,2}|x_1}$};
		
	\end{tikzpicture}
	\caption{
		Schematic space--time representation of the alternating channel--measurement
		process on an $n$-partite system. Vertical lines denote worldlines of subsystems, with time flowing towards the bottom of the page. The overall operations on the $n$ subsystems are represented by dashed boxes. Each operation carries a classical locality label (shown on the left), encoding the partition of subsystems across which the corresponding channel or measurement factorises into non-trivial sub-operations. These sub-operations are represented by grey boxes when they are channels and blue boxes when they are measurements, with the horizontal extent of each box indicating the subset of subsystems on which the corresponding sub-operation acts nontrivially. Measurement rounds additionally produce classical outcomes (shown on the right). Note that when several local measurements occur, as is the case in the last operation (measurement phase of the second round), the measurement outcomes are split into sub-outcomes, $x_t = (x_{t,1},x_{t,2},\dots)$. Each sub-outcome $x_{t,i}$ corresponds to what a local party has observed when measuring their respective set of subsystems $B_i$.
	}
	\label{fig:alternating-channel-measurement}
\end{figure}
The operations may have some local constraints; it can be assumed that the channel applied by the parties at a given round splits into two sub-channels in Kronecker product because the parties, together with the subsystems they have been given, are assumed spacelike separated. 
Hence, in addition to the measurement outcomes, we explicitly associate with each operation a piece of classical \emph{locality data}. More precisely, at round $t$ and conditional on the previous outcome history $\mathbf{x}_{<t}=(x_1,\ldots,x_{t-1})$, the channel is specified not only by a completely positive trace-preserving map $\mathcal E^{(t,\lambda_t)}_{|\mathbf{x}_{<t}}:\mathcal L(\mathcal H)\to\mathcal L(\mathcal K)$, with $\mathcal{K}$ another $n$-partite Hilbert space\footnote{Bear in mind that the dimension of the Hilbert space associated with a subsystem can be reduced to $1$ after the action of a channel in order to represent the discarding of it. Conversely, the action of a channel can also increase the dimension from $1$ to any positive integer in order to represent the re-preparation of the subsystem. For example, for the subsystem $S_i$ associated with Hilbert space $\mathcal{H}_i$ before the action of $\mathcal E^{(t,\lambda_t)}_{|\mathbf{x}_{<t}}$ and with $\mathcal{K}_i$ afterwards, if $\mathrm{dim}\left(\mathcal{H}_i\right) > \mathrm{dim}(\mathcal{K}_i) = 1$ then $S_i$ was discarded during the operation.}, but also by a classical locality label $\lambda_t$ which may depends on the outcome history ${\mathbf{x}_{<t}}$. This label determines a partitioning 
$\mathcal{P}(\lambda_t) = \{P_1, P_2, \dots, P_p\}$
of the set of parties $\{S_1,\ldots,S_n\}$ into $p \leq n$ subsets, with the interpretation that the channel factorises across these blocks:
\begin{equation}
    \mathcal E^{(t,\lambda_t)}_{|\mathbf{x}_{<t}} = \bigotimes_{\alpha=1}^p{\!\vphantom{\bigotimes}}_K\: \mathcal E^{(t,P_\alpha)}_{|\mathbf{x}_{<t}}.
\end{equation}
Here, singleton blocks and identity factors are permitted, so that channels acting nontrivially only on a subset of parties are included as special cases. For example, the singleton partition $\mathcal{P}(\lambda_t) =\{P_1\}$ such that $P_1=\{1,\ldots,n\}$ corresponds to a fully global channel. For another example, a label of the form
\begin{equation}
\mathcal{P}(\lambda_t) = \bigl\{\{5,6\},\{11,12,13\},\{1\},\{2\},\ldots\bigr\}
\end{equation}
indicates a channel that factorises into one component acting on $S_5S_6$,
another acting on $S_{11}S_{12}S_{13}$, and local or identity components on
the remaining parties.

Similarly, the measurement at round $t$ is specified by POVMs of the form $\big\{M_{x_t|\mathbf{x}_{<t}}^{(t,\mu_t)} = \bigl(N_{x_t|\mathbf{x}_{<t}}^{(t,\mu_t)}\bigr)^\dagger
N_{x_t|\mathbf{x}_{<t}}^{(t,\mu_t)}\big\}_{x_t}$, where $\{M_{x_t|\mathbf{x}_{<t}}^{(t,\mu_t)}\}_{x_t}$ is a valid POVM, i.e, $\sum_{x_t} \bigl(N_{x_t|\mathbf{x}_{<t}}^{(t,\mu_t)}\bigr)^\dagger
N_{x_t|\mathbf{x}_{<t}}^{(t,\mu_t)} =\id$, where $x_t$ is the observed outcome, where the ``$|\mathbf{x}_{<t}$'' label indicates that the string of previous outcomes $\mathbf{x}_{<t}$ might be used as a setting to condition the choice of the POVM, and where $\mu_t$ is the classical locality label. This label determines a partition of $\{S_1,\ldots,S_n\}$ into $q \leq n$ subset of parties,
\begin{equation}
    \mathcal{P}(\mu_t)=\{B_1,\ldots,B_q\}
\end{equation}
such that, whenever the measurement is (block-)local with respect to this partition, the corresponding measurement operators factorise as
\begin{equation}
    N_{x_t|\mathbf{x}_{<t}}^{(t,\mu_t)} = \bigotimes_{\beta=1}^q{\!\vphantom{\bigotimes}}_K\: N_{x_t|\mathbf{x}_{<t}}^{(t,B_\beta)}.
\end{equation}
That way, product measurements, partially local measurements, and fully global measurements are all described within the same notation.

If $\rho^{(t-1)}_{|\mathbf{x}_{<t}}$ denotes the normalised state immediately before the application of $\mathcal{E}^{(t,\lambda_t)}_{|\mathbf{x}_{<t}}$, then the conditional probability distribution of the outcome $x_t$ obtained after the subsequent measurement $\{M_{x_t|\mathbf{x}_{<t}}^{(t,\mu_t)}\}$ is given by
\begin{equation}
    p(x_t|\mathbf{x}_{<t}) = \TrX{}{ N_{x_t|\mathbf{x}_{<t}}^{(t,\mu_t)}\:  \mathcal{E}^{(t,\lambda_t)}_{|\mathbf{x}_{<t}}\bigl(\rho^{(t-1)}_{|\mathbf{x}_{<t}}\bigl) \:\left( N_{x_t|\mathbf{x}_{<t}}^{(t,\mu_t)}\right)^\dag}
\end{equation}
and the corresponding post-measurement state is
\begin{equation}
    \rho^{(t)}_{|\mathbf{x}_{<(t+1)}} = \frac{N_{x_t|\mathbf{x}_{<t}}^{(t,\mu_t)}\:  \mathcal{E}^{(t,\lambda_t)}_{|\mathbf{x}_{<t}}\bigl(\rho^{(t-1)}_{|\mathbf{x}_{<t}}\bigl) \:\left( N_{x_t|\mathbf{x}_{<t}}^{(t,\mu_t)}\right)^\dag}{p(x_t|\mathbf{x}_{<t})} \:.
\end{equation}

Equivalently\footnote{We have chosen to represent measurement via POVMs to remain consistent with the rest of the text, but one could readily generalise the proof to general quantum instruments, i.e. to any collection of completely positive trace non-increasing maps that sum up to a quantum channel.}, each outcome $x_t$ defines a trace-non-increasing completely positive map,
\begin{equation}
    \mathcal M_{x_t|\mathbf{x}_{<t}}^{(t,\mu_t)}(\cdot) = N_{x_t,\mu_t}^{(t,\mathbf x_{<t})} (\cdot) \bigl(N_{x_t,\mu_t}^{(t,\mathbf x_{<t})}\bigr)^\dagger.
\end{equation} 
Hence, a single realised run specified by an outcome string $\mathbf x=(x_1,\ldots,x_T)$, is naturally described by the composition of the $2T$ outcome-conditioned subchannels,
\begin{equation}
    \mathcal T_{\mathbf x} := \mathcal M_{x_T|\mathbf{x}_{<T}}^{(T,\mu_T)} \circ \mathcal E_{x_T|\mathbf{x}_{<T}}^{(T,\lambda_T)} \circ \ldots \circ \mathcal M_{x_1}^{(1,\mu_1)} \circ \mathcal E^{(1,\lambda_1)} \:,
\end{equation}
such that the joint probability distribution of the outcome string is given by
\begin{equation}
    p(\mathbf x) = \TrX{}{\mathcal T_{\mathbf x}\bigl(\rho^{(0)}\bigr)}\:.
\end{equation}
Hence, in a completely realised experimental run represented by $\mathcal{T}_{\mathbf x}$, the full classical transcript consists of the sequence of triples
\[
(\lambda_1,\mu_1,x_1),\,
(\lambda_2,\mu_2,x_2),\,
\ldots,\,
(\lambda_T,\mu_T,x_T)\:,
\]
so that the alternating channel--measurement process may equivalently be viewed
as a sequence of quantum operations together with their associated classical locality labels\footnote{Remark that the locality labels may depend on previous outcomes, like e.g., $\lambda_2 = \lambda_2(x_1)$ and $\mu_T = \mu_T(\mathbf{x}_{<T}) = \mu_T(x_1,x_2,\dots, x_{T-1})$.} and measurement outcomes. 
In this sense, once one conditions on a particular sequence of observed outcomes, the alternating channel--measurement process may be viewed simply as a sequence of outcome-labelled quantum operations acting on the multipartite system.

\paragraph{Equivalent Real Quantum Theory description of the experimental setup.} 
We associate with each QT operation appearing in the above adaptive process a corresponding RQT operation. Concretely, for each channel
\begin{equation}
\mathcal E^{(t,\lambda_t)}_{|\mathbf{x}_{<t}}:\mathcal L(\mathcal H)\to \mathcal L(\mathcal K),
\end{equation}
we introduce a corresponding RQT channel
\begin{equation}
\mathsf R\!\mathcal E^{(t,\lambda_t)}_{|\mathbf{x}_{<t}}
:\mathcal L(\widetilde{\mathcal H})\to \mathcal L(\widetilde{\mathcal K}),
\end{equation}
where $\widetilde{\mathcal H}$ is a real Hilbert space, and for each measurement subchannel
\begin{equation}
    \mathcal M_{x_t|\mathbf{x}_{<t}}^{(t, \mu_t)}: \mathcal L(\mathcal H)\to \mathcal L(\mathcal H),
\end{equation}
we introduce a corresponding RQT subchannel
\begin{equation}
\mathsf R\!\mathcal M_{x_t|\mathbf{x}_{<t}}^{(t, \mu_t)}
:\mathcal L(\widetilde{\mathcal H})\to \mathcal L(\widetilde{\mathcal H}).
\end{equation}
(If needed, see Ref.~\cite{Chiribella_2023} for a review of RQT channels, i.e. completely positive and trace-preserving maps between operators acting on real Hilbert spaces.)  

Importantly, these RQT operations must have the same locality structure as their QT counterparts. That is, if the locality label $\lambda_t$ specifies the partition
	\begin{equation}
	\mathcal{P}(\lambda_t)=\{P_1,\ldots,P_p\}\:,
	\end{equation}
such that 
	\begin{equation}
	\mathcal E^{(t,\lambda_t)}_{|\mathbf{x}_{<t}} = \bigotimes_{\alpha=1}^p{\!\vphantom{\bigotimes}}_K\:\mathcal E^{(t,P_\alpha)}_{|\mathbf{x}_{<t}}\:,
	\end{equation}
then
	\begin{equation}
	\mathsf R\!\mathcal E^{(t,\lambda_t)}_{|\mathbf{x}_{<t}} =
	\bigotimes_{\alpha=1}^m{\!\vphantom{\bigotimes}}_K\:
	\mathsf R\!\mathcal E^{(t,P_\alpha)}_{|\mathbf{x}_{<t}} \:.
	\end{equation}
Likewise, if the locality label $\mu_t^{\mathbf{x}_{<t}}$ specifies the partition
	\begin{equation}
	\mathcal{P}(\mu_t)=\{B_1,\ldots,B_q\}\:,
	\end{equation}
and the measurement subchannel factorises accordingly, then
	\begin{equation}
	\mathsf R\!\mathcal M_{x_t|\mathbf{x}_{<t}}^{(t, \mu_t)} =\bigotimes_{\beta=1}^q{\!\vphantom{\bigotimes}}_K\: \mathsf R\!\mathcal M_{x_t|\mathbf{x}_{<t}}^{(t, B_\beta)}\:.
	\end{equation}
In particular, fully global, partially local, and fully product operations are all mapped to RQT operations with exactly the same block structure.

\paragraph{Theorem statement.} We can now state the implicit technical content of \cref{theo:RQTisQT} (here denoted~\Cref{thm2 in app}), which is our main structural result.

\begin{theorem}[Locality-preserving RQT reproduction of adaptive QT statistics]\label{thm2 in app}
	Consider an arbitrary $n$-partite QT adaptive process of the form
	described above, specified by an initial state $\rho^{(0)}$, together with a family of channels $\mathcal E^{(t,\lambda_t)}_{|\mathbf{x}_{<t}} $ and measurement subchannels $\mathcal M_{x_t|\mathbf{x}_{<t}}^{(t, \mu_t)}$, for all rounds $t$ and all prior outcome strings $\mathbf{x}_{<t}$.
	
	Then for every triple $\big(\rho^{(0)}, \big\{\mathcal E^{(t,\lambda_t)}_{|\mathbf{x}_{<t}}\big\}_t, \big\{\mathcal M_{x_t|\mathbf{x}_{<t}}^{(t, \mu_t)}\big\}_t\big)$ associated with the classical data $\big(\mathbf{x}_{\leq t}, \{\lambda_t\}_t, \{\mu_t\}_t\big)$, there exists a RQT counterpart $\big(\rho_{\mathbb R}^{(0)}, \big\{\mathsf R\!\mathcal E^{(t,\lambda_t)}_{|\mathbf{x}_{<t}}\big\}_t, \big\{\mathsf R\!\mathcal M_{x_t|\mathbf{x}_{<t}}^{(t,\mu_t)}\big\}_t\big)$ associated with the same classical data and such that:
	\begin{enumerate}
		\item
		the RQT operations have the same locality structure as their QT counterparts, in the sense encoded by the same locality labels
		$\lambda_t$ and $\mu_t$;
		
		\item
		the resulting RQT adaptive process reproduces exactly the same
		measurement statistics at every round. Equivalently, for every round $t$, every
		prior history $\mathbf{x}_{<t}$, and every outcome $x_t$,
		\begin{equation}
		p(x_t\mid \mathbf{x}_{<t})
		=
		p_{\mathbb R}(x_t\mid \mathbf{x}_{<t}) \:,
		\end{equation}
		and hence, for every finite outcome string
		$\mathbf{x}_{\le T}=(x_1,\ldots,x_T)$,
		\begin{equation}
		p(\mathbf{x}_{\le T})
		=
		p_{\mathbb R}(\mathbf{x}_{\le T}) \:;
		\end{equation}
		
		\item
		the correspondence is componentwise and context-independent: each
		RQT channel
		$\mathsf R\!\mathcal E^{(t,\lambda_t)}_{|\mathbf{x}_{<t}}$
		depends only on the corresponding QT channel
		$\mathcal E^{(t,\lambda_t)}_{|\mathbf{x}_{<t}}$, and not on the initial state nor
		on any other operations appearing elsewhere in the protocol; likewise, each
		RQT measurement subchannel
		$\mathsf R\!\mathcal M_{x_t|\mathbf{x}_{<t}}^{(t, \mu_t)}$
		depends only on the corresponding QT measurement subchannel
		$\mathcal M_{x_t|\mathbf{x}_{<t}}^{(t, \mu_t)}$; and the initial real-image
		state $\rho_{\mathbb R}^{(0)}$ depends only on the initial QT state $\rho^{(0)}$, and not on the subsequent choice of channels or
		measurements.
	\end{enumerate}
	
	Thus every finite adaptive multipartite QT test admits a
	RQT counterpart with identical operational statistics and identical
	locality pattern at each stage of the protocol.
\end{theorem}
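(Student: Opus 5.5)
The plan is to build the RQT counterpart component by component from the maps already used for \cref{prop:MulitilocalRQTisMultilocalQT}. The key tool is the Kraus representation. The argument then runs by induction on rounds, with the invariant that the real state at each stage is the image of the QT state under $\frac{1}{2}\overline{\Gamma}^{(n)}$. The $n$-fold map is applied with respect to the \emph{subsystem} factorisation $S_1,\dots,S_n$, not the party partition; the invariant is allowed to hold only up to local indistinguishability in the sense of \cref{prop:locallyIndistinguishable}.

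\textbf{Step 1: the maps.} Set $\rho^{(0)}_{\rr} := \frac{1}{2}\overline{\Gamma}^{(n)}\{\rho^{(0)}\}$. By \cref{prop:gamma_n_properties} this is a valid RQT state, and it depends only on $\rho^{(0)}$. For each block channel $\mathcal E^{(t,P_\alpha)}$, choose Kraus operators $\{K_k\}$. Define the block RQT channel by the Kraus operators $\tau^{(|P_\alpha|)}\circ\overline{\Gamma}^{(|P_\alpha|)}\{K_k\}$, padding dimensions if input and output differ. Do the same for each measurement operator $N^{(t,B_\beta)}_{x_t}$. Then:
\begin{itemize}
\item \cref{lemm:relocalisation} gives that $\tau\circ\overline{\Gamma}$ is a unital $*$-homomorphism, so trace preservation $\sum_k K_k^\dagger K_k=\id$ transfers to the real Kraus operators, and so does the POVM normalisation.
\item Forming Kronecker products of the block RQT maps reproduces exactly the locality labels $\lambda_t,\mu_t$.
\item Each real operation depends only on its QT counterpart. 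Independence from the Kraus choice is not needed, since any fixed choice works.
\end{itemize}
This settles items 1 and 3 of \cref{thm2 in app}.

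\textbf{Step 2: one round (the main obstacle).} Show that for a block-product real Kraus operator $\widetilde K = \bigotimes_{K,\alpha}\tau\circ\overline{\Gamma}\{K_\alpha\}$ and a state $\widetilde\rho = \frac{1}{2}\overline{\Gamma}^{(n)}\{\rho\}$, the update $\widetilde K\widetilde\rho\widetilde K^T$ is again $\frac{1}{2}\overline{\Gamma}^{(n)}\{K\rho K^\dagger\}$, at least inside every subsequent Born rule.

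First I would handle a single factor. The relocalised image $\tau\circ\overline{\Gamma}\{K_\alpha\}$ acts on the phase registers only through $J$ on one register. By \cref{lemm:I_J_global}, a local $J$ acting on an $\I{n}$-invariant matrix equals the global $\J{n}$. Hence
\[
\overline{\Gamma}^{(n)}\{\rho\}\,\bigl(\tau\circ\overline{\Gamma}\{K_\alpha\}\ktensor\id\bigr) = \overline{\Gamma}^{(n)}\{\rho\}\,\overline{\Gamma}^{(n)}\{K_\alpha\ktensor\id\}.
\]
This is the same ``contamination'' that drives \cref{lemm:Rprod_contaminates} and \cref{prop:localRQTislocalQT}, via \cref{lemm:special_sym_projector_n}.

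The difficulty is that a relocalised factor does not itself commute with $\I{n}$ when multiplied on both sides. I would therefore work with the sandwich $\I{n}_d(\cdot)\I{n}_d$. The identity $\I{n}\widetilde\rho=\widetilde\rho=\widetilde\rho\,\I{n}$ from \cref{lemm:special_sym_projector_n} should let me show that
\[
\I{n}_d\,\widetilde K\,\widetilde\rho\,\widetilde K^T\,\I{n}_d = \tfrac{1}{2}\overline{\Gamma}^{(n)}\{K\rho K^\dagger\}.
\]
I would then argue that the components outside $\I{n}_d$ contribute nothing to later traces against real operators built the same way. If this exact invariance fails, the fallback is the weaker invariant used in the network case: the real state is always of the form $\frac{1}{2}\overline{\Gamma}^{(n)}\{\cdot\}$ up to terms invisible to all relocalised effects. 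Such terms are orthogonal to every operator of the form $\tau\circ\overline{\Gamma}$, as in the antisymmetric-$J$ argument of \cref{prop:independent_sources}.

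\textbf{Step 3: statistics.} With the invariant in hand, compute the outcome probability as
\[
p_{\rr}(x_t|\mathbf x_{<t}) = \TrX{}{\widetilde N\,\widetilde{\mathcal E}(\widetilde\rho)\,\widetilde N^T}.
\]
The isometry of \cref{lemm:relocalisation} and \cref{prop:gamma_n_properties} gives $p_{\rr}(x_t|\mathbf x_{<t}) = p(x_t|\mathbf x_{<t})$. The normalised post-measurement states then correspond as well, so the induction on $t$ closes. The chain rule then yields $p(\mathbf x_{\le T}) = p_{\rr}(\mathbf x_{\le T})$, which is item 2.

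Two further cases need checking: discarding or re-preparing subsystems, where dimensions go $1\leftrightarrow d$, and partial traces. For these I would check that $\TrX{S_i}{\cdot}$ intertwines $\overline{\Gamma}^{(n)}$ with $\overline{\Gamma}^{(n-1)}$. This should follow from \cref{coro:I_J_groupping} together with $\TrX{}{J}=0$.
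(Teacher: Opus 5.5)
Your overall route is the paper's: you set $\rho^{(0)}_{\rr}=\frac12\overline{\Gamma}^{(n)}\{\rho^{(0)}\}$, push Kraus operators block by block through (relocalised) standard maps, carry the invariant through each round with the $\I{n}_d$-contamination argument, read off the probabilities by trace rescaling, and induct on $t$. That route works.

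Your one genuine deviation is an improvement. You relocalise the \emph{channel} Kraus operators too, whereas the paper relocalises only the measurement operators. The paper's channel Kraus operators satisfy $\sum_i \overline{\Gamma}^{(k)}\{E_i\}^T\overline{\Gamma}^{(k)}\{E_i\}=\overline{\Gamma}^{(k)}\{\id\}\cong\I{k}\ktensor\id_d$, which is a proper projector for $k\geq 2$. Likewise $\overline{\Gamma}^{(n)}\{S_\mu\}\cong\I{n}\ktensor S_\mu$ is not orthogonal. So the paper's maps are trace-preserving only on the $\I{n}_d$-sector, whereas yours are genuine RQT channels on every state.

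Step 2, however, rests on a false premise, and both fallbacks you offer would fail. A relocalised factor \emph{does} commute with $\I{n}_d$: by \cref{lemm:I_J_global} with $k=1$, together with the permutation invariance of $\I{n}$, a $J$ on any single phase register commutes with $\I{n}$. In fact you need neither commutation nor a sandwich. Since $\widetilde\rho=\I{n}_d\widetilde\rho\,\I{n}_d$ and $\I{n}_d$ is symmetric, your one-sided identity can be applied on each side separately:
$\widetilde K\widetilde\rho\widetilde K^T=(\widetilde K\I{n}_d)\,\widetilde\rho\,(\widetilde K\I{n}_d)^T=\overline{\Gamma}^{(n)}\{K\}\,\widetilde\rho\,\overline{\Gamma}^{(n)}\{K\}^T=\frac12\overline{\Gamma}^{(n)}\{K\rho K^\dagger\}$.
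So the invariant holds \emph{exactly}, as the paper claims.

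It must hold exactly, because neither fallback is sound:
\begin{itemize}
\item Components in the range of $\id-\I{n}_d$ are not invisible. Relocalised effects commute with $\I{n}_d$, and the identity effect alone already detects such components.
\item An invariant valid only up to local indistinguishability would be broken by any later global round, since the theorem allows global operations (cf.\ the POVM $E_\omega$ of \cref{sec:RQTisnotQT}).
\end{itemize}

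Finally, drop the partial-trace detour. The framework keeps $n$ fixed, so let a discarded subsystem keep its $\rr^2$ phase register ($d_i=1$). Then treat discarding and re-preparation as ordinary channels with rectangular Kraus operators. If you trace that register away, a later local re-preparation can only append a Kronecker-product real state. Such a state is not of the form $\frac12\overline{\Gamma}^{(n)}\{\cdot\}$ and can change later global statistics. For example, re-preparing $\ket{+i}$ next to $\ket{+i}$ and then measuring $Y\ktensor Y$ gives $1/2$ instead of $1$.
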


\subsection{Proof of the theorem}\label{sec:proof_proof}
The proof, like the one of \cref{prop:MulitilocalRQTisMultilocalQT}, consists in applying the complex-to-real map developed in detail in \cref{sec:RNQT}, as if we were trying to map QT to RNQT. However, to respect the local structure of the iterative protocol, the RQT composition rule --the Kronecker product $\ktensor$--, is used in lieu of the RNQT composition rule --the $\rtensordual$-product in this case. The mapping will work nevertheless because the initial real state will always be an $\rtensor$-product state (or a linear sum thereof), and such a state will `contaminate' all the other $\ktensor$-products into becoming $\rtensordual$-products. 

Specifically, consider the first step of the protocol, where the state is updated from $\rho^{(0)}$ to
\begin{equation}
    \rho^{(1)}_{x_1} = \bigl(\mathcal{M}^{(1,\mu_1)}_{x_1} \circ \mathcal{E}^{(1,\lambda_1)}\bigr)\bigl(\rho^{(0)}\bigr) \:,
\end{equation}
with a probability of $p(x_1) = \TrX{}{\bigl(\mathcal{M}^{(1,\mu_1)}_{x_1} \circ \mathcal{E}^{(1,\lambda_1)}\bigr)\bigl(\rho^{(0)}\bigr)}$. Define the trace-normalised version of this state as
\begin{equation}
    \rho^{(1)}_{|x_1} = \frac{1}{p(x_1)} \rho^{(1)}_{x_1} \:.
\end{equation}
This is the resulting state of the system after the outcome $x_1$ was observed during round $t=1$, and this is the state on which the operations of step $t=2$ will be applied.

Suppose, for simplicity, that $\lambda_1$ defines a bipartition and so does $\mu_1$ (the proof generalises to any partition mutatis mutandis). Assume that $\lambda_1$ cuts the set of parties into $P_1 = \{1,2,3,\dots, k\}$ and $P_2 = \{k+1, k+2, \dots, n\}$, this means that there exists a pair of channels $\mathcal{E}^{(1,P_1)}$ and $\mathcal{E}^{(1,P_2)}$ such that $\mathcal{E}^{(1,\lambda_1)} = \mathcal{E}^{(1,P_1)} \ktensor \mathcal{E}^{(1,P_2)}$. 
Similarly, assume that $\mu_1$ cuts the set of parties into two other, arbitrary partitions $B_1$ and $B_2$ such that $\abs{B_1} = j$ and $\abs{B_2} = n-j$. This means that there exists a pair of instruments $\{\mathcal{M}_{x_{1,1}}^{(1,B_1)}\}_{x_{1,1}}$ and $\{\mathcal{M}_{x_{1,2}}^{(1,B_2)}\}_{x_{1,2}}$ and a permutation of the Hilbert space factorisation $\Pi_{\mu_1}$ such that the instrument factorises as 
\begin{equation}
    \left\{\mathcal{M}_{x_1}^{(1,\mu_1)}\right\}_{x_1} = \left\{\Pi_{\mu_1}^\dag \circ \big(\mathcal{M}_{x_{1,1}}^{(1,B_1)} \ktensor \mathcal{M}_{x_{1,2}}^{(1,B_2)}\big) \circ \Pi_{\mu_1}  \right\}_{(x_{1,1},x_{1,2})} \:,
\end{equation}
and such that the outcome $x_1$ accordingly factorises into a pair of random variables $x_1 = (x_{1,1},x_{1,2})$.
Here, the permutation amounts $\Pi_{\mu_1}$ to a redistribution of the subsystems between the first operation and the first measurement; with respect to \cref{fig:alternating-channel-measurement}, picture it as a reordering of the wires before and after the measurement such that the pair of quantum instruments (the blue boxes) are represented side by side. 

(Remark in passing that a single round can be interpreted as a network scenario, with the operation representing the preparation of the sources, and the instrument their measurement; the parameter $\lambda_1$ then tracks how the sources are distributed, and $\mu_1$ how the measurements are. The map $\Pi_{\mu_1}$ then encodes the topology of the network, i.e., how the subsystems are distributed from the sources to the parties.)

When the locality constraints have been applied the first step takes the form
\begin{equation}
    \rho^{(1)}_{x_1} = \left[\Pi_{\mu_1}^\dag \circ \big(\mathcal{M}_{x_{1,1}}^{(1,B_1)} \ktensor \mathcal{M}_{x_{1,2}}^{(1,B_2)}\big) \circ \Pi_{\mu_1}  \circ \big(\mathcal{E}^{(1,P_1)} \ktensor \mathcal{E}^{(1,P_2)}\big) \right](\rho^{(0)}) \:.
\end{equation}
Recall that a map that applies a permutation $\sigma$ of the Hilbert space factors acts on operators as the adjoint action of a swap gate, i.e.,
\begin{equation}
    \begin{gathered}
        \Pi_{\sigma} : \mathcal{L}(\mathcal{H}_1 \otimes \mathcal{H}_2 \otimes \dots \otimes \mathcal{H}_n) \rightarrow \mathcal{L}(\mathcal{H}_{\sigma(1)} \otimes \mathcal{H}_{\sigma(2)} \otimes \dots \otimes \mathcal{H}_{\sigma(n)}) ::\\
        \Pi_{\sigma}(\rho) = S_\sigma \rho S_\sigma^\dag \qquad \text{where}\\
        S_\sigma: \mathcal{H}_1 \otimes \mathcal{H}_2 \otimes \dots \otimes \mathcal{H}_n \rightarrow\mathcal{H}_{\sigma(1)} \otimes \mathcal{H}_{\sigma(2)} \otimes \dots \otimes \mathcal{H}_{\sigma(n)} ::\\
        S_\sigma(\ket{\psi_1} \ktensor \ket{\psi_2}\ktensor \dots \ket{\psi_n}) = \ket{\psi_{\sigma(1)}} \ktensor \ket{\psi_{\sigma(2)}}\ktensor \dots \ket{\psi_{\sigma(n)}} \:.
    \end{gathered}
\end{equation}
Hence, using the operator-sum representation of the quantum channels, $\mathcal{E}^{(1,P_1)} = \sum_i E_i^{(1,P_1)} \cdot (E_i^{(1,P_1})^\dag$ and $\mathcal{E}^{(1,P_2)} = \sum_l E_l^{(1,P_2)} \cdot (E_l^{(1,P_2)})^\dag$, this expression for $\rho^{(1)}$ is equivalent to
\begin{multline}
    \rho^{(1)}_{x_1} = \sum_{i,l}
    \left[S_{\mu_1}^\dag \big(N_{x_{1,1}}^{(1,B_1)} \ktensor N_{x_{1,2}}^{(1,B_2)}\big) S_{\mu_1}  \big({E}^{(1,P_1)}_{i} \ktensor {E}^{(1,P_2)}_{l}\big) \right]\rho^{(0)} \\
    \left[S_{\mu_1}^\dag \big(N_{x_{1,1}}^{(1,B_1)} \ktensor N_{x_{1,2}}^{(1,B_2)}\big) S_{\mu_1}  \big({E}^{(1,P_1)}_{i} \ktensor {E}^{(1,P_2)}_{l}\big) \right]^\dag \:.
\end{multline}
(For conciseness, the repeated term on the right of $\rho^{(0)}$ will be abbreviated into $[...]$ in the following.)

We now construct the corresponding real operations. First, we apply the $n$-fold standard mapping to $\rho^{(1)}_{x_1}$ as in \cref{def:std_n_mapping} with a factor of $1/2$ for trace-normalisation purposes. Using the fact that the $n$-fold mapping is a $*$-algebra homomorphism (\cref{prop:gamma_n_properties}), it gives
\begin{multline}
     \frac{1}{2}\overline{\Gamma}^{(n)}\{\rho^{(1)}_{x_1}\} = \\
     \sum_{i,l} \left[\overline{\Gamma}^{(n)}\{S_{\mu_1}\}^T \overline{\Gamma}^{(n)}\{N_{x_{1,1}}^{(1,B_1)} \ktensor N_{x_{1,2}}^{(1,B_2)}\} \overline{\Gamma}^{(n)}\{S_{\mu_1}\}  \overline{\Gamma}^{(n)}\{{E}^{(1,P_1)}_{i} \ktensor {E}^{(1,P_2)}_{l}\} \right]\\
    \frac{1}{2}\overline{\Gamma}^{(n)}\{\rho^{(0)}\} \Big[...\Big]^T \:.
\end{multline}
Again by \cref{prop:gamma_n_properties}, states of the form $\frac{1}{2}\overline{\Gamma}^{(n)}\{\rho^{(t)}_{\mathbf{x}_{\leq t}}\}$ must be valid RQT state, whereas $\overline{\Gamma}^{(n)}\{S_{\mu_1}\}$ is an orthogonal matrix since $S_{\mu_1}$ is a unitary matrix (see Ref.~\cite{RNQT} for a proof), hence $\overline{\Gamma}^{(n)}\{S_{\mu_1}\} \:\cdot\: \overline{\Gamma}^{(n)}\{S_{\mu_1}\}^T$ is a valid RQT operation. 

Next, we focus on the two maps that are in Kronecker product form. In accordance with \cref{prop:RprodisKprod}, we can decompose the $n$-fold standard mappings that are being applied into a pair of smaller-fold-mappings using the R-product:
\begin{multline}
     \frac{1}{2}\overline{\Gamma}^{(n)}\{\rho^{(1)}_{x_1}\} =\\
     \sum_{i,l} \Big[\overline{\Gamma}^{(n)}\{S_{\mu_1}\}^T \big(\overline{\Gamma}^{(j)}\{N_{x_{1,1}}^{(1,B_1)}\} \Rprod \overline{\Gamma}^{(n-j)}\{N_{x_{1,2}}^{(1,B_2)}\}\big) \overline{\Gamma}^{(n)}\{S_{\mu_1}\} \\
     \big(\overline{\Gamma}^{(k)}\{{E}^{(1,P_1)}_{i}\} \Rprod \overline{\Gamma}^{(n-k)}\{{E}^{(1,P_2)}_{l}\}\big) \Big]
    \frac{1}{2}\overline{\Gamma}^{(n)}\{\rho^{(0)}\} \Big[ \dots \Big]^T \:.
\end{multline}
Then, we can use the fact that this expression is equivalent to the one where the $\Rprod$-products have been replaced by $\ktensor$-products,
\begin{multline}
     \frac{1}{2}\overline{\Gamma}^{(n)}\{\rho^{(1)}_{x_1}\} =\\
     \sum_{i,l} \Big[\overline{\Gamma}^{(n)}\{S_{\mu_1}\}^T \big(\overline{\Gamma}^{(j)}\{N_{x_{1,1}}^{(1,B_1)}\} \ktensor \overline{\Gamma}^{(n-j)}\{N_{x_{1,2}}^{(1,B_2)}\}\big) \overline{\Gamma}^{(n)}\{S_{\mu_1}\} \\
     \big(\overline{\Gamma}^{(k)}\{{E}^{(1,P_1)}_{i}\} \ktensor \overline{\Gamma}^{(n-k)}\{{E}^{(1,P_2)}_{l}\}\big) \Big]
    \frac{1}{2}\overline{\Gamma}^{(n)}\{\rho^{(0)}\} \Big[ \dots \Big]^T \:.
\end{multline}
This follows from the contamination property discussed in \cref{sec:RNQT_contamination}: one can make the representation of the complex matrix unit appear on the left of the initial state $\overline{\Gamma}^{(n)}\{\rho^{(0)}\} = \overline{\Gamma}^{(n)}\{\id \rho^{(0)}\} = \overline{\Gamma}^{(n)}\{\id \} \overline{\Gamma}^{(n)}\{\rho^{(0)}\} = \I{n}_d \overline{\Gamma}^{(n)}\{\rho^{(0)}\}$ (where $d$ is the total dimension of the Hilbert space and the notation $\I{n}_d $ is, up to permutations of tensor factors, equivalent to the matrix $\I{n}\ktensor \id_d$ with $\I{n}$ defined by \cref{def:I_J_n}). This matrix $\I{n}_d$ is idempotent (see \cref{sec:real_units}; it is actually a projector that defines a subspace of the real Hilbert space in which the image of QT lives), commutes with Kronecker products of the standard mapping (\cref{eq:Rprod_Special_Sym}), and turns them into $\Rprod$-product (\cref{eq:Rprod_Special_Sym_bis}). One can therefore make this matrix move to the left in order to `contaminate' every Kronecker product into $\Rprod$-products it encounters along the way, and once it has reached the end of the expression, it can be reintegrated into any other matrix, like, e.g., via, $\I{n}_d\overline{\Gamma}^{(n)}\{S_{\mu_1}\}^T = \overline{\Gamma}^{(n)}\{\id\}\overline{\Gamma}^{(n)}\{S_{\mu_1}\}^T= (\overline{\Gamma}^{(n)}\{S_{\mu_1}\}\overline{\Gamma}^{(n)}\{\id^\dag\})^T = \overline{\Gamma}^{(n)}\{S_{\mu_1}\id\} = \overline{\Gamma}^{(n)}\{S_{\mu_1}\}$.

With this done, we can remark that the matrices $\overline{\Gamma}^{(k)}\{{E}^{(1,P_1)}_{i}\}$ and $ \overline{\Gamma}^{(n-k)}\{{E}^{(1,P_2)}_{l}\}$ are real matrices which have exactly the sought structural form for the real operation $\mathsf R\!\mathcal E^{(1,\lambda_1)}$. Indeed, define
\begin{subequations}
    \begin{align}
        & \mathsf R\!\mathcal E^{(1,\lambda_1)} := \mathsf R\!\mathcal E^{(1,P_1)} \ktensor \mathsf R\!\mathcal E^{(1,P_2)} \:;\\
        &\mathsf R\!\mathcal E^{(1,P_1)}(\cdot) := \sum_i \overline{\Gamma}^{(k)}\{{E}^{(1,P_1)}_{i}\} \:\cdot\: (\overline{\Gamma}^{(k)}\{{E}^{(1,P_1)}_{i}\})^T \:;\\
        &\mathsf R\!\mathcal E^{(1,P_2)}(\cdot) := \sum_l  \overline{\Gamma}^{(n-k)}\{{E}^{(1,P_2)}_{l}\} \:\cdot\: (\overline{\Gamma}^{(n-k)}\{{E}^{(1,P_2)}_{l}\})^T\:.
    \end{align}
\end{subequations}
By construction, the two smaller maps are in operator-sum forms that only involve real matrices, hence the total $\mathsf R\!\mathcal E^{(1,\lambda_1)}$ is indeed a CP map split according to $\lambda_1$ into the Kronecker product of two CP maps. One can then use \cref{prop:gamma_n_properties} to show that these three maps are trace-preserving on QT-image RQT states, and finally, one can combine this proposition with \cref{lemm:Rprod_contaminates} to extend the proof to every RQT state.

The same thing cannot be directly done with the matrices $\overline{\Gamma}^{(j)}\{N_{x_{1,1}}^{(1,B_1)}\}$ and $ \overline{\Gamma}^{(n-j)}\{N_{x_{1,2}}^{(1,B_2)}\}$ because these do no longer define valid POVMs as the multiple fold standard mapping is not unital. Fortunately, this problem has already been addressed for the proof of \cref{prop:MulitilocalRQTisMultilocalQT}. The standard mapping $\overline{\Gamma}^{n}$ can be composed with the relocalisation map $\tau^{(n)}$ (see \cref{def:relocalisation_map}) to make it unital. Since it is direct to check that $(\tau^{(n)} \circ \overline{\Gamma}^{(n)})\{\rho\} \I{n}_d = \overline{\Gamma}^{(n)}\{\rho\}$ from the definitions, the same contamination argument as above can be used to argue that adding this map to the expression will not change its outcome: as 
\begin{multline}
    \Big(\big(\tau^{j}\circ \overline{\Gamma}^{(j)}\big)\{N_{x_{1,1}}^{(1,B_1)}\} \ktensor \big( \tau^{(n-j)} \circ\overline{\Gamma}^{(n-j)}\big)\{N_{x_{1,2}}^{(1,B_2)}\}\Big) (\I{n}_d)^2 = \\
    \Big(\big(\tau^{j}\circ \overline{\Gamma}^{(j)}\big)\{N_{x_{1,1}}^{(1,B_1)}\} \rtensor \big( \tau^{(n-j)} \circ\overline{\Gamma}^{(n-j)}\big)\{N_{x_{1,2}}^{(1,B_2)}\}\Big) \I{n}_d =\\
    \Big(\big(\tau^{j}\circ \overline{\Gamma}^{(j)}\big)\{N_{x_{1,1}}^{(1,B_1)}\} \Rprod \big( \tau^{(n-j)} \circ\overline{\Gamma}^{(n-j)}\big)\{N_{x_{1,2}}^{(1,B_2)}\}\Big) \big(\I{j}_{d_{B_1}} \Rprod\I{n-j}_{d_{B_2}}\big) = \\
    \Big(\big(\tau^{j}\circ \overline{\Gamma}^{(j)}\big)\{N_{x_{1,1}}^{(1,B_1)}\}\I{j}_{d_{B_1}} \Big) \Rprod \Big( \big( \tau^{(n-j)} \circ\overline{\Gamma}^{(n-j)}\big)\{N_{x_{1,2}}^{(1,P_2)}\}\I{n-j}_{d_{B_2}}\Big) = \\
    \overline{\Gamma}^{(j)}\{N_{x_{1,1}}^{(1,B_1)}\} \Rprod \overline{\Gamma}^{(n-j)}\{N_{x_{1,2}}^{(1,B_2)}\} \:,
\end{multline}
it follows that
\begin{multline}
     \frac{1}{2}\overline{\Gamma}^{(n)}\{\rho^{(1)}_{x_1}\} =\\
     \sum_{i,l} \Big[\overline{\Gamma}^{(n)}\{S_{\mu_1}\}^T \Big( \big(\tau^{j}\circ \overline{\Gamma}^{(j)}\big)\{N_{x_{1,1}}^{(1,B_1)}\} \ktensor \big( \tau^{(n-j)} \circ\overline{\Gamma}^{(n-j)}\big)\{N_{x_{1,2}}^{(1,B_2)}\}\Big) \overline{\Gamma}^{(n)}\{S_{\mu_1}\} \\
     \big(\overline{\Gamma}^{(k)}\{{E}^{(1,P_1)}_{i}\} \ktensor \overline{\Gamma}^{(n-k)}\{{E}^{(1,P_2)}_{l}\}\big) \Big]
    \frac{1}{2}\overline{\Gamma}^{(n)}\{\rho^{(0)}\} \Big[ \dots \Big]^T \:.
\end{multline}
We are now ready to define the real counterpart of the measurement subchannel. Let  
\begin{subequations}
    \begin{align}
        &\mathsf R\!\mathcal M_{x_1}^{(1, \mu_1)} := \mathsf R\! \Pi_{\mu_1}^T \circ \big(\mathsf R\!\mathcal M_{x_{1,1}}^{(1, B_1)} \ktensor  R\!\mathcal M_{x_{1,2}}^{(1, B_2)}\big) \circ \mathsf R\! \Pi_{\mu_1} \:;\\
        & \mathsf R\!\mathcal M_{x_{1,1}}^{(1, B_1)}(\cdot) :=  \big(\tau^{j}\circ \overline{\Gamma}^{(j)}\big)\{N_{x_{1,1}}^{(1,B_1)}\}  \:\cdot\: \big(\big(\tau^{j}\circ \overline{\Gamma}^{(j)}\big)\{N_{x_{1,1}}^{(1,B_1)}\} \big)^T\:;\\
        & \mathsf R\!\mathcal M_{x_{1,2}}^{(1, B_2)}(\cdot) := \big( \tau^{(n-j)} \circ\overline{\Gamma}^{(n-j)}\big)\{N_{x_{1,2}}^{(1,B_2)}\} \:\cdot \: \big( \big( \tau^{(n-j)} \circ\overline{\Gamma}^{(n-j)}\big)\{N_{x_{1,2}}^{(1,B_2)}\}\Big) \big)^T\:;\\
        & \mathsf R\! \Pi_{\mu_1}(\cdot) := \overline{\Gamma}^{(n)}\{S_{\mu_1}\} \:\cdot \: \big(\overline{\Gamma}^{(n)}\{S_{\mu_1}\}\big)^T \:; \\
        & \mathsf R\! \Pi_{\mu_1}^T(\cdot) := \big(\overline{\Gamma}^{(n)}\{S_{\mu_1}\}\big)^T \:\cdot \: \overline{\Gamma}^{(n)}\{S_{\mu_1}\} \:.
    \end{align}
\end{subequations}
Despite the convoluted form of the mappings, it is quite direct to check that each side of the Kronecker product defines a valid RQT POVM since the standard mapping is a *-algebra homomorphism; for example, $\left(\big(\tau^{j}\circ \overline{\Gamma}^{(j)}\big)\{N_{x_{1,1}}^{(1,B_1)}\}\right)^T \big(\tau^{j}\circ \overline{\Gamma}^{(j)}\big)\{N_{x_{1,1}}^{(1,B_1)}\}$ is a positive matrix because it is of the form $M^TM$ with $M$ real, and when summed over $x_{1,1}$ it gives the identity because:
\begin{equation}
    \begin{aligned}
        \sum_{x_{1,1}} \left(\big(\tau^{j}\circ \overline{\Gamma}^{(j)}\big)\{N_{x_{1,1}}^{(1,B_1)}\}\right)^T &\big(\tau^{j}\circ \overline{\Gamma}^{(j)}\big)\{N_{x_{1,1}}^{(1,B_1)}\}\\ 
        &= \sum_{x_{1,1}} \big(\tau^{j}\circ \overline{\Gamma}^{(j)}\big)\{\big(N_{x_{1,1}}^{(1,B_1)}\big)^\dag\} \big(\tau^{j}\circ \overline{\Gamma}^{(j)}\big)\{N_{x_{1,1}}^{(1,B_1)}\}\\
        &= \sum_{x_{1,1}} \big(\tau^{j}\circ \overline{\Gamma}^{(j)}\big)\{\big(N_{x_{1,1}}^{(1,B_1)}\big)^\dag N_{x_{1,1}}^{(1,B_1)}\}\\
        &= \big(\tau^{j}\circ \overline{\Gamma}^{(j)}\big)\{\sum_{x_{1,1}} \big(N_{x_{1,1}}^{(1,B_1)}\big)^\dag N_{x_{1,1}}^{(1,B_1)}\}\\
         &= \big(\tau^{j}\circ \overline{\Gamma}^{(j)}\big)\{\id_d\} = \id_{2^n d} \:,\\
    \end{aligned}
\end{equation}
where the only not-yet-used property appeared after the second equality, namely that $\tau^{j}\circ \overline{\Gamma}^{(j)}$ is also an algebra homomorphism. Since the map $\tau^{j}\circ \overline{\Gamma}^{(j)}$ is isomorphic to the single-fold standard mapping, this property follows from \cref{prop:gamma_properties}. 
Consequently, both maps define proper RQT instruments; it should then be no surprise that $\mathsf R\!\mathcal M_{x_1}^{(1, \mu_1)}$, which is their composition using the Kronecker product and their composition with the orthogonal map $\mathsf R\! \Pi_{\mu_1}(\cdot)$ is also a valid RQT instrument, since the Kronecker product is the composition rule of RQT and since the orthogonal maps are to RQT what unitary maps are to QT: noiseless evolution.

It remains to prove that the statistics of the outcome $x_1$ are preserved, i.e. that 
\begin{equation}
    p(x_1) = p_\rr(x_1) \:,
\end{equation}
where
\begin{subequations}
    \begin{align}
        & p(x_1) = \TrX{}{\rho^{(1)}_{x_1}} = \TrX{}{\bigl(\mathcal{M}^{(1,\mu_1)}_{x_1} \circ \mathcal{E}^{(1,\lambda_1)}\bigr)\bigl(\rho^{(0)}\bigr) }\:;\\
        & p_{\rr}(x_1) = \TrX{}{\rho^{(1)}_{\rr , x_1}} = \TrX{}{\bigl(\mathsf R\!\mathcal{M}^{(1,\mu_1)}_{x_1} \circ \mathsf R\!\mathcal{E}^{(1,\lambda_1)}\bigr)\bigl(\rho^{(0)}_\rr\bigr) }\:.
    \end{align}
\end{subequations}
Once $\rho^{(0)}_\rr$ has been set to $\frac{1}{2}\overline{\Gamma}^{(n)}\{\rho^{(0)}\}$, this equality directly follows from the trace preservation of the standard mapping (up to a $\frac{1}{2}$ factor; see \cref{prop:gamma_n_properties}). Indeed, our construction of the real mappings relied on showing the equality $\bigl(\mathsf R\!\mathcal{M}^{(1,\mu_1)}_{x_1} \circ \mathsf R\!\mathcal{E}^{(1,\lambda_1)}\bigr)\bigl(\rho^{(0)}_\rr\bigr) = \frac{1}{2} \overline{\Gamma}^{(n)}\{\rho^{(1)}_{x_1}\}$. Taking the trace of each side, we get
\begin{equation}
    \TrX{}{\bigl(\mathsf R\!\mathcal{M}^{(1,\mu_1)}_{x_1} \circ \mathsf R\!\mathcal{E}^{(1,\lambda_1)}\bigr)\bigl(\rho^{(0)}_\rr\bigr)} =\TrX{}{\frac{1}{2} \overline{\Gamma}^{(n)}\{\rho^{(1)}_{x_1}\}} \:.
\end{equation}
The left-hand side is $p_\rr(x_1)$ by definition, and the aforementioned trace-preservation is used on the right-hand side as $\TrX{}{\frac{1}{2} \overline{\Gamma}^{(n)}\{\rho^{(1)}_{x_1}\}} = \TrX{}{\rho^{(1)}_{x_1}}$, leading to the sought
\begin{equation}
    p_\rr(x_1) = p(x_1) \:.
\end{equation}
The probabilities are indeed preserved. 

This proves the theorem for the first time step. The proof is then directly generalised by remarking that the construction at step $T$ is not dependent on the outcome string $\mathbf{x_{<T}}$, one can then see the step $T-1$ as the zeroth step and run the same construction as for the first step to obtain an RQT explanation for the step $T$. This concludes the proof. 


\section{RQT description with equivalent QT description}\label{sec:RQTisnotQT}
While the methods of this article have been concerned with showing that the QT models can always be explained by RQT models, little has been said about the reverse inclusion. 
At first glance, it might appear trivial: any RQT model is a QT model when `one allows complex numbers back in the maths'; i.e., by using the identity embedding $\iota:\rr^{n\times n} \rightarrow \cc^{n \times n}: \rho \mapsto \iota(\rho) = \rho$. However, this reasoning is once again only true under the assumption that independent sources only produce systems that are product-state independent (\cref{def:independent_sources}). 

If, however, the operational meaning of independence is enforced, one is no longer free to map RQT to QT simply by using the identity embedding from the reals to the complexes. This is because some operationally independent states lose this property under this embedding, as implied by \cref{prop:independent_sources}. 

Take for example the matrix $\rho \in \rr^{2 \times 2}$ defined as
\begin{equation}
    \rho := \frac{I}{2} \ktensor \dyad{0} = \frac{1}{4} (I \ktensor I + I \ktensor Z) \:,
\end{equation} 
where we use the notation $\dyad{0} = \frac{I+Z}{2} = \left(\begin{smallmatrix}
    1 & 0 \\ 0 & 0
\end{smallmatrix}\right)$. This matrix is a valid RQT state in its own right, but it is also the image through the map $\frac{1}{2}\Gs$ of the QT state $\dyad{0} \in \cc^{2 \times 2}$. 
Now, considering two such states, one can define a bipartite state as their Kronecker product,
\begin{equation}
    \Psi_{\Xx\Yy}^{K} := \rho_\Xx \ktensor \rho_\Yy \:,
\end{equation}
or as their $R$-product:
\begin{equation}
    \Psi_{\Xx\Yy}^{\textup{R}} := \rho_\Xx \rtensor \rho_\Yy \:.
\end{equation}
As we mentioned in the main text (and proved at \cref{prop:locallyIndistinguishable,lemm:Rprod_contaminates}), these two states are locally indistinguishable, and both are operationally independent. It means that for every pair of RQT POVMs $\{X_\alpha\}_\alpha,\{Y_{\beta}\}_\beta \subset \rr^{4 \times 4}$, they verify
\begin{equation}
    \TrX{}{\Psi_{\Xx\Yy}^K(X_{\alpha} \ktensor Y_{\beta})} = \TrX{}{\Psi_{\Xx\Yy}^{\textup{R}}(X_{\alpha} \ktensor Y_{\beta})} = \TrX{}{\rho_\Xx \: X_\alpha} \TrX{}{\rho_\Yy \: Y_\beta} \:. 
\end{equation}

This plays a role when trying to embed these states back in QT: $\iota$ maps only product states to product states, but only the products states are operationally independent in QT. Hence, while $\iota(\Psi_{\Xx\Yy}^K)$ maps an operationally independent state of RQT to one of QT, it will not be able to map $\Psi_{\Xx\Yy}^{\textup{R}}$ to one.

To see it concretely, we can turn the $\rtensor$-product into an expression involving $\ktensor$-products:
\begin{equation}
    \begin{aligned}
        \Psi_{\Xx\Yy}^{\textup{R}} = &\Big[\left( \frac{I}{2} \ktensor \frac{I + Z}{2} \right)_\Xx \ktensor \left( \frac{I}{2} \ktensor \frac{I + Z}{2} \right)_\Yy \Big] \\
        &-  \Big[\left( \frac{J}{2} \ktensor \frac{I + Z}{2} \right)_\Xx \ktensor \left( \frac{J}{2} \ktensor \frac{I + Z}{2} \right)_\Yy \Big]\:.
    \end{aligned}
\end{equation}
If we now allow the presence of complex numbers, i.e. if we see $\Psi_{\Xx\Yy}^{\textup{R}}$ as a QT state in $\cc^{16 \times 16}$, it suddenly becomes separable\footnote{It is worth remarking that while $\rtensor$-product states are usually entangled, any pair of states which have a QT-image as a pair of symmetric states will be separable under a similar decomposition. That is, $\Gamma \Rprod \Gamma$ maps (R)QT product states in $\rr^{2 \times 2} \otimes \rr^{2\times 2}$ to separable QT states in $\cc^{4\times 4} \otimes \cc^{4\times 4}$.} over the partition $(\Xx)(\Yy)$:
\begin{equation}
    \begin{aligned}
        \Psi_{\Xx\Yy}^{\textup{R}} = & \frac{1}{2} \Big[ \left(\frac{I + i J}{2} \ktensor \frac{I + Z}{2} \right)_\Xx \ktensor  \left(\frac{I + i J}{2} \ktensor \frac{I + Z}{2} \right)_\Yy \Big] \\
        &+\frac{1}{2}  \Big[ \left(\frac{I - i J}{2} \ktensor \frac{I + Z}{2} \right)_\Xx \ktensor  \left(\frac{I - i J}{2} \ktensor \frac{I + Z}{2} \right)_\Yy\Big] \:.
    \end{aligned}
\end{equation}
This is equivalent to:
\begin{equation}
    \Psi_{\Xx\Yy}^{\textup{R}} = \frac{1}{2} \dyad{+ i,0}_\Xx \ktensor \dyad{+i,0}_\Yy + \frac{1}{2} \dyad{-i,0}_\Xx \ktensor \dyad{-i,0}_\Yy \:,
\end{equation}
where $\ket{\pm i} = \frac{1}{\sqrt{2}}(\ket{0} \pm \ket{1})$. This sudden separability means that $\iota(\Psi_{\Xx\Yy}^{\textup{R}})$ is no longer operationally independent, so the operational meaning of this state has been changed when passing from an RQT model to a QT model. The two models are thus describing different situations, and therefore $\iota$ cannot be used to obtain a QT description of RQT.

In addition to providing a simple argument why the RQT description cannot be directly lifted into a QT description, this example is consequently one of the simplest situations in which the correlations observables in the RQT model cannot be explained by a QT model. Indeed, because $\Psi_{\Xx\Yy}^{\textup{R}}$ has a QT description (it is in the QT-image of the RQT model), $\Psi_{\Xx\Yy}^{K}$ cannot have one. Since these two states cannot have a QT explanation at the same time, some protocol involving them must yield at least one outcome distribution that cannot be explained by QT. 

Here is an example of such a distribution: suppose you are trying to characterise a pair of sources using the statistics of measurement outcomes. Within a QT model of the sources, once every possible local measurement have been performed for a sufficiently large amount of time, you are certain that you have the maximal knowledge of the state; no new information can be inferred about the joint state of the systems outputted by the sources, even if you now consider global measurements. In the corresponding RQT model, however, you can still learn new information by performing a global measurement. 

Take the pair of RQT states $\Psi_{\Xx\Yy}^{\textup{R}}$ and $\Psi_{\Xx\Yy}^{K}$ considered in the example. As mentioned above, these are locally indistinguishable, meaning that all the measurement statistics of local measurements will not be sufficient to distinguish them. If you were to give a QT description of these two RQT states, you would be forced to conclude that they are described by the same QT state. Yet, a two-element RQT POVM such as $\{E_\omega\}_{\omega}$ with $\omega \in \{0,1\}$ and
\begin{equation}
    E_{\omega} = \frac{1}{2} (I\ktensor I)_\Xx \ktensor (I \ktensor I)_\Yy + \frac{(-1)^\omega}{2} (J\ktensor I) _\Xx \ktensor (J \ktensor I)_\Yy\:,
\end{equation}
will yield an outcome distribution able to differentiate the two states because 
\begin{equation}
    \begin{gathered}
        \TrX{}{ \Psi_{\Xx\Yy}^{K} E_0} = \frac{1}{2}\:; \qquad \TrX{}{ \Psi_{\Xx\Yy}^{K} E_1} = \frac{1}{2}\:;\\
        \TrX{}{ \Psi_{\Xx\Yy}^{\textup{R}} E_0} = 0\:; \qquad  \TrX{}{ \Psi_{\Xx\Yy}^{\textup{R}} E_1} = 1 \:.
    \end{gathered}
\end{equation}
Depending on whether the outcome distribution is balanced or peaked, one can know if the state was of the Kronecker or $R$-product form. The sole existence of outcome statistics of bipartite measurements that cannot be explained with local measurements is a specificity in the predictions of the RQT models that QT models will never be able to describe. 

Now the example considered in this appendix is a very simple way of playing with the non-tomographic-locality of RQT. This will most likely not be possible to infer a Bell-like inequality out of it or any generalisation, as it would provide a theory-independent way to test locally tomographic theories against non-locally tomographic ones. Yet, it prompts many questions paving the road for future investigations: can it be shown that there are no theory-independent ways of witnessing these locally-similar-yet-globally-different states? Can the notion of operational independence be strengthened to avoid this issue somehow? Is imposing product-state independence on RQT models the only way to be able to consistently embed them in QT models? 



\end{document}